\newcommand{\bm}[1]{\mbox{\boldmath$#1$}}
\newcommand{\be}{\begin{equation}}
\newcommand{\ee}{\end{equation}}
\newcommand{\bea}{\begin{eqnarray}}
\newcommand{\eea}{\end{eqnarray}}
\newcommand{\non}{\nonumber}
\newtheorem{df}{Definition}
\newtheorem{lem}[df]{Lemma}
\newtheorem{prop}[df]{Proposition}
\newtheorem{lem2}{Lemma}[section]
\newtheorem{prop2}{Proposition}[section]
\newtheorem{cor2}{Corollary}[section]
\begin{document}

\title{Form factors of integrable higher-spin XXZ chains 
and the affine quantum-group symmetry 
}

\author{Tetsuo Deguchi$^{1}$\footnote{e-mail deguchi@phys.ocha.ac.jp} 
and Chihiro Matsui$^{2, 3}$
 \footnote{e-mail matsui@spin.phys.s.u-tokyo.ac.jp}}

\date{\today}

\maketitle
\begin{center}  $^{1}$ 
Department of Physics, Graduate School of Humanities and Sciences, 
Ochanomizu University \\
2-1-1 Ohtsuka, Bunkyo-ku, Tokyo 112-8610, Japan
\end{center} 
\begin{center} $^{2}$ 
Department of Physics, Graduate School of Science, 
University of Tokyo \\ 
7-3-1 Hongo, Bunkyo-ku, Tokyo 113-0033, Japan \\
$^{3}$ CREST, JST, 4-1-8 Honcho Kawaguchi, Saitama, 332-0012, Japan
\end{center} 
\abstract{We derive exactly scalar products and form factors 
for integrable higher-spin XXZ chains 
through the algebraic Bethe-ansatz method.   
Here spin values are arbitrary and different spins can be mixed. 
We show the affine quantum-group symmetry, 
$U_q(\widehat{sl_2})$, for the monodromy matrix of the XXZ spin chain,   
and then obtain the exact expressions. 
Furthermore, through the quantum-group symmetry we explicitly derive   
the diagonalized forms of the $B$ and $C$ operators 
in the $F$-basis for the spin-1/2 XXZ spin chain, 
which was conjectured in the algebraic Bethe-ansatz 
calculation of the XXZ correlation functions. 
The results should be fundamental in studying 
form factors and correlation functions systematically 
for various solvable models associated with the integrable XXZ spin chains. 
}

%
\newpage 

 \setcounter{equation}{0} 
 \renewcommand{\theequation}{1.\arabic{equation}}
\section{Introduction}

Correlation functions of the spin-1/2 XXZ spin chain have 
attracted much attention in mathematical physics for more than a decade 
\cite{Korepin,Jimbo-Miwa,Review}.   
The multiple-integral representations of XXZ correlation functions 
were first derived in terms of the $q$-vertex operators 
\cite{Nakayashiki}.  Based on the algebraic Bethe ansatz method,  
the determinant expressions \cite{Slavnov} 
of the scalar products and the norms of Bethe ansatz eigenstates 
were reconstructed in terms of the $F$-basis \cite{MS2000}, 
and then the XXZ correlation 
functions are derived under an external magnetic field 
\cite{KMT1999,KMT2000}. The multiple-integral representations 
at zero temperature have been extended into those at finite temperature 
\cite{Goehmann}.  
Furthermore, dynamical structure factors   
have been evaluated  by solving 
the Bethe ansatz equations numerically \cite{Karbach,Sato,Caux,DSF}. 

The Hamiltonian of the spin-1/2 XXZ chain under the periodic 
boundary conditions is given by  
\be 
{\cal H}_{\rm XXZ} =  
{\frac 1 2} \sum_{j=1}^{L} \left(\sigma_j^X \sigma_{j+1}^X +
 \sigma_j^Y \sigma_{j+1}^Y + \Delta \sigma_j^Z \sigma_{j+1}^Z  \right) \, . 
\label{hxxz}
\ee
Here $\sigma_j^{a}$ ($a=X, Y, Z$) are the Pauli matrices defined 
on the $j$th site, and $\Delta$ the XXZ coupling. 
By $\Delta= (q+q^{-1})/2$ we define parameter $q$, 
which plays a significant role in the paper.  
We note that $L$ denotes the number of the one-dimensional lattice sites. 
Solvable higher-spin generalizations of the XXX and XXZ chains have been 
constructed by the fusion method in several references 
\cite{FusionXXX,Babujan,Fateev-Zamolodchikov,SAA,KR,V-WDA}. 
For instance, the Hamiltonian of the 
solvable spin-$s$ XXX chain is given by the following \cite{Babujan}: 
\be 
H_s = J \sum_{j=1}^{L} Q_{2s}({\vec S}_j \cdot {\vec S}_{j+1}) \, , 
\ee
where ${\vec S}_j$ are operators of spin $s$ acting on the $j$th site and 
$Q_{2s}(x)$ is a polynomial of degree $2s$ 
\be 
Q_{2s}(x) = \sum_{p=1}^{2s} \left( \sum_{k=1}^{p} {\frac 1 k} \right)
\prod_{\ell=0, \ell \ne j}^{2s} {\frac {x - x_{\ell}} {x_j - x_{\ell}}} \, . 
\ee
Here $x_{\ell} = [\ell(\ell+1) -2s (s+1)]/2$.  
At $T=0$ in the critical regime, it is discussed that 
the low-excitation spectrum of the spin-$s$ XXX chain 
is described in terms of the level-$k$ $SU(2)$ WZWN model 
where $k=2s$ \cite{JSuzuki}.

In the present paper we derive exact expressions of scalar products and 
form factors for the integrable higher-spin XXZ spin chains. 
Here different spins can be mixed. 
We first show that the monodromy matrix of the XXZ spin chain has 
the symmetry of the affine quantum group $U_q({\widehat{sl_2}})$. 
Then, we derive the exact expressions 
taking advantage of the quantum group symmetry. 
By a similarity transformation \cite{AW}, 
we transform the symmetric $R$-matrix into an asymmetric one,  
which is directly connected to the quantum group $U_q(sl_2)$. 
We derive projection operators from the asymmetric $R$-matrices \cite{V-WDA},  
and construct integrable higher-spin   
XXZ spin chains by the fusion method similarly as the case of 
the XXX spin chain \cite{FusionXXX}.  
Here we make an extensive use of 
the $q$-analogues of Young's projection operators, which play a central role  
in the $q$-analogue of the Schur-Weyl reciprocity 
of the quantum group $U_q(sl_2)$ \cite{Jimbo-QG,Jimbo-Hecke}. 
Hereafter, we call transformations on the $R$-matrix 
 gauge transformations. 
After we construct integrable higher-spin models,  
applying the inverse transformation to them, 
we reduce the asymmetric monodromy matrices  
into those of the symmetric $L$-operators 
constructed from the symmetric $R$-matrices, 
and thus obtain scalar products and form factors 
for the integrable higher-spin models in the standard formulation.

Form factors and correlation functions 
have been discussed for integrable higher-spin XXX models 
in previous researches 
\cite{Terras1999,MT2000,Kitanine2001,Castro-Alvaredo}.  
In the approach of the so-called quantum inverse scattering problem 
for the integrable $N$-state models, 
one has to construct the $N$-by-$N$ monodromy matrix   
in order to express local operators in terms of the global operators.  
However, it seems to be technically 
nontrivial to construct the $N$-by-$N$ monodromy matrix 
for the integrable higher-spin systems (see also \cite{Martins}). 
By the approach of the present paper, the calculational task is much   
reduced into the minimal level. 
In fact, as a consequence of the $q$-analogue of the Schur-Weyl duality,  
the exact expressions of scalar products and form factors 
are derived from the formulas of the spin-1/2 XXZ chain 
by setting the inhomogeneous parameters in the form 
of ``complete $\ell$-strings''\cite{Odyssey}, and 
the most general results are  straightforwardly obtained.

The affine quantum-group invariance has another important consequence. 
In the XXZ case we can explicitly prove the pseudo-diagonalized forms 
of the $B$ and $C$ operators of the algebraic Bethe-ansatz method 
through the quantum-group symmetry. 
Here we remark that the XXZ spin chain has no spin SU(2) symmetry.
In the pseudo-diagonal basis, 
the $B$ and $C$ operators are expressed as sums of local 
spin operators $\sigma_i^{-}$ and $\sigma_i^{+}$ 
multiplied by diagonal matrices, respectively,   where each of the local 
spin operators $\sigma_i^{\pm}$ are defined on one lattice-site, i.e. 
on the $i$th site. 
The pseudo-diagonalized forms of the $B$ and $C$ operators were conjectured 
in the algebraic Bethe-ansatz derivation of the XXZ correlation functions 
\cite{MS2000,KMT1999,KMT2000}. 
In fact, the $B$ operators create the Bethe states where the $C$ operators 
are conjugate to them, and the pseudo-diagonalized forms 
play a central role in the calculation of the scalar products 
and the norms of Bethe states. 
They are also fundamental in the quantum inverse 
scattering problem, by which local operators are expressed in terms of 
the global operators such as the $B$ and $C$ operators. 
In the XXZ case, however, an explicit derivation 
of the pseudo-diagonalized forms of the $B$ and $C$ operators 
has not been shown previously, yet.  
Thus, the explicit derivation in the paper 
completes the algebraic Bethe-ansatz formulation  
of the form factors and correlation functions 
of the integrable XXZ spin chains. 
Here we remark that for the XXX case (i.e. the isotropic case), 
the diagonalized forms have been shown by Maillet and Sanchez de Santos 
in Ref. \cite{MS2000} by making an explicit 
use of the rotational SU(2)-symmetry. 
However,  the method for the XXX case does not hold 
for the XXZ spin chain which has no SU(2) symmetry.

The derivation of the affine quantum-group symmetry 
of the monodromy matrix should be 
not only theoretically interesting but also practically useful 
for calculation. 
Here we remark that the infinite-dimensional symmetry, 
$U_q({\widehat{sl_2}})$, was realized for the infinite XXZ spin chain 
with the $q$-vertex operators \cite{Jimbo-Miwa,Nakayashiki}. 
Thus, it should be an interesting 
open problem how the affine quantum-group symmetry 
of the finite XXZ spin chain 
can be related to that of the infinite XXZ spin chain. 
 Furthermore, there are several advantages in the present formulation   
of the affine quantum-group symmetry. 
We derive the symmetry  through gauge transformations.  
The transformed asymmetric $R$-matrix is directly related to 
the quantum group $U_q(sl_2)$ so that 
 we can  systematically construct higher-spin representations of the 
$R$-matrices through the $q$-analogue of the Young symmetrizers.     
Here we should note that the gauge transformation connects 
the $R$ matrix in the different gradings of $U_q(\widehat{sl_2})$. 
The symmetric and asymmetric $R$-matrices are equivalent to  
that of the principal and homogeneous gradings, respectively 
(see for instance \S 5.4 of \cite{Jimbo-Miwa}).     
Moreover, we thus avoid technical difficulties appearing 
when we directly derive the matrix representation of 
the universal $R$-matrix of the affine quantum group,  
which is given by a product of infinite series of generators 
\cite{Drinfeld}. 
Although one can construct matrix representations 
of the modified universal $R$-matrix with its derivation $d$ 
dropped \cite{Drinfeld,Jimbo-QG}(see A.2 of \cite{Jimbo-review}), 
it seems that the calculation is not quite straightforward 
when we construct higher-spin representations.

The results of the present paper should be useful 
for calculating exact expressions of correlation functions for 
various integrable models associated with higher-spin XXZ chains.  
For instance, the $\tau_2$ model in the $N$-dimensional 
nilpotent representation corresponds 
to the integrable spin-$(N-1)/2$ XXZ spin chain with $q$ being  
a primitive $N$th root of unity \cite{Nishino}. 
Here we remark that the $\tau_2$ model is closely 
related to the $N$-state superintegrable chiral Potts model.  
Furthermore, there are several possible physical applications, 
such as calculating form factors of  
quantum impurity models  through the result in the case of mixed spins. 
As an illustrative example, we have calculated    
exact expressions for the emptiness formation probability   
of the higher-spin XXZ spin chains, 
which we shall discuss in a subsequent paper.

The content of the paper consists of the following: 
In section 2, we introduce the symmetric $R$-matrix of the spin-1/2 
XXZ spin chain, and define the monodromy matrix \cite{TF}.  
We also define the action of 
the symmetric group on products of $R$-matrices.  
In section 3 we derive the symmetry of the quantum affine 
algebra for the monodromy matrix of the XXZ spin chain.  
We introduce the asymmetric $R$-matrix and then derive it 
from the symmetric one by a gauge transformation.  
We decompose the asymmetric $R$-matrix in terms of the 
generators of the Temperley-Lieb algebra, and  
show the affine quantum-group symmetry. 
We also derive it by a systematic method for expressing 
products of $R$-matrices  formulated in definition \ref{df:Rp}. 
In fact,  all the fundamental relations 
of the quantum inverse-scattering problem 
can be derived much more simply 
 without using the ${\hat R}$-matrix of Ref. \cite{MS2000},   
as shown in Appendices A and B. 
In section 4 we construct the $R$ matrices of  
integrable higher-spin XXZ spin chains 
with projection operators of $U_q(sl_2)$ by the fusion method.  
We also discuss the case of mixed spins. 
In section 5 we formulate an explicit derivation of the 
pseudo-diagonalized forms of the $B$-operators. 
We also show it for the $C$-operator in Appendix E. 
In section 6 we derive 
determinant expressions of scalar products 
for the higher-spin XXZ spin chains.    
In section 7, for the higher-spin cases 
we show the method by which we can express local operators 
  in terms the global operators. 
We give some useful formulas of the quantum inverse scattering problem 
for the higher spin case. Finally, we derive some examples of 
form factors for the integrable higher-spin XXZ spin chains.

\newpage 
%
%
 \setcounter{equation}{0} 
 \renewcommand{\theequation}{2.\arabic{equation}}
\section{$R$-matrices and $L$-operators}

\subsection{Symmetric $R$-matrix}

We shall introduce the $R$-matrix for the XXZ spin chain \cite{TF}. 
We consider two types of $R$-matrices, 
$R_{ab}(u)$ and $R_{ab}(\lambda, \mu)$.  
The $R$-matrix with a single rapidity argument,  
$R_{ab}(u)$, acts on the tensor product of 
 two vector spaces $V_a$ and $V_b$, 
i.e. $R_{ab}(u) \in End(V_a \otimes V_b)$,  
where parameter $u$ is independent of $V_a$ or $V_b$. 
The $R$-matrix with two rapidity arguments,  
$R_{ab}(\lambda, \mu)$,  
acts on the tensor product of 
vector spaces with parameters,  
$V_a(\lambda)$ and $V_b(\mu)$, i.e.  
$R_{ab}(\lambda, \mu) \in End(V_a(\lambda) \otimes V_b(\mu))$.

Let us denote by $e^{a,b}$ such a matrix 
that has only one nonzero element 
equal to 1 at entry $(a,b)$. 
We denote by $V$ the two-dimensional vector space. 
We define the $R$-matrix acting on the tensor product $V \otimes V$ by 
\be
R(u)=\sum_{a,b,c,d=1,2} R^{ab}_{cd}(u) e^{a,c} \otimes e^{b,d} \, . 
\label{eq:R(u)}
\ee
Here matrix elements $R^{ab}_{cd}(u)$ satisfy the 
charge conservation, i.e.   
$R^{ab}_{cd}(u) = 0$ unless $a+b=c+d$, and 
all the nonzero elements are given by the following: 
\bea 
R^{11}_{11}(u)& = & R^{22}_{22}(u)=1 \, , \quad 
R^{12}_{12}(u)=R^{21}_{21}(u)= b(u), \non \\
R^{12}_{21}(u) & =& R^{21}_{12}(u)= c(u) \, , \label{R-matrix-elements}
\eea
where functions $b(u)$ and $c(u)$ are given by  
\be 
b(u)= {\frac {\sinh(u)} {\sinh(u+\eta)}} \, , \quad
c(u)= {\frac {\sinh(\eta)} {\sinh(u+\eta)}} \, . 
\label{bc}
\ee 
Here, parameter $\eta$ is related to $q$ of $\Delta = (q+q^{-1})/2$ 
by $q=\exp(\eta)$. 

We now introduce operators acting on the $L$th power of tensor product 
of vector spaces with parameters, 
$V(\lambda_1) \otimes \cdots \otimes V(\lambda_L)$. We  
generalize the notation of (\ref{eq:R(u)}). 
Let us take a pair of integers  $j$ and $k$ satisfying $1 \le j < k \le L$. 
For a given set of matrix elements 
$A^{a, \, b}_{c, \, d}(\lambda_j, \lambda_k)$ ($a,b,c,d=1,2$)  
we define operators $A_{j, k}(\lambda_j, \lambda_k)$ and 
$A_{k,j}(\lambda_k, \lambda_j)$ by 
\bea 
A_{j, k}(\lambda_j, \lambda_k) & = & \sum_{a,b,\alpha, \beta=1,2} 
A^{a, \, \alpha}_{b, \, \beta}(\lambda_j, \lambda_k) 
I_1 \otimes \cdots \otimes I_{j-1} \non \\ 
& & \quad \otimes e^{a, b}_j \otimes 
I_{j+1} \otimes \cdots \otimes I_{k-1} 
\otimes e_k^{\alpha, \beta}  \otimes I_{k+1} \otimes \cdots \otimes I_{L} 
\, , \non \\  
A_{k, j}(\lambda_k, \lambda_j) & = & 
\sum_{a, b, \alpha, \beta=1,2} 
A^{\alpha, \, a}_{\beta, \, b}(\lambda_k, \lambda_j) 
I_1 \otimes \cdots \otimes I_{j-1} \non \\ 
& & \quad \otimes e^{a, b}_j \otimes I_{j+1} \otimes \cdots \otimes I_{k-1} 
\otimes e_k^{\alpha, \beta} \otimes  I_{k+1} 
\otimes \cdots \otimes I_{L} \, . \label{defAjk}
\eea
Here $I$ is the two-by-two unit matrix, and  
$I_j$ and $e_j^{a, b}$ act on the $j$th vector space $V(\lambda_j)$ of 
$V(\lambda_1) \otimes \cdots \otimes V(\lambda_L)$.  
In terms of matrices we express operators $A_{j, k}$ and 
$A_{k, j}$ for $j < k$ by  
\be 
A_{j, k} = 
\left(
\begin{array}{cccc} 
A^{11}_{11} &   A^{11}_{12} & A^{11}_{21} &   A^{11}_{22} \\
A^{12}_{11} &   A^{12}_{12} & A^{12}_{21} &   A^{12}_{22} \\
A^{21}_{11} &   A^{21}_{12} & A^{21}_{21} &   A^{21}_{22} \\
A^{22}_{11} &   A^{22}_{12} & A^{22}_{21} &   A^{22}_{22} 
\end{array} 
\right)_{[j,k]} \, , \quad 
A_{k, j} = 
\left(
\begin{array}{cccc} 
A^{11}_{11} &   A^{11}_{21} & A^{11}_{12} &   A^{11}_{22} \\
A^{21}_{11} &   A^{21}_{21} & A^{21}_{12} &   A^{21}_{22} \\
A^{12}_{11} &   A^{12}_{21} & A^{12}_{12} &   A^{12}_{22} \\
A^{22}_{11} &   A^{22}_{21} & A^{22}_{12} &   A^{22}_{22} 
\end{array} 
\right)_{[j,k]} 
\ee
Here by the symbol $[j,k]$ we express that 
matrix element $A^{ab}_{cd}$ corresponds to 
$e_j^{a,c} \otimes e_k^{b,d}$ for 
$A_{j, k}$, and to $e_j^{b,d} \otimes e_k^{a,c}$ for $A_{k, j}$.

Let us now introduce operators $R_{j, k}(\lambda_j, \lambda_k)$ 
and $R_{k, j}(\lambda_k, \lambda_j)$ 
acting on the tensor product $V(\lambda_1) 
\otimes \cdots \otimes V(\lambda_L)$.
We define them by putting $A^{ab}_{cd}(\lambda_j, \lambda_k) = 
R^{ab}_{cd}(\lambda_j-\lambda_k)$ in (\ref{defAjk}).  
Here the matrix elements $R^{ab}_{cd}(u)$ 
are given in (\ref{R-matrix-elements}).      
For instance, setting $u=\lambda_1 - \lambda_2$, 
we have explicitly
\be
R_{12} (\lambda_1, \lambda_2)  
= \left(
\begin{array}{cccc} 
1 &   0 & 0 & 0 \\
0 &   b(u) & c(u) & 0 \\
0 &   c(u) & b(u) & 0 \\
0 &   0 & 0 & 1
\end{array} 
\right)_{[1,2]} \, .     
\ee

The $R$-matrices satisfy the Yang-Baxter equations: 
\be 
R_{12}(\lambda_1, \lambda_2) R_{13}(\lambda_1, \lambda_3) 
R_{23}(\lambda_2, \lambda_3) = R_{23}(\lambda_2, \lambda_3) 
R_{13}(\lambda_1, \lambda_3) R_{12}(\lambda_1, \lambda_2)  
\ee
They also satisfy the inversion relations (unitarity conditions): 
\be 
R_{j k}(\lambda_j, \lambda_k) R_{kj}(\lambda_k, \lambda_j)= I^{\otimes L} 
\qquad {\rm for } \, \, 1 \le j, k \le L \, . 
\label{eq:inv-rel}
\ee
Here $I^{\otimes L}$ denotes the $L$th power of tensor product of $I$.

Hereafter we often abbreviate $R_{jk}(\lambda_j, \lambda_k)$ 
simply by $R_{jk}$.

%
%
\subsection{$L$-operators and the monodromy matrix}

Let us introduce parameters $\xi_1, \xi_2, \ldots, \xi_L$, 
which we call the inhomogeneous parameters. 
In the case of the monodromy matrix, we 
assume that parameters $\lambda_j$ of the tensor product 
$V(\lambda_1) \otimes \cdots \otimes V(\lambda_L)$ 
are given by the inhomogeneous parameters, i.e.  
$\lambda_j = \xi_j$ for $j=1, 2, \ldots, L$. 
Let us denote by $0$ the suffix for the auxiliary space. 
We define $L$-operators acting on the $m$th site 
for $m=1, 2, \ldots, L$, by 
\be 
L_{m}(\lambda, \xi_m) = R_{0 m}(\lambda, \xi_m) \, . 
\ee
We define the monodromy matrix  
acting on the $L$ lattice-sites in one dimension by 
\be 
T_{0, 1 2 \cdots L}(\lambda; \xi_1, \ldots, \xi_L) = L_{L}(\lambda, \xi_L) 
L_{L-1}(\lambda, \xi_{L-1}) \cdots 
L_{2}(\lambda, \xi_2) L_{1}(\lambda, \xi_1) \, . 
\label{eq:monodromy}
\ee
We shall also denote it by 
$R_{0, 1 2 \cdots L}(\lambda_0; \xi_1, \ldots, \xi_L)$ in \S 2.4.    
Hereafter we often suppress the symbols of 
inhomogeneous parameters 
and express the monodromy matrix 
$T_{0, 1 2 \cdots L}(\lambda; \xi_1, \ldots, \xi_L)$ 
simply as  $R_{0, 1 2 \cdots L}(\lambda; \{ \xi_j \})$ or $T_{0}(\lambda)$. 
 
Let us consider two auxiliary spaces with 
suffices $a$ and $b$. We define monodromy matrices 
$T_{a}(\lambda_{a})$ and $T_{b}(\lambda_b)$ similarly as 
(\ref{eq:monodromy}) with $0$ replaced by $a$ and $b$, respectively.    
It is clear that they 
satisfy the following Yang-Baxter equations. 
\be 
R_{ab}(\lambda_a, \lambda_b) T_a(\lambda_a) T_b(\lambda_b) 
= T_b(\lambda_b) T_a(\lambda_a) R_{ab}(\lambda_a, \lambda_b) \, . 
\label{eq:RTT=TTR}
\ee

Let us introduce operator $A_j$ acting on the $j$th site by  
\be 
A_j = \sum_{a,b=1,2} A^{a}_{b} I_0 \otimes \cdots \otimes I_{j-1} \otimes 
e^{a, \, b}_j \otimes \cdots \otimes I_L
\ee
We express it in terms of the matrix notation as follows: 
\be 
A_j = 
\left(
\begin{array}{cc} 
A^{1}_{1} &   A^{1}_{2}  \\
A^{2}_{1} &   A^{2}_{2}  
\end{array} 
\right)_{[j]} 
\ee
We express the matrix elements of the monodromy matrix by 
\be 
T_{0, 1 2 \cdots L}(u; \xi_1, \ldots, \xi_L) = 
\left(
\begin{array}{cc} 
A_{12 \cdots L}(u; \xi_1, \ldots, \xi_L) &   
B_{12 \cdots L}(u; \xi_1, \ldots, \xi_L)  \\
C_{12 \cdots L}(u; \xi_1, \ldots, \xi_L) &   
D_{12 \cdots L}(u; \xi_1, \ldots, \xi_L)  
\end{array} 
\right)_{[0]} 
\ee
The transfer matrix, $t(u)$, is given by 
the trace of the monodromy matrix with respect to the 0th space:  
\bea 
t(u; \xi_1, \ldots, \xi_L) & = & 
{\rm{tr}_0}\left(T_{0, 1 2 \cdots L}(u; \xi_1, \ldots, \xi_L) \right) 
\non \\ 
& = & A_{12 \cdots L}(u; \xi_1, \ldots, \xi_L) 
+ D_{12 \cdots L}(u; \xi_1, \ldots, \xi_L) \, . 
\eea
Here we note that the transfer matrix $t(u)$ is nothing but the 
transfer matrix of the six-vertex model defined on the two-dimensional 
square lattice \cite{Baxter-Book}. 

Hereafter,  
we shall often denote $B_{12 \cdots L}(u; \xi_1, \ldots, \xi_L)$ 
by $B(u; \{ \xi_j \})$ or $B(u)$, briefly.  

%
%
\subsection{Products of $R$-matrices and the symmetric group}

Let us  consider the symmetric group ${\cal S}_{n}$ of 
$n$ integers, $1, 2, \ldots, n$. 
We denote by $\sigma$ an element of  ${\cal S}_{n}$.   
Then $\sigma$ maps $j$ to $\sigma(j)$ for $j=1, 2, \ldots, n$

\begin{df}
Let $p$ be a sequence of $n$ integers, $1, 2, \ldots, n$,  
and $\sigma$ an element of the symmetric group ${\cal S}_n$. 
We define the action of $\sigma$ on $p$ by 
\be 
\sigma(p) = (p_{\sigma(1)}, \ldots, p_{\sigma(n)}) \, . 
\ee
\end{df} 
Here we remark that $(\sigma_A \sigma_B) \, p = \sigma_B (\sigma_A p)$ 
for $\sigma_A, \sigma_B \in {\cal S}_n$. We shall show it in Appendix A.  

Let us recall that $R_{jk}$ denote $R_{jk}(\lambda_j, \lambda_k)$.  
\begin{df} 
  Let $p=(p_1, p_2, \ldots, p_n)$ be a sequence of 
$n$ integers, $1, 2, \ldots, n$. 
We define $R_{p_1, \, p_2 p_3 \cdots p_n}$ 
and $R_{p_1 p_2 \cdots p_{n-1}, \, p_n}$ by  
\bea
R_{p_1, \, p_2 p_3 \cdots p_n} & = & 
R_{p_1 p_n} R_{p_1 p_{n-1}} \cdots R_{p_1 p_2} \, , \non \\ 
R_{p_1 p_2 \cdots p_{n-1}, \, p_n} & = & R_{p_1 p_n} R_{p_2 p_n} \cdots 
R_{p_{n-1} p_n} \, . 
\eea
\end{df}
For $p=(1, 2, \ldots, n)$ we have 
\be 
R_{1, 2 3 \cdots n}  =  R_{1 n} R_{1 n-1} \cdots R_{1 2} \, , \quad   
R_{1 2 \cdots n-1, n}  = R_{1 n} R_{2 n} \cdots R_{n-1 n}. 
\ee 
We thus express the monodromy matrix as follows 
\be 
T_{0, 1 2 \cdots L}(\lambda_0; \{ \xi_k \}) 
= R_{0, 1 2 \cdots L}(\lambda_0; \{ \xi_k \}) \, . 
\ee
Here we have assumed that $\lambda_k = \xi_k$ for $k=1, 2, \ldots, n$.  

Let us express by $s_j=(j \, \, j+1)$ such a permutation that 
maps $j$ to $j+1$ and $j+1$ to $j$ and does not change other integers.   
\begin{df} Let $p$ be a sequence of $n$ integers, $1, 2, \ldots, n$.  
We define $R^{s_j}_{p}$ by  
\be 
R^{s_j}_{p} = R_{p_j, p_{j+1}}(\lambda_{p_j}, \lambda_{p_{j+1}}) \, .   
\ee
For the unit element $e$ of ${\cal S}_{n}$,   
we define $R^{e}_{p}$ by $R^{e}_{p}=1$. For a given element 
$\sigma$ of ${\cal S}_n$,    
we define $R^{\sigma}_{p}$ recursively by the following: 
\be 
R_{p}^{\sigma_A \sigma_B} = R_{\sigma_A(p)}^{\sigma_B} R_{p}^{\sigma_A} \, .     \label{eq:Rp}
\ee  
\label{df:Rp}
\end{df} 

We remark that 
every permutation $\sigma$ is expressed 
as a product of some $s_j=(j \, j+1)$ with 
$j=1, 2, \ldots, n-1$.   
We thus obtain $R_{p}^{\sigma}$ 
as a product of $R_{p}^{s_j}$ for some $j$ s.  
For an illustration,  let us calculate $R^{(123)}_{(1, 2, 3)}$.  
Noting $(123)=(1 \, 2)(2 \, 3)$, we have 
\bea 
R^{(12)(23)}_{(1,2,3)} & = & R^{(23)}_{(2,1,3)} R^{(12)}_{(1,2,3)} \non \\ 
& = & R_{13} R_{12} = R_{1, 23} \, . 
\eea
Through the defining relations of the symmetric group ${\cal S}_n$ 
\cite{Magnus},  we can show that definition \ref{df:Rp} is well defined.
The proof is given in proposition A.1 of Appendix A.

Let us denote by $\sigma_c$ such a cyclic permutation that  
maps $j$ to $j+1$ for $j=1, \ldots, n-1$ and $n$ to 1. 
We also express it as $\sigma_c=(1 2 \cdots n )$. 
Noting $(1 2 \cdots n)=(1\, \,  2) \cdots (n-1 \,\,  n) 
= s_1 s_2 \cdots s_{n-1}$, 
we can show  the following lemma 
\begin{lem} Let us denote by $p_q$ the sequence $p_q=(1,2, \ldots, n)$. 
For  $\sigma_c=(1 2 \cdots n)$ we have 
\be 
R^{\sigma_c}_{p_q} = R_{1, 2 \cdots n} \, . 
\ee
\label{lem:cyclic}
\end{lem} 
The proof of lemma \ref{lem:cyclic} 
is given in lemma  \ref{lem:p-cyclic} of Appendix A.  

%
%
\subsection{${\check R}$-matrices and permutation operators}

Let us consider  two-dimensional 
vector spaces $V_a$ and $V_b$. 
We define permutation operator  $\Pi_{ab}$ which maps elements of 
$V_a \otimes V_b$ to those of $V_b \otimes V_a$ as follows. 
\be 
\Pi_{ab} \, v_a \otimes v_b = v_b \otimes v_a  \, , \quad 
v_a \in V_a \, , \quad v_b \in V_b \, . 
\ee
We define ${\check R}_{ab}(u)$ by 
\be 
{\check R}_{ab}(u) = \Pi_{ab} R_{ab}(u) 
\ee
The operators ${\check R}_{ab}$ satisfy the Yang-Baxter equations  
\be 
{\check R}_{12}(u){\check R}_{23}(u+v){\check R}_{12}(v) 
={\check R}_{23}(v){\check R}_{12}(u+v){\check R}_{23}(u) 
\label{eq:YB-Rcheck}
\ee
The operator ${\check R}_{ab}$ gives a linear map 
from $V_a \otimes V_b$ to $V_b \otimes V_a$.  
If $V_a$ and $V_b$ are equivalent, then 
we may regard ${\check R}_{ab}$ as a map from 
$V_a^{\otimes 2}$ to $V_a^{\otimes 2}$.

We add 0 to the $n$ integers. 
For a given element $\sigma$ 
of the symmetric group ${\cal S}_{n+1}$, 
we define $\Pi^{\sigma}$ acting 
on integers $0, 1, \ldots, n$,  
as follows.    
We first express $\sigma$ in terms of $s_j=(j \,\, j+1)$  
such  as $\sigma=s_{j_1} s_{j_2} \cdots s_{j_r}$, and   
then we define $\Pi^{\sigma}= \Pi^{(j_1 \, j_1+1) \cdots (j_r \, j_r+1)}$ 
by 
\be 
\Pi^{\sigma} 
 = \Pi_{j_1, j_1+1} \Pi_{j_2, j_2+1} \cdots \Pi_{j_r, j_r+1} \, . 
\ee

\begin{lem} We have the following relation between 
$R$-matrices and operators ${\check R}_{ab}$:  
\be 
R_{0, 1 2 \cdots n} = \Pi^{(0 1 \cdots n)} \, 
{\check R}_{n-1 \, n}(\lambda_0-\xi_{n-1}) \cdots 
{\check R}_{12}(\lambda_0-\xi_{2}) 
{\check R}_{01}(\lambda_0-\xi_{1}) 
\label{eq:product-Rcheck}
\ee
\end{lem}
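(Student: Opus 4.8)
The plan is to prove the identity by induction on $n$, the engine of the argument being a single conjugation relation that shows how the cyclic permutation operator transports the $R$-matrix $R_{0n}$, acting on the auxiliary space and the last site, onto the $R$-matrix $R_{n-1\,n}$ acting on two adjacent sites. First I would settle the base case $n=1$ directly: the right-hand side reads $\Pi^{(01)}\check R_{01}(\lambda_0-\xi_1)=\Pi_{01}\Pi_{01}R_{01}(\lambda_0-\xi_1)=R_{01}=R_{0,1}$, where I use $\Pi_{01}^2=I$ together with the definition $\check R_{ab}(u)=\Pi_{ab}R_{ab}(u)$. So the claim holds for $n=1$.

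For the inductive step I would peel off the leftmost factor using the definition of $R_{0,12\cdots n}$, namely $R_{0,12\cdots n}=R_{0n}\,R_{0,12\cdots(n-1)}$, and insert the induction hypothesis in the second factor. Comparing the result with the desired expression for $n$, both sides carry the common right factor $\check R_{n-2\,n-1}\cdots\check R_{01}$, so the step reduces to proving the operator identity $R_{0n}\,\Pi^{(01\cdots n-1)}=\Pi^{(01\cdots n)}\check R_{n-1\,n}$. Here I would use the factorization $\Pi^{(01\cdots n)}=\Pi^{(01\cdots n-1)}\Pi_{n-1\,n}$, which follows from the decomposition of a cycle into adjacent transpositions $(01\cdots n)=(01)(12)\cdots(n-1\,n)$ exactly as in Lemma~\ref{lem:cyclic}, together with $\check R_{n-1\,n}=\Pi_{n-1\,n}R_{n-1\,n}$ and $\Pi_{n-1\,n}^2=I$. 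This collapses the step to the conjugation identity
\be
R_{0n}\,\Pi^{(01\cdots n-1)} = \Pi^{(01\cdots n-1)}\,R_{n-1\,n} \, .
\ee

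The key step is therefore to prove this relation, i.e. $\Pi^{(01\cdots n-1)}R_{n-1\,n}\bigl(\Pi^{(01\cdots n-1)}\bigr)^{-1}=R_{0n}$. Writing $\Pi^{(01\cdots n-1)}=\Pi_{01}\Pi_{12}\cdots\Pi_{n-2\,n-1}$ and conjugating $R_{n-1\,n}$ by the transpositions from the inside out, each $\Pi_{j\,j+1}$ merely relabels tensor factors, sending $R_{j+1\,n}\mapsto R_{j\,n}$ while leaving site $n$ untouched (since $n>n-1\ge j+1$) and carrying the rapidity argument along unchanged; after all $n-1$ conjugations the moving index descends from $n-1$ down to $0$, yielding $R_{0n}$. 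The hard part will be precisely this bookkeeping: one must fix the convention under which $\Pi^\sigma$ acts on spatial labels, consistent with $\Pi^\sigma=\Pi_{j_1\,j_1+1}\cdots\Pi_{j_r\,j_r+1}$ and Definition~\ref{df:Rp}, and check that conjugation relabels the $R$-matrix indices by $\sigma$ without altering the single rapidity argument attached to each $\check R$, so that the arguments $\lambda_0-\xi_k$ line up correctly on both sides. Granting this relabeling rule, the displayed identity holds, the induction closes, and the lemma follows.
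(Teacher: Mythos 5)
Your argument is correct and complete; note that the paper states this lemma without any proof at all, so there is nothing to compare it against — your induction is the natural argument and fills a genuine gap in the text. The base case, the factorization $\Pi^{(01\cdots n)}=\Pi^{(01\cdots n-1)}\Pi_{n-1\,n}$ coming from $(01\cdots n)=s_0s_1\cdots s_{n-1}$, the cancellation $\Pi_{n-1\,n}\check R_{n-1\,n}=R_{n-1\,n}$, and the inside-out conjugation $\Pi_{01}\cdots\Pi_{n-2\,n-1}\,R_{n-1\,n}\,\Pi_{n-2\,n-1}\cdots\Pi_{01}=R_{0n}$ (each adjacent transposition relabels only the moving leg, never leg $n$, and leaves the numerical rapidity argument and the matrix entries $R^{ab}_{cd}$ untouched, consistent with the paper's convention (\ref{defAjk})) all check out. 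One point worth flagging: your induction actually establishes the identity with leftmost factor $\check R_{n-1\,n}(\lambda_0-\xi_{n})$, i.e.\ the $k$th factor from the right is $\check R_{k-1,k}(\lambda_0-\xi_k)$, whereas the lemma as printed writes $\check R_{n-1\,n}(\lambda_0-\xi_{n-1})$. The printed argument breaks the pattern set by the factors $\check R_{01}(\lambda_0-\xi_1)$ and $\check R_{12}(\lambda_0-\xi_2)$ and is inconsistent with the $n=2$ case, so it is evidently a typographical error in the paper; your proof gives the correct version, and you would do well to state that correction explicitly rather than leaving the alignment of arguments as a deferred bookkeeping check.
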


%
%
 \setcounter{equation}{0} 
 \renewcommand{\theequation}{3.\arabic{equation}} 
\section{The quantum group invariance}

We shall show that the monodromy matrix, 
$T_{0, 1 2 \cdots L}(\lambda; \xi_1, \ldots, \xi_L)$, 
has the symmetry of the affine quantum group, $U_q(\widehat{sl_2})$.   

\subsection{Quantum group $U_q(sl_2)$ and the asymmetric $R$-matrices}

The quantum algebra $U_q(sl_2)$ 
is an associative algebra over ${\bf C}$ generated by  
$X^{\pm}, K^{\pm}$  with the following relations: \cite{Jimbo-review}
\bea 
K K^{-1} & = & K K^{-1} = 1 \, , \quad 
K X^{\pm} K^{-1}  =  q^{\pm 2} X^{\pm} \, ,  \quad 
\, , \non \\
{[} X^{+}, X^{-} {]} & = &  
{\frac   {K - K^{-1}}  {q- q^{-1}} } \, . 
\eea
The algebra $U_q(sl_2)$ is also a Hopf algebra over ${\bf C}$ 
with comultiplication 
\bea 
\Delta (X^{+}) & = & X^{+} \otimes 1 + K \otimes X^{+}  \, , 
 \quad 
\Delta (X^{-})  =  X^{-} \otimes K^{-1} + 1 \otimes X^{-} \, ,  \non \\
\Delta(K) & = & K \otimes K  \, , 
\eea 
and antipode:  
$S(K)=K^{-1} \, , S(X^{+})= - K^{-1} X^{+} \, , S(X^{-}) = -  X^{-} K$, and   
coproduct: $\epsilon(X^{\pm})=0$ and $\epsilon(K)=1$. 

In association with the quantum group, 
we define the $q$-integer of an integer $n$ by 
$[n]_q=(q^n -q^{-n})/(q-q^{-1})$. 

The universal $R$-matrix, ${\cal R}$, of $U_q(sl_2)$ 
satisfies the following relations: 
\be 
{\cal R} \Delta (x) = 
\tau \circ \Delta (x)  {\cal R} 
\quad {\rm for \, all} \quad x \in U_q(sl_2) \, .  
\label{universalR}
\ee 
Here $\tau$ denotes a permutation such that 
$\tau \, a \otimes b = b \otimes a$ for $a, b \in U_q(sl_2)$. 

We now introduce some notation of a Hopf algebra.  
Let $x(1), \ldots, x(n)$ be elements of Hopf algebra ${\cal A}$.  
For a given permutation $\sigma$ of ${\cal S}_n$,  
we define its action 
on the tensor product $x(1) \otimes \cdots \otimes x(n)$ as follows: 
\be 
\sigma \circ \left( x(1) \otimes \cdots \otimes x(n) \right) 
=  x(\sigma^{-1} 1) \otimes \cdots x(\sigma^{-1} n) \, . 
\ee
We note that  ${\cal A}$ has coassociativity: 
$\left( \Delta \otimes id \right) \Delta(x) = 
\left( id \otimes \Delta  \right) \Delta(x)$ 
for any element $x$ of  ${\cal A}$. 
We therefore denote it by $\Delta^{(2)}(x)$. We  
define $\Delta^{(n)}(x)$ recursively by  
\be 
\Delta^{(n)}(x) = \left(\Delta^{(n-1)} \otimes id \right) \Delta(x) \quad   
{\rm for}  \, \,  x \in {\cal A} . 
\ee

Let us now introduce the following asymmetric $R$-matrices:   
\be
R^{\pm}(u)  
= \left(
\begin{array}{cccc} 
1 &   0 & 0 & 0 \\
0 &   b(u) & c^{\mp}(u) & 0 \\
0 &   c^{\pm}(u) & b(u) & 0 \\
0 &   0 & 0 & 1
\end{array} 
\right) \, ,  
\ee
where $c^{\pm}(u)$ are defined by 
\be 
c^{\pm}(u)= {\frac {e^{\pm u} \sinh(\eta)} {\sinh(u+\eta)}} \, . 
\label{fg}
\ee
In the spin-1/2 representation of $U_q(sl_2)$, we have 
the following  relations: 
\be 
R_{12}^{+}(u) \Delta (x) = 
\tau \circ \Delta (x)  R^{+}_{12}(u) 
\quad {\rm for \, \,} \quad x= X^{\pm}, K \, .  
\label{eq:RD}
\ee
Here we remark that spectral parameter $u$ is arbitrary and 
independent of $X^{\pm}$ or $K$. 
Similarly as in the symmetric case,  
we define the monodromy matrix $R^{+}_{0, 1 \cdots n}$ 
by $R^{+}_{0, 1 \cdots n}=R^{+}_{0, n} \cdots R^{+}_{0, 1}$,  
and ${\check R}^{+}$ by ${\check R}^{+}_{12}(u) = \Pi_{12} R^{+}(u)$.  
\begin{lem} 
The monodromy matrix expressed in terms of ${\check R}$'s 
commutes with the action of the quantum group $U_q(sl_2)$:   
\be 
{[} {\check R}^{+}_{L-1, L}(\lambda-\xi_L) \cdots 
{\check R}^{+}_{1, 2}(\lambda-\xi_2) 
{\check R}^{+}_{0, 1}(\lambda-\xi_1), \quad \Delta^{(L)}(x) {]} = 0 \, , \quad 
{\rm for} \, {\rm all} \, x \in U_q(sl_2) \, . 
\ee 
Here parameters $\lambda$, $\xi_1, \ldots, \xi_L$ are independent of 
element $x$ of $U_q(sl_2)$. 
\label{lem:Rcheck-q-inv}
\end{lem}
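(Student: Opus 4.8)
The plan is to reduce the $(L{+}1)$-fold statement to the single-$\check R^{+}$ intertwining relation already recorded in (\ref{eq:RD}), and then to propagate that relation through the iterated coproduct $\Delta^{(L)}$. Since $X^{\pm}$ and $K$ generate $U_q(sl_2)$ and $\Delta^{(L)}$ is an algebra homomorphism, it suffices to establish the commutation for $x=X^{\pm},K$: commutation with a fixed operator is preserved under products and linear combinations, so the relation then extends to all $x\in U_q(sl_2)$.

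First I would convert the intertwining relation of (\ref{eq:RD}) into a genuine commutator. That relation reads $R^{+}_{12}(u)\Delta(x)=\tau\circ\Delta(x)\,R^{+}_{12}(u)$. In the two-dimensional representation the opposite coproduct is realized as $\tau\circ\Delta(x)=\Pi_{12}\Delta(x)\Pi_{12}$ (conjugation by the swap sends $\sum x_{(1)}\otimes x_{(2)}$ to $\sum x_{(2)}\otimes x_{(1)}$), so left-multiplying (\ref{eq:RD}) by $\Pi_{12}$ and using $\Pi_{12}^{2}=I$ gives $\check R^{+}_{12}(u)\Delta(x)=\Delta(x)\check R^{+}_{12}(u)$, i.e. $[\check R^{+}_{12}(u),\Delta(x)]=0$ for $x=X^{\pm},K$. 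This is the seed identity: a single $\check R^{+}$ on an adjacent pair commutes with the two-fold coproduct carried by that pair.

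The key step is to lift this to each factor $\check R^{+}_{j,j+1}$ inside $V^{\otimes(L+1)}$. Using coassociativity I would write $\Delta^{(L)}(x)$ by performing the final coproduct at the $j$-th leg, so that it becomes a sum of terms of the form $(\text{operators on slots}\ne j,j+1)\otimes\Delta(\,\cdot\,)_{\{j,j+1\}}$, with $\Delta$ occupying exactly slots $j,j+1$; each such term commutes with $\check R^{+}_{j,j+1}$ by the seed identity together with disjointness of support. Concretely this is transparent from $\Delta^{(L)}(X^{+})=\sum_{i=0}^{L}K^{\otimes i}\otimes X^{+}\otimes 1^{\otimes(L-i)}$, $\Delta^{(L)}(X^{-})=\sum_{i=0}^{L}1^{\otimes i}\otimes X^{-}\otimes (K^{-1})^{\otimes(L-i)}$ and $\Delta^{(L)}(K)=K^{\otimes(L+1)}$: the summands with $i<j$ or $i>j+1$ carry $1\otimes1$, $K\otimes K$, or $K^{-1}\otimes K^{-1}$ on slots $j,j+1$ (all commuting with $\check R^{+}_{j,j+1}$, the last since $\check R^{+}_{j,j+1}$ commutes with the invertible $K\otimes K$), while the two summands $i=j$ and $i=j+1$ combine precisely into $\Delta(X^{\pm})$ on slots $j,j+1$, to which the seed identity applies. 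Hence $[\check R^{+}_{j,j+1},\Delta^{(L)}(x)]=0$ for every $j$.

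Finally, because each factor of the product $\check R^{+}_{L-1,L}(\lambda-\xi_L)\cdots\check R^{+}_{0,1}(\lambda-\xi_1)$ individually commutes with $\Delta^{(L)}(x)$, the whole product does, which is the assertion. I expect the main obstacle to be the bookkeeping in this lift: one must check that the tails of the non-cocommutative coproduct — the strings of $K$ and $K^{-1}$ produced by $\Delta(X^{\pm})$ — align so that, on the support $\{j,j+1\}$ of each $\check R^{+}_{j,j+1}$, the global action restricts exactly to the local coproduct handled by the seed identity. Coassociativity is what guarantees this localization, and as a byproduct it shows that the commutation does not depend on the staircase ordering of the factors.
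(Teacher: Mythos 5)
Your proof is correct, but it follows a different route from the paper's. The paper proves this lemma via the Temperley--Lieb algebra: it decomposes each factor as ${\check R}^{+}_{j,j+1}(u)=I-b(u)U_j^{+}$ (eq.~(\ref{eq:TLdecomp+})), expands the whole product as a sum of words in the $U^{+}_{i}$ (lemma~\ref{lem:expU}), and then invokes $[U^{+}_j,\Delta^{(L)}(x)]=0$ (lemma~\ref{lem:Uq-inv}), i.e.\ the $U_q(sl_2)$-invariance of the Temperley--Lieb generators familiar from quantum Schur--Weyl duality. You instead keep the ${\check R}^{+}_{j,j+1}$ intact and show directly that each one commutes with $\Delta^{(L)}(x)$, by using coassociativity and the explicit form of $\Delta^{(L)}$ on the generators to localize the coproduct onto slots $j,j+1$, where the two-site intertwining relation (\ref{eq:RD}) (converted into a genuine commutator by conjugating with $\Pi_{12}$) applies; the tails $K\otimes K$ and $K^{-1}\otimes K^{-1}$ are handled because they are $\Delta(K)^{\pm1}$, with which ${\check R}^{+}$ commutes. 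Your localization bookkeeping is right: the $i=j$ and $i=j+1$ summands of $\Delta^{(L)}(X^{\pm})$ recombine into $\Delta(X^{\pm})$ on the pair with a common prefix and suffix. What each approach buys: the paper's TL route reduces everything to a single spectral-parameter-free $4\times4$ fact about $U^{+}$ and is what lets them reuse the machinery for the $R^{-}$/affine case in \S3.4; your route avoids the expansion of lemma~\ref{lem:expU} entirely (once each factor commutes, the product does, so expanding is superfluous) and makes the dependence on the seed relation (\ref{eq:RD}) explicit. The one thing to note is that (\ref{eq:RD}) itself is proved in the paper by the same TL decomposition, so your argument is not fully independent of that computation --- but since (\ref{eq:RD}) is a finite matrix identity that can be checked directly, this is a legitimate reduction rather than a circularity.
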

We shall show lemma \ref{lem:Rcheck-q-inv} and eqs. (\ref{eq:RD}) 
through the Temperley-Lieb algebra in \S 3.2.  

Lemma \ref{lem:Rcheck-q-inv}  leads to the following 
symmetry relations of the monodromy matrix $R^{+}_{0, 1 2 \cdots L}$ 
with respect to the quantum group $U_q(sl_2)$: 
\begin{prop} 
Let $\sigma_c$ be a cyclic permutation:  $\sigma_c=(0 1 \cdots L)$. 
Then we have    
\be 
R^{+}_{0, 1 2 \cdots L}(\lambda; \xi_1, \ldots, \xi_L) \, 
\Delta^{(L)}(x) = \sigma_c \circ \Delta^{(L)}(x) \, 
 R^{+}_{0, 1 2 \cdots L}(\lambda; \xi_1, \ldots, \xi_L)  
\quad {\rm for \, \, all} \, \,  x \in U_q(sl_2) 
\label{eq:q-inv(+)}
\ee
Here parameters $\lambda$, $\xi_1, \ldots, \xi_L$ are independent of 
element $x$ of $U_q(sl_2)$. 
\label{prop:q-inv(+)}
\end{prop}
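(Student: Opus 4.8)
The plan is to combine Lemma \ref{lem:Rcheck-q-inv} with the factorization of the monodromy matrix into a permutation operator times a product of ${\check R}$'s, exactly as in the ${\check R}$--$R$ relation of eq. (\ref{eq:product-Rcheck}). First I would record the asymmetric analogue of (\ref{eq:product-Rcheck}): since ${\check R}^{+}_{jk}=\Pi_{jk}R^{+}_{jk}$ and $R^{+}_{0,1\cdots L}=R^{+}_{0,L}\cdots R^{+}_{0,1}$ are defined by formally the same formulas as in the symmetric case, the identical computation gives
\be
R^{+}_{0, 1 2 \cdots L}(\lambda; \xi_1, \ldots, \xi_L) = \Pi^{(0 1 \cdots L)} \, {\check R}^{+}_{L-1, L}(\lambda-\xi_{L}) \cdots {\check R}^{+}_{1 2}(\lambda-\xi_{2}) {\check R}^{+}_{0 1}(\lambda-\xi_{1}) \, .
\ee
Denoting the product of ${\check R}^{+}$'s by $P$, Lemma \ref{lem:Rcheck-q-inv} states precisely that $P\,\Delta^{(L)}(x)=\Delta^{(L)}(x)\,P$ for all $x\in U_q(sl_2)$.

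The second ingredient is the statement that the permutation operator $\Pi^{\sigma}$ realizes the abstract action $\sigma\circ$ on tensor-product operators by conjugation,
\be
\Pi^{\sigma}\,\bigl(y(0)\otimes\cdots\otimes y(L)\bigr) = \bigl(\sigma\circ (y(0)\otimes\cdots\otimes y(L))\bigr)\,\Pi^{\sigma} \, ,
\ee
with $\sigma\circ$ defined as in \S 3.1. I would first verify this for a single transposition $s=(a\,\,a+1)$ --- where conjugation by $\Pi_{a,a+1}$ swaps the operators carried by slots $a$ and $a+1$, reproducing $s\circ$ because $s^{-1}=s$ --- and then extend to a general $\sigma$ using that $\Pi^{\sigma}$ is defined multiplicatively on a reduced word and that the $\Pi_{jk}$ furnish a representation of ${\cal S}_{L+1}$. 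Since $\Delta^{(L)}(x)$ is by definition a finite sum of simple tensors $y(0)\otimes\cdots\otimes y(L)$, linearity then yields $\Pi^{(0 1 \cdots L)}\,\Delta^{(L)}(x)=\bigl(\sigma_c\circ\Delta^{(L)}(x)\bigr)\,\Pi^{(0 1 \cdots L)}$ for $\sigma_c=(0 1 \cdots L)$.

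With these two facts the proposition is immediate:
\bea
R^{+}_{0, 1 2 \cdots L}\,\Delta^{(L)}(x) & = & \Pi^{(0 1 \cdots L)}\,P\,\Delta^{(L)}(x) = \Pi^{(0 1 \cdots L)}\,\Delta^{(L)}(x)\,P \non \\
& = & \bigl(\sigma_c\circ\Delta^{(L)}(x)\bigr)\,\Pi^{(0 1 \cdots L)}\,P = \bigl(\sigma_c\circ\Delta^{(L)}(x)\bigr)\,R^{+}_{0, 1 2 \cdots L} \, ,
\eea
where the first equality is the factorization, the second is Lemma \ref{lem:Rcheck-q-inv}, the third is the conjugation identity, and the last reassembles the factorization. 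I expect the only genuine obstacle to be bookkeeping inside the conjugation identity: one must check that the convention $\sigma\circ(y(0)\otimes\cdots)=y(\sigma^{-1}(0))\otimes\cdots$ matches the direction in which $\Pi^{(0 1 \cdots L)}$ transports tensor factors, and that the reduced word $(0 1 \cdots L)=s_0 s_1\cdots s_{L-1}$ used to build $\Pi^{(0 1 \cdots L)}$ is the one consistent with the ordering of the ${\check R}^{+}$ factors in Lemma \ref{lem:Rcheck-q-inv}. Once the $\sigma$-versus-$\sigma^{-1}$ convention is pinned down, everything else is formal.
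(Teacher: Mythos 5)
Your proposal is correct and follows essentially the same route as the paper: the factorization of $R^{+}_{0,1\cdots L}$ into $\Pi^{(01\cdots L)}$ times the product of ${\check R}^{+}$'s, Lemma \ref{lem:Rcheck-q-inv} for the commutation of that product with $\Delta^{(L)}(x)$, and the conjugation identity $\sigma_c\circ\Delta^{(L)}(x)=\Pi^{\sigma_c}\,\Delta^{(L)}(x)\,(\Pi^{\sigma_c})^{-1}$, which is exactly the relation the paper invokes. The only difference is that you spell out the bookkeeping (the asymmetric analogue of eq.~(\ref{eq:product-Rcheck}) and the $\sigma$-versus-$\sigma^{-1}$ convention) that the paper leaves implicit.
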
 
\begin{proof} 
Making use of lemma \ref{lem:Rcheck-q-inv} we show (\ref{eq:q-inv(+)}) 
 from (\ref{eq:product-Rcheck}) and the following relation:  
\be 
\sigma_c \circ \Delta^{(L)}(x) = \Pi^{\sigma_c} \, \Delta^{(L)}(x) \, 
\left( \Pi^{\sigma_c} \right)^{-1} \, . 
\ee
\end{proof}

%

%
%
\subsection{Derivation in terms of the Temperley-Lieb algebra} 

Let us define $U_j^{\pm}$ for $j=0, 1, \ldots, L-1$, by 
\be 
U^{\pm}_j = \left( 
\begin{array}{cccc} 
0 & 0 & 0 & 0 \\ 
0 & q^{\mp} & -1 & 0 \\ 
0 & -1 & q^{\pm} & 0 \\ 
0 & 0 & 0 & 0  
\end{array} 
\right)_{[j, j+1]} \, . 
\ee
They satisfy the defining relations of the Temperley-Lieb algebra: 
\cite{Baxter-Book}
\bea 
U^{\pm}_{j} U^{\pm}_{j + 1} U^{\pm}_j & = & U^{\pm}_j, \non \\  
U^{\pm}_{j+1} U^{\pm}_{j } U^{\pm}_{j+1} & = & U^{\pm}_j, 
\quad {\rm for} \, j=0, 1, \ldots, L-2, 
\non \\ 
\left( U^{\pm}_j \right)^2 
& = &  (q+q^{-1}) \, U^{\pm}_j \quad {\rm for} 
\, j=0, 1, \ldots, L-1, 
\non \\  
U^{\pm}_j U^{\pm}_k & = & U^{\pm}_k U^{\pm}_j 
\quad {\rm for} \, \, |j-k| > 1\, .  
\eea

Let us now show commutation relations (\ref{eq:q-inv(+)}), 
making use of the Temperley-Lieb algebra. 
The operator ${\check R}_{j, j+1}^{+}(u)$ is decomposed  
in terms of the generators of the Temperley-Lieb algebra as follows 
\cite{Baxterization}. 
\be 
{\check R}_{j,j+1}^{+}(u) = I - b(u) U_{j}^{+} \, . 
\label{eq:TLdecomp+}
\ee 
\begin{lem} The monodromy matrix of the six-vertex model 
is expressed in terms of the generators of the Temperley-Lieb algebra 
as follows.  
\bea 
& & {\check R}^{+}_{L-1, L}(\lambda-\xi_L) \cdots 
{\check R}^{+}_{1, 2}(\lambda-\xi_2) 
{\check R}^{+}_{0, 1}(\lambda-\xi_1) \non \\ 
& & =  \sum_{k=0}^{L} (-1)^k \sum_{0 \le i_1 < \cdots < i_k < L} 
\, \left( \prod_{j=1}^{k} b(\lambda-\xi_{i_j}) \right) 
U^{+}_{i_k}  \cdots U^{+}_{i_2} U^{+}_{i_1} \, .  
\eea
\label{lem:expU}
\end{lem}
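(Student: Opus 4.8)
The plan is to prove the identity by substituting the Temperley--Lieb decomposition (\ref{eq:TLdecomp+}) into each of the $L$ factors of the product and then expanding the resulting ordered product of binomials. By (\ref{eq:TLdecomp+}) the factor sitting at the pair $(j,j+1)$ reads
\be
{\check R}^{+}_{j, j+1}(\lambda - \xi_{j+1}) = I - b(\lambda-\xi_{j+1}) \, U^{+}_j \, ,
\ee
so that the whole monodromy product becomes the ordered product
\be
\prod_{j=L-1}^{0} \left( I - b(\lambda-\xi_{j+1}) \, U^{+}_j \right) \, ,
\ee
in which the leftmost factor carries $j=L-1$ and the rightmost carries $j=0$. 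Thus the entire content of the lemma is carried by the single-factor identity (\ref{eq:TLdecomp+}); what remains is purely combinatorial.

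Next I would expand this product. Each of its $2^{L}$ terms is obtained by choosing, independently in every factor, either the $I$-summand or the $-b\,U^{+}$-summand, and such a choice is encoded by the subset $S=\{i_1<i_2<\cdots<i_k\}$ of $\{0,1,\ldots,L-1\}$ of those factors from which the $U^{+}$-summand is taken. The corresponding term carries the sign $(-1)^{k}$ times the product of the $b$-factors attached to the chosen generators, and the operator $U^{+}_{i_k}\cdots U^{+}_{i_2}U^{+}_{i_1}$: since the factors are ordered by decreasing $j$ from left to right, and since the $I$-summands taken in the remaining factors are absorbed without any reordering, the surviving generators appear in decreasing order of index, with the largest index $U^{+}_{i_k}$ on the far left. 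Grouping the subsets $S$ by their cardinality $k$ then reproduces the claimed double sum (after the evident relabeling of the inhomogeneity indices). In particular the $k=0$ term is $I$, and \emph{no} Temperley--Lieb relation is needed, because each generator $U^{+}_j$ occurs in exactly one of the factors.

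A fully rigorous version of the same bookkeeping is an induction on $L$: peel off the leftmost factor ${\check R}^{+}_{L-1, L}(\lambda-\xi_L)=I-b(\lambda-\xi_L)U^{+}_{L-1}$ and apply the induction hypothesis to the remaining product, which involves only $U^{+}_0,\ldots,U^{+}_{L-2}$; the $I$-part reproduces the monomials with largest index below $L-1$, while the $U^{+}_{L-1}$-part supplies exactly the monomials carrying $U^{+}_{L-1}$ on the far left. The only point that requires attention is the ordering of the generators within each monomial, and I do not expect a genuine obstacle there: distinct factors carry distinct generators $U^{+}_j$, so expanding the product never forces two noncommuting generators past one another, and the combinatorics reduces to that of an ordinary ordered binomial expansion.
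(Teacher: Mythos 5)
Your proof is correct and follows exactly the route the paper intends: the paper states Lemma \ref{lem:expU} immediately after the decomposition (\ref{eq:TLdecomp+}) and leaves the proof implicit, the content being precisely the binomial expansion of the ordered product $\prod_{j=L-1}^{0}\bigl(I-b(\lambda-\xi_{j+1})U^{+}_{j}\bigr)$, where, as you observe, no Temperley--Lieb relation is needed since each generator occurs in exactly one factor. Your parenthetical about relabeling the inhomogeneity indices is well taken, since the expansion naturally attaches $b(\lambda-\xi_{i+1})$ to $U^{+}_{i}$ while the stated formula writes $b(\lambda-\xi_{i_j})$; this is a harmless off-by-one in the paper's notation rather than a gap in your argument.
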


\begin{lem}
The generators $U^{+}_j$ commute with the generators of 
$U_q(sl_2)$.  For $x=X^{\pm}, K$ 
and for $j=0, 1, \ldots, L-1$, we have 
in the tensor-product representation  
\be 
{\big[} U^{+}_j, \Delta^{(L)}(x) {\big]} = 0 \, . 
\label{eq:Uq-inv}
\ee
\label{lem:Uq-inv}
\end{lem}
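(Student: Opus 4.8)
The plan is to reduce the $(L+1)$-site commutator to a single two-site identity, and then to recognise $U_j^{+}$ on the active pair as a multiple of the projector onto the $q$-deformed singlet, which is an intertwiner and hence commutes with the $U_q(sl_2)$-action.

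First I would write the iterated coproduct explicitly. Coassociativity together with the recursion for $\Delta^{(n)}$ gives
\be
\Delta^{(L)}(K) = K^{\otimes (L+1)}, \quad
\Delta^{(L)}(X^{+}) = \sum_{i=0}^{L} K^{\otimes i} \otimes X^{+} \otimes 1^{\otimes (L-i)}, \quad
\Delta^{(L)}(X^{-}) = \sum_{i=0}^{L} 1^{\otimes i} \otimes X^{-} \otimes (K^{-1})^{\otimes (L-i)},
\ee
the tensor factors being labelled by the sites $0,1,\ldots,L$. Since $U_j^{+}$ acts as the identity off the pair $(j,j+1)$, I would split each sum according to the position $i$ of the single $X^{\pm}$. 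For $x=K$ the claim is simply $[U_j^{+}, K \otimes K]=0$ on that pair. For $x=X^{+}$: the terms with $i<j$ carry the identity on sites $j,j+1$ and commute trivially; the terms with $i>j+1$ carry $K \otimes K$ there and commute once the $K$-case is known; and the two remaining terms $i=j$ and $i=j+1$ share the same factors off the pair and add up, on the pair, to $X^{+}\otimes 1 + K \otimes X^{+} = \Delta(X^{+})$. Hence the whole commutator collapses to $(\text{left } K\text{'s}) \otimes [U_j^{+}, \Delta(X^{+})]_{(j,j+1)} \otimes (\text{right } 1\text{'s})$, and similarly for $X^{-}$ with $K^{-1}$'s on the right. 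This reduces everything to the single statement $[U^{+}, \Delta(x)]=0$ on $V \otimes V$ for $x=X^{\pm},K$.

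For the two-site identity I would invoke the $q$-analogue Clebsch--Gordan decomposition $V \otimes V \cong V_{(1)} \oplus V_{(0)}$ into the spin-1 and spin-0 irreducibles under $\Delta$. With basis vectors $v_1, v_2$ of $V$ satisfying $K v_1 = q v_1$, $K v_2 = q^{-1} v_2$, $X^{+} v_2 = v_1$, $X^{-} v_1 = v_2$, the singlet line is spanned by $w = v_1 \otimes v_2 - q\, v_2 \otimes v_1$; a one-line check gives $\Delta(X^{\pm}) w = 0$ and $\Delta(K) w = w$, so $w$ generates $V_{(0)}$ and its complement is the triplet $V_{(1)}$, spanned by $v_1 \otimes v_1$, $v_2 \otimes v_2$ and $v_1 \otimes v_2 + q^{-1} v_2 \otimes v_1$. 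Evaluating the matrix of $U^{+}$ on these vectors shows $U^{+} w = (q+q^{-1}) w$ while $U^{+}$ annihilates all of $V_{(1)}$, so $U^{+} = (q+q^{-1}) P_{0}$ with $P_{0}$ the projector onto $V_{(0)}$ along $V_{(1)}$. Because $V_{(0)}$ and $V_{(1)}$ are non-isomorphic $U_q(sl_2)$-submodules, $P_{0}$ is an intertwiner, whence $[P_{0}, \Delta(x)]=0$ and therefore $[U^{+}, \Delta(x)]=0$. Together with the reduction this yields $[U_j^{+}, \Delta^{(L)}(x)]=0$.

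The main obstacle is bookkeeping rather than ideas: in the reduction one must track carefully which tensor slots carry $K$, $K^{-1}$ or $1$ as the index $i$ sweeps past the active pair, and one must dispose of the $x=K$ case first, since it is precisely what kills the $i>j+1$ terms for $X^{+}$ and the $i<j$ terms for $X^{-}$. Should one prefer to avoid the representation-theoretic input, the two-site identity can instead be verified by direct multiplication of the $4 \times 4$ matrices for $U^{+}$ and for $\Delta(X^{\pm}), \Delta(K)$; this is routine but far less transparent than identifying $U^{+}$ with the singlet projector.
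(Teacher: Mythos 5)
Your proof is correct. For the record, the paper does not actually write out a proof of this lemma: it only remarks that equivalent relations appear in the reference [DFM], and for the companion statement about $U^{-}_j$ (lemma \ref{lem:Uq-inv(-)}) it says merely that the two-site commutator $[U^{-}, \varphi_a\otimes\varphi_a(\Delta(X_0^{\pm}))]=0$ is verified ``through an explicit calculation'' and then lifted to $L+1$ sites. Your argument follows the same two-step skeleton --- localize the commutator to the active pair $(j,j+1)$, then settle the two-site identity --- but you make both steps explicit. The localization is handled correctly: you dispose of $x=K$ first, note that the off-pair slots carry $K$, $K^{-1}$ or $1$ depending on where the single $X^{\pm}$ sits, and observe that the two surviving terms $i=j,\,j+1$ reassemble into $\Delta(x)$ on the pair. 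For the two-site identity you replace the paper's brute-force $4\times4$ multiplication by the observation that $U^{+}=(q+q^{-1})P_{0}$ with $P_{0}$ the projector onto the $q$-singlet $w=v_1\otimes v_2-q\,v_2\otimes v_1$ along the triplet; since both summands of $V\otimes V\cong V_{(1)}\oplus V_{(0)}$ are submodules, $P_{0}$ intertwines the coproduct action. This is cleaner and also explains \emph{why} the Temperley--Lieb generator is quantum-group invariant. The only caveat worth flagging is that the decomposition $V\otimes V=V_{(1)}\oplus V_{(0)}$ degenerates at $q^{2}=-1$ (where $q+q^{-1}=0$ and the singlet and triplet middle vectors coincide), so for such $q$ you must fall back on the direct matrix check you mention at the end; the paper works with generic $q$, so this does not affect the result.
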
 

\par \noindent  
{\it Proof of proposition \ref{prop:q-inv(+)}. }
From lemmas \ref{lem:expU} and \ref{lem:Uq-inv} 
we have lemma \ref{lem:Rcheck-q-inv}, which is equivalent to 
proposition  \ref{prop:q-inv(+)}.  

We now show that in the limit of taking $u$ to $- \infty$,    
${\hat R}^{+}(u)$ is equivalent to  
the spin-1/2 matrix representation of the universal $R$-matrix ${\cal R}$ 
of $U_q(sl(2))$. 
An explicit expression of ${\cal R}$ is given by  
\be 
{\cal R} = q^{- {\frac 1 2 } H \otimes H} \exp_q \left(- (q-q^{-1}) 
K^{-1} X^{+} \otimes X^{-} K \right) 
\label{eq:universal-R}
\ee
where $\exp_q x$ denotes the following series: 
\be 
\exp_q x = \sum_{n=0}^{\infty} {\frac {q^{-n(n-1)/2}} {[n]_q!}} x^n \, .  
\ee
Here $q$ is generic. We recall that
 $[n]_q$ denotes the $q$-integer of an integer $n$: 
$[n]_q=(q^n -q^{-n})/(q-q^{-1})$.
Putting $X^{+} = e^{1,2}$, $X^{-} = e^{2,1}$ 
 and $K= {\rm diag}(q, q^{-1})$ in the series 
(\ref{eq:universal-R}),  
we have the following matrix representation. 
\be
R_{{\frac 1 2}, {\frac 1 2}}  
= \left(
\begin{array}{cccc} 
q^{-1/2} &   0 & 0 & 0 \\
0 &   q^{1/2} & -q^{1/2}(q-q^{-1}) & 0 \\
0 &   0 & q^{1/2} & 0 \\
0 &   0 & 0 & q^{-1/2}
\end{array} 
\right) \, .  
\ee
Thus, for arbitrary $u$, we have the following:   
\be 
R^{+}_{12}(u) = c^{+}(u) \Pi_{12} + 
q^{-1/2} b(u) R_{{\frac 1 2}, {\frac 1 2}}  
\ee
Therefore, we have 
${R}^{+}(-\infty)= q^{1/2} R_{{\frac 1 2}, {\frac 1 2}} $.

We remark that some relations equivalent to (\ref{eq:Uq-inv}) have been 
shown in association with the $sl(2)$ loop algebra symmetry of 
the XXZ spin chain at roots of unity \cite{DFM}.

%
%
%
\subsection{Gauge transformations}

Let us introduce operators ${\Phi}_j$ 
with arbitrary parameters $\phi_j$ for $j=0, 1, \ldots, L$ as follows:  
\be 
{\Phi}_j = \left( 
\begin{array}{cc}  
1 & 0 \\
0 & e^{\phi_j} 
\end{array}
\right)_{[j]} =  
I^{\otimes (j)} \otimes \left( 
\begin{array}{cc}  
1 & 0 \\
0 & e^{\phi_j} 
\end{array}
\right) \otimes I^{\otimes (L-j)}  . 
\label{eq:Phi}
\ee
In terms of ${\chi}_{jk}= {\Phi}_j {\Phi}_k$,  
we define a similarity transformation on the $R$-matrix by  
\be 
R_{jk}^{\chi} = {\chi}_{jk} R_{jk} {\chi}_{jk}^{-1} 
\ee
Explicitly, the following two matrix elements are transformed. 
\be 
\left( R_{jk}^{\chi} \right)^{21}_{12} = 
c(\lambda_j, \lambda_k) e^{\phi_j-\phi_k} \, , \quad 
\left( R_{jk}^{\chi} \right)^{12}_{21} = 
c(\lambda_j, \lambda_k) e^{-\phi_j+\phi_k} \, . 
\ee

We now put $\phi_j= \lambda_j$ 
in eq. (\ref{eq:Phi}) for $j=0, 1, \ldots, L$.  
For $j,k=0, 1, \ldots, L$, we have  
\be 
R_{jk}^{\pm}(\lambda_j, \lambda_k) = \left( \chi_{jk} \right)^{\pm 1}  
\, R_{jk}(\lambda_j, \lambda_k) \, 
\left( \chi_{jk} \right)^{\mp 1}  \, .  
\ee
Thus, the asymmetric $R$-matrices $R_{12}^{\pm}(\lambda_1, \lambda_2)$  
are derived from the symmetric one through the gauge transformation 
$\chi_{jk}$.  

 For the monodromy matrix, 
in terms of the inhomogeneous parameters, 
$\xi_1, \ldots, \xi_L$, we put $\lambda_j=\xi_j$ for $j=1, \ldots, L$.   
We define ${\chi}_{0 1 2 \cdots L}$ by 
${\chi}_{0 1 2 \cdots L}= \Phi_0 \Phi_1 \cdots \Phi_L$. 
Then, the asymmetric monodromy matrices are transformed into the 
symmetric one as follows.   
\be 
R_{0, 1 2 \cdots L}^{\pm} 
= \left( \chi_{0 1 2 \cdots L} \right)^{\pm 1} \,  
R_{0, 1 2 \cdots L} \left( \chi_{0 1 2 \cdots L} \right)^{\mp 1} \, . 
\ee

We note that the asymmetric $R$-matrices 
${\check R}^{\pm}_{j, j+1}(u)$ are derived from the 
symmetric $R$-matrix through the gauge transformations, 
and they are related to  the Jones polynomial.  \cite{AW}

%
%
\subsection{Affine quantum group symmetry}

The affine quantum algebra $U_q(\widehat{sl_2})$ 
is an associative algebra over ${\bf C}$ generated by  
$X_i^{\pm}, K_i^{\pm}$ for $i=0,1$ with the following relations: 
\bea 
K_i K_i^{-1} & = & K^{-1}_i K_i = 1 \, , \quad 
K_i X_i^{\pm} K_i^{-1}  =  q^{\pm 2} X_i^{\pm} \, ,  \quad 
K_i X_j^{\pm} K_i^{-1}  =  q^{\mp 2} X_j^{\pm}  
\quad (i \ne j) \, , \non \\
{[} X_i^{+}, X_j^{-} {]} & = & \delta_{i,j} \, 
{\frac   {K_i - K_i^{-1}}  {q- q^{-1}} } \, ,     \non \\
(X_i^{\pm})^{3} X_j^{\pm} & - & [3]_q \, (X_i^{\pm})^{2} X_j^{\pm} X_i^{\pm}   
+ [3]_q \, X_i^{\pm} X_j^{\pm} (X_i^{\pm})^2 -   
 X_j^{\pm} (X_i^{\pm})^3 = 0 \quad (i \ne j) \, . 
\label{ defrl}
\eea
The algebra $U_q(\widehat{sl_2})$ is also a Hopf algebra over ${\bf C}$ 
with comultiplication
\bea 
\Delta (X_i^{+}) & = & X_i^{+} \otimes 1 + K_i \otimes X_i^{+}  \, , 
 \quad 
\Delta (X_i^{-})  =  X_i^{-} \otimes K_i^{-1} + 1 \otimes X_i^{-} \, ,  
\non \\
\Delta(K_i) & = & K_i \otimes K_i  \, , 
\eea 
and antipode:  
$S(K_i)=K_i^{-1} \, , S(X_i)= - K_i^{-1} X_i^{+} \, , 
S(X_i^{-}) = - X_i^{-} K_i $.  

%
%

We now introduce evaluation representations for $U_q(\widehat{sl_2})$ 
\cite{Jimbo-QG}. 
For a given complex number $a$ there is a homomorphism 
of algebras  $\varphi_a$: $U_q(\widehat{sl_2}) \rightarrow U_q({sl_2})$  
such that  
\bea 
\varphi_a(X_0^{\pm}) & = & \exp(\pm a) \, 
X^{\mp} \, , 
\quad \varphi_a(K_0)=K^{-1} \, ,  \non \\ 
\varphi_a(X_1^{\pm}) & = & X^{\pm} \, , \quad \varphi_a(K_1) =K \, .  
\label{eq:eval-rep}
\eea
We denote by $(\pi, V)$ a representation of an algebra ${\cal A}$ 
such that $\pi(x)$ give linear maps on vector space $V$ 
for $x \in {\cal A}$. 
For a given finite-dimensional representation $(\pi_V, V)$ of $U_q(sl_2)$ 
 we have a finite-dimensional representation $(\pi_{V(a)}, V(a))$ 
of $U_q(\widehat{sl_2})$  through homomorphism $\varphi_a$, i.e. 
$\pi_{V(a)}(x)= \pi_V(\varphi_a(x))$ for $x \in U_q(\widehat{sl_2})$. 
We call $(\pi_{V(a)}, V(a))$ or $V(a)$ 
the {\it evaluation representation} of $V$ and  nonzero parameter $a$ 
the {\it evaluation parameter} of $V(a)$.  
If $V$ is $(2s+1)$-dimensional, 
then we also denote it  by $V^{(2s)}(a)$. 
Hereafter we express $2s$ by an integer $\ell$.

Similarly as (\ref{eq:TLdecomp+}), we have the following decomposition: 
\be 
{\check R}_{j,j+1}^{-}(u) = I - b(u) U_{j}^{-} \, , 
\quad {\rm for} \, \,  j=0, 1, \ldots, L-1 .
\label{eq:TLdecomp(-)}
\ee 
\begin{lem}
Generators $U^{-}_j$ commute with $X_0^{\pm}$ and $K_0$ 
of $U_q(\widehat{ sl_2})$ in the tensor-product representation 
$V^{(1)}(a_0) \otimes \cdots \otimes V^{(1)}(a_L)$ with 
$a_0 = a_1 = \cdots = a_L =a$. 
 For $j=0, 1, \ldots, L-1$, we have  
\be 
{\big[} U^{-}_j, \, \, \varphi_a^{\otimes (L+1)} 
\left( \Delta^{(L)}(x) \right) {\big]} = 0 \,  \quad 
\left( x=X_0^{\pm}, \,  K_0. \right)
\label{UD-}
\ee
\label{lem:Uq-inv(-)}
\end{lem}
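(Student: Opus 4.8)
The plan is to push the iterated coproduct through the evaluation homomorphism $\varphi_a$, reduce the resulting $(L+1)$-site commutator to a single two-site identity on the sites $j,j+1$, and then obtain that identity from the already-established $U^{+}$ relation by conjugating with the flip operator $\Pi$.

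First I would compute $\varphi_a^{\otimes(L+1)}\bigl(\Delta^{(L)}(x)\bigr)$ explicitly. From the affine comultiplication and the recursion $\Delta^{(n)}(x)=(\Delta^{(n-1)}\otimes\mathrm{id})\Delta(x)$, a routine induction gives
\[
\Delta^{(L)}(X_0^{+})=\sum_{m=0}^{L}K_0^{\otimes m}\otimes X_0^{+}\otimes 1^{\otimes(L-m)},\qquad
\Delta^{(L)}(X_0^{-})=\sum_{m=0}^{L}1^{\otimes m}\otimes X_0^{-}\otimes (K_0^{-1})^{\otimes(L-m)},
\]
together with $\Delta^{(L)}(K_0)=K_0^{\otimes(L+1)}$. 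Applying $\varphi_a^{\otimes(L+1)}$ and using $\varphi_a(X_0^{+})=e^{a}X^{-}$, $\varphi_a(X_0^{-})=e^{-a}X^{+}$, $\varphi_a(K_0^{\pm1})=K^{\mp1}$ from (\ref{eq:eval-rep}), the equal-parameter hypothesis $a_0=\dots=a_L=a$ lets the scalar $e^{\pm a}$ factor out uniformly, yielding
\[
\varphi_a^{\otimes(L+1)}\bigl(\Delta^{(L)}(X_0^{+})\bigr)=e^{a}\sum_{m=0}^{L}(K^{-1})^{\otimes m}\otimes X^{-}\otimes 1^{\otimes(L-m)},
\]
and analogously $e^{-a}\sum_m 1^{\otimes m}\otimes X^{+}\otimes K^{\otimes(L-m)}$ and $(K^{-1})^{\otimes(L+1)}$. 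These are precisely the iterated \emph{opposite} coproducts $\Delta^{\mathrm{op}}$ of the $U_q(sl_2)$ elements $e^{a}X^{-}$, $e^{-a}X^{+}$ and $K^{-1}$; I would stress that the equal-parameter condition is exactly what makes the coefficients uniform, and is therefore essential here.

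Next I would use the locality of $U^{-}_j$, which acts as the identity off the sites $j,j+1$. Its two-site block is supported on the middle weight subspace, so it commutes with $1\otimes1$, $K\otimes K$ and $K^{-1}\otimes K^{-1}$, each of which restricts to the identity there. Consequently, in every one of the three sums above, each summand whose active factor $X^{\mp}$ sits outside $\{j,j+1\}$ contributes zero to the commutator with $U^{-}_j$, and only the terms $m=j$ and $m=j+1$ survive. Factoring out the common spectators on the other sites, the commutators collapse to the two-site identities
\[
\bigl[U^{-},\,X^{-}\otimes 1+K^{-1}\otimes X^{-}\bigr]=0,\qquad
\bigl[U^{-},\,X^{+}\otimes K+1\otimes X^{+}\bigr]=0,
\]
together with the trivial $[U^{-},K^{-1}\otimes K^{-1}]=0$, where $U^{-}$ now denotes the $4\times4$ block acting on the pair $(j,j+1)$.

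Finally I would derive these from the $U^{+}$ case. The two operators on the right are exactly $\Delta^{\mathrm{op}}(X^{-})$ and $\Delta^{\mathrm{op}}(X^{+})$, and a direct $4\times4$ check shows that the flip interchanges the generators, $\Pi\,U^{+}\,\Pi=U^{-}$, while $\Pi\,\Delta(x)\,\Pi=\Delta^{\mathrm{op}}(x)$ for the matrices $x=X^{\pm},K$. Taking the two-site case ($L=1$) of lemma \ref{lem:Uq-inv}, namely $[U^{+},\Delta(x)]=0$, and conjugating by $\Pi$ yields $[U^{-},\Delta^{\mathrm{op}}(x)]=0$, which are precisely the needed identities. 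I expect the one genuine obstacle to be the bookkeeping of the first step: one must track carefully on which side of $X^{\mp}$ the $K$-factors accumulate after passing through $\varphi_a$, since $\varphi_a$ reverses raising and lowering generators and thereby turns the ordinary iterated coproduct into the opposite one. Once that opposite-coproduct structure is correctly pinned down, the flip identity $\Pi U^{+}\Pi=U^{-}$ does all the real work and the remainder is routine.
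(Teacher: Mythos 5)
Your proof is correct, and it shares the paper's overall skeleton --- reduce the $(L+1)$-site commutator to a single two-site identity on sites $j,j+1$ and then verify that identity --- but it diverges at the key step. The paper's proof of eq.~(\ref{eq:UD2-}) is simply ``through an explicit calculation'': the authors verify $[U^-,\varphi_a\otimes\varphi_a(\Delta(X_0^\pm))]=0$ by a direct $4\times4$ matrix computation, and the reduction from $L+1$ sites to two sites is left implicit. You instead (i) make the reduction explicit by writing out $\varphi_a^{\otimes(L+1)}(\Delta^{(L)}(x))$ and observing that the equal evaluation parameters turn it into a scalar multiple of the iterated \emph{opposite} coproduct of $X^\mp$ or $K^{-1}$, so that locality of $U^-_j$ and the triviality of $K^{\pm1}\otimes K^{\pm1}$ on the middle weight subspace kill all but the $m=j,\,j+1$ terms; and (ii) obtain the surviving two-site identities $[U^-,\Delta^{\mathrm{op}}(X^\pm)]=0$ not by computation but by conjugating the $L=1$ case of lemma \ref{lem:Uq-inv} with the flip $\Pi$, using $\Pi U^+\Pi=U^-$ and $\Pi\Delta(x)\Pi=\tau\circ\Delta(x)$. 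Both identities check out ($\Pi$ exchanges the $(2,2)$ and $(3,3)$ entries $q^{\mp1}\leftrightarrow q^{\pm1}$ of $U^\pm$), so your route is sound; what it buys is that no new matrix computation is needed beyond the one already done for $U^+$, and it makes visible exactly where the hypothesis $a_0=\cdots=a_L$ enters, namely in factoring the scalars $e^{\pm a}$ out uniformly so that the opposite-coproduct structure emerges. The only caveat is that your argument inherits whatever proof one accepts for lemma \ref{lem:Uq-inv} itself, which the paper also states without a written-out verification.
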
 
\begin{proof} Let us denote by $U^{-}$ the $4 \times 4$ matrix given by  
the decomposition: 
${\check R}(u) = I - b(u) U^{-}$. 
Through an explicit calculation we show that 
$U^{-}$ and $\varphi_a \otimes \varphi_b 
\left( \Delta(X_0^{\pm}) \right)$
commute if $a=b$: 
\be 
{[} 
U^{-}, \,  \varphi_a \otimes \varphi_a 
\left( \Delta(X_0^{\pm}) \right)  {]} = 0 \, .  \label{eq:UD2-} 
\ee
We derive (\ref{UD-}) through (\ref{eq:UD2-}). 
\end{proof}

In the spin-1/2 representation of $U_q(sl_2)$, we thus have 
the following  relations: 
\be 
R_{12}^{-}(u) \, \, \varphi_a^{\otimes 2} \Delta (x) = 
 \varphi_a^{\otimes 2} \left( \tau \circ \Delta (x) \right) 
\, \,  R^{-}_{12}(u) 
\quad {\rm for \, \,} \quad x= X_0^{\pm}, K_0 \, .  
\label{eq:RD-}
\ee
Here we note that $u$ is arbitrary and independent of $X_0^{\pm}, K_0$. 

Similarly as lemmas 
\ref{lem:expU} and \ref{lem:Uq-inv}, 
we have from lemma \ref{lem:Uq-inv(-)} the quantum-group symmetry of 
the monodromy matrix $R^{-}_{0, 1 2 \cdots L}$: 
\begin{prop} 
Let $\sigma_c$ be a cyclic permutation: 
 $\sigma_c= (0 1 \cdots L)$. In the evaluation representation 
(\ref{eq:eval-rep}) we have, for $x=X_0^{\pm}, K_0$, the following:     
\be 
R^{-}_{0, 1 2 \cdots L}(\lambda; \xi_1, \ldots, \xi_L) \, 
 \varphi_a^{\otimes (L+1)} \left( \Delta^{(L)}(x) \right) = 
 \varphi_a^{\otimes (L+1)} \left( \sigma_c \circ \Delta^{(L)}(x)  \right) \, 
 R^{-}_{0, 1 2 \cdots L}(\lambda; \xi_1, \ldots, \xi_L)  \, .  
\label{eq:q-inv(-)}
\ee
\label{prop:q-inv(-)}
Here parameters $\lambda$, $\xi_1, \ldots, \xi_L$ are arbitrary and independent of $x=X_0^{\pm}, K_0$. 
\end{prop}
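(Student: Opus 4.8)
The plan is to establish Proposition \ref{prop:q-inv(-)} by the same three-step strategy that already proved Proposition \ref{prop:q-inv(+)}: first reduce the cyclic-permutation statement for the full monodromy matrix to a commutation statement for the individual Temperley--Lieb generators $U^{-}_j$, then use Lemma \ref{lem:Uq-inv(-)} to supply that commutation, and finally conjugate by the permutation operator $\Pi^{\sigma_c}$. The only genuinely new ingredient compared to the $U_q(sl_2)$ case is the presence of the evaluation homomorphism $\varphi_a$, so I would carry the superscript ``$\varphi_a^{\otimes(L+1)}$'' through every formula unchanged and treat $\varphi_a(\Delta^{(L)}(x))$ exactly as the earlier argument treated $\Delta^{(L)}(x)$.

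First I would assemble the analogue of Lemma \ref{lem:expU}. Since the decomposition (\ref{eq:TLdecomp(-)}) has exactly the same form as (\ref{eq:TLdecomp+}) with $U^{+}_j$ replaced by $U^{-}_j$, the identical expansion goes through verbatim, giving
\be
{\check R}^{-}_{L-1, L}(\lambda-\xi_L) \cdots
{\check R}^{-}_{0, 1}(\lambda-\xi_1)
=  \sum_{k=0}^{L} (-1)^k \sum_{0 \le i_1 < \cdots < i_k < L}
\left( \prod_{j=1}^{k} b(\lambda-\xi_{i_j}) \right)
U^{-}_{i_k}  \cdots U^{-}_{i_1} \, .
\ee
Lemma \ref{lem:Uq-inv(-)} asserts that each $U^{-}_j$ commutes with $\varphi_a^{\otimes(L+1)}(\Delta^{(L)}(x))$ for $x=X_0^{\pm},K_0$ (in the homogeneous evaluation representation $a_0=\cdots=a_L=a$). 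Because $\varphi_a^{\otimes(L+1)}(\Delta^{(L)}(x))$ commutes with every single $U^{-}_{i}$ appearing in the sum, it commutes with every product $U^{-}_{i_k}\cdots U^{-}_{i_1}$ and hence with the whole right-hand side. This yields the $(-)$-analogue of Lemma \ref{lem:Rcheck-q-inv}:
\be
{\big[} {\check R}^{-}_{L-1, L}(\lambda-\xi_L) \cdots
{\check R}^{-}_{0, 1}(\lambda-\xi_1), \,
\varphi_a^{\otimes(L+1)}\left(\Delta^{(L)}(x)\right) {\big]} = 0 \, ,
\quad x=X_0^{\pm}, K_0 \, .
\ee

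To convert this into the stated cyclic-permutation relation (\ref{eq:q-inv(-)}) I would invoke the product formula (\ref{eq:product-Rcheck}), which expresses $R^{-}_{0,12\cdots L}$ as $\Pi^{\sigma_c}$ times precisely the ordered product of ${\check R}^{-}$'s just treated, and the intertwining identity from the proof of Proposition \ref{prop:q-inv(+)},
\be
\sigma_c \circ \Delta^{(L)}(x) = \Pi^{\sigma_c} \, \Delta^{(L)}(x) \,
\left( \Pi^{\sigma_c} \right)^{-1} \, .
\ee
Applying $\varphi_a^{\otimes(L+1)}$ to this identity and combining it with the vanishing commutator moves $\varphi_a^{\otimes(L+1)}(\Delta^{(L)}(x))$ past $R^{-}_{0,12\cdots L}$ at the cost of conjugating by $\Pi^{\sigma_c}$, which is exactly (\ref{eq:q-inv(-)}). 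I expect the main obstacle to be purely bookkeeping rather than conceptual: one must verify that $\varphi_a$ is applied consistently in all $L+1$ tensor factors and that $\Pi^{\sigma_c}$ commutes correctly with $\varphi_a^{\otimes(L+1)}$ (it does, since $\Pi^{\sigma_c}$ only permutes tensor slots on which the \emph{same} homomorphism $\varphi_a$ acts, using the homogeneity assumption $a_0=\cdots=a_L=a$). The real content has already been discharged in Lemma \ref{lem:Uq-inv(-)}, whose proof reduces the whole matter to the single two-site identity (\ref{eq:UD2-}); everything downstream is the same combinatorial assembly used for the $(+)$ case.
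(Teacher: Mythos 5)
Your proposal is correct and follows essentially the same route as the paper: the text establishes Proposition \ref{prop:q-inv(-)} by remarking that, ``similarly as lemmas \ref{lem:expU} and \ref{lem:Uq-inv},'' the Temperley--Lieb expansion of the product of ${\check R}^{-}$'s combined with Lemma \ref{lem:Uq-inv(-)} and the conjugation identity $\sigma_c \circ \Delta^{(L)}(x) = \Pi^{\sigma_c}\,\Delta^{(L)}(x)\,(\Pi^{\sigma_c})^{-1}$ yields the result. Your only addition is to spell out explicitly the bookkeeping of $\varphi_a^{\otimes(L+1)}$ and its compatibility with $\Pi^{\sigma_c}$, which the paper leaves implicit.
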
 

Let us now make a summary of the symmetry relations of $R^{+}_{12}$. 
Here we recall that 
$R^{+}_{12}(\lambda_1, \lambda_2) \in End(V(\lambda_1) \otimes V(\lambda_2))$. 
For simplicity, we put $a=0$ in (\ref{eq:q-inv(-)}). 
Combining (\ref{eq:RD}) and (\ref{eq:RD-})
We have the following relations:  
\bea 
R^{+}_{12}(\lambda_1, \lambda_2)  
\, \, \varphi_{0}^{\otimes 2} \left( \Delta(X_1^{\pm}) \right) 
& = &  \varphi_{0}^{\otimes 2} 
\left( \tau \circ \Delta(X_1^{\pm}) \right) \, \, 
R^{+}_{12}(\lambda_1, \lambda_2)   \, , 
\non \\
R^{+}_{12}(\lambda_1, \lambda_2) \, (\chi_{12})^2  
\varphi_{0}^{\otimes 2} \left( \Delta(X_0^{\pm}) \right) (\chi_{12})^{-2} 
& = & (\chi_{12})^2   \varphi_{0}^{\otimes 2} \left( 
\tau \circ \Delta(X_0^{\pm}) \right) (\chi_{12})^{-2} \, 
R^{+}_{12}(\lambda_1, \lambda_2)  \, ,   \non \\
R^{+}_{12}(\lambda_1, \lambda_2) \,  \varphi_{0}^{\otimes 2} 
\left( \Delta(K_i^{\pm}) \right) 
& = &  \varphi_{0}^{\otimes 2} \left(\tau \circ \Delta(K_i^{\pm}) \right) \, 
R^{+}_{12}(\lambda_1, \lambda_2) \, {\rm for} \, i=0, 1 \, . \non \\  
\label{eq:affineCR}
\eea
Let us now consider $\varphi_{a_1} \otimes \varphi_{a_2}$ 
with $a_j=2 \lambda_j$ for $j=1, 2$.  We have 
\be 
\varphi_{2\lambda_1} \otimes \varphi_{2 \lambda_2} 
\left( \Delta(X_0^{\pm})  \right) 
= (\chi_{12})^2  \, \, 
\varphi_{0} \otimes \varphi_{0} 
\left( \Delta(X_0^{\pm}) \right)  \, \, (\chi_{12})^{-2}
\ee
Thus, relations (\ref{eq:affineCR}) are now expressed as follows.  
\be
R^{+}_{12}(\lambda_1, \lambda_2) \, \varphi_{2 \lambda_1} \otimes 
\varphi_{2 \lambda_2}  \left( \Delta(x) \right) 
= \varphi_{2 \lambda_1} \otimes \varphi_{2 \lambda_2}  
\left( \tau \circ \Delta(x) \right) \, 
R^{+}_{12}(\lambda_1, \lambda_2) \,, \, \,  {\rm for} \,\,  
x=X_0^{\pm}, X_1^{\pm}, K_0, K_1 \, .  
\label{eq:affine-sym}
\ee
In (\ref{eq:affine-sym}) all the parameters are now associated with 
the evaluation parameters of the tensor product 
$V(2 \lambda_1) \otimes V(2 \lambda_2)$.  
Therefore, we conclude that 
the asymmetric $R$-matrix $R^{+}_{12}(\lambda_1, \lambda_2)$ 
satisfies the affine quantum-group symmetry. 

The fundamental commutation relations (\ref{eq:q-inv(+)}) 
and (\ref{eq:q-inv(-)}) are summarized as follows. 
\begin{prop} Let $\sigma_c$ be a cyclic permutation: 
$\sigma_c= (0 1 \cdots L)$. The asymmetric $R$-matrix $R^{+}$ 
satisfies the commutation relations 
for the affine-quantum group:  
\be 
R^{+}_{0, 1 2 \cdots L}(\lambda_0)  
\left( \Delta^{(n)}(x) \right)_{0 1 \cdots L}  = 
\left( \sigma_c \circ \Delta^{(L)}(x) \right)_{0 1 \cdots L}  
 R^{+}_{0, 1 2 \cdots L}(\lambda_0)
\quad {\rm for \, \, all} \, \,  x \in U_q({\widehat{ sl_2}}) \, . 
\label{eq:affineR+}
\ee
Here the symbol $( x )_{0 1 \cdots n}$ denotes the matrix representation 
of $x$ in the tensor product of evaluation representations,  
$V({2 \lambda_0}) \otimes V({2 \xi_1}) \otimes \cdots \otimes 
V({2 \xi_L})$.  
\end{prop}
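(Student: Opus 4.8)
The plan is to obtain the affine statement (\ref{eq:affineR+}) as a combination of the two partial symmetries already established, Proposition \ref{prop:q-inv(+)} for the index-$1$ generators and Proposition \ref{prop:q-inv(-)} for the index-$0$ generators, and then to upgrade the relation from a generating set of $U_q(\widehat{sl_2})$ to the entire algebra by a homomorphism argument. First I would introduce the two maps occurring in (\ref{eq:affineR+}): write $\rho(x)=\left(\Delta^{(L)}(x)\right)_{0 1 \cdots L}$ and $\rho'(x)=\left(\sigma_c \circ \Delta^{(L)}(x)\right)_{0 1 \cdots L}$ for the action on the evaluation tensor product $V(2\lambda_0)\otimes V(2\xi_1)\otimes\cdots\otimes V(2\xi_L)$ and on its $\sigma_c$-twist. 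Both are algebra homomorphisms of $U_q(\widehat{sl_2})$: the iterated coproduct $\Delta^{(L)}$ is a homomorphism, permuting tensor factors by $\sigma_c$ is an automorphism of the $(L+1)$-fold tensor-product algebra (multiplication being factorwise, so cocommutativity is not needed), and composing with the evaluation representation again preserves products.

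With this in place, the set
\[
{\cal I}=\{\, x\in U_q(\widehat{sl_2}) : R^{+}_{0,1 2\cdots L}\,\rho(x)=\rho'(x)\,R^{+}_{0,1 2\cdots L} \,\}
\]
is a subalgebra: if $x,y\in{\cal I}$ then $R^{+}\rho(xy)=R^{+}\rho(x)\rho(y)=\rho'(x)R^{+}\rho(y)=\rho'(xy)R^{+}$, closure under linear combinations is clear, and since the $K_i$ act invertibly the inverses $K_i^{-1}$ also lie in ${\cal I}$. Hence it suffices to verify (\ref{eq:affineR+}) on the generators $X_0^{\pm},X_1^{\pm},K_0,K_1$. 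For the index-$1$ generators the evaluation map is parameter-independent, $\varphi_a(X_1^{\pm})=X^{\pm}$ and $\varphi_a(K_1)=K$, so $\rho$ and $\rho'$ restricted to them reduce to the plain $U_q(sl_2)$ actions $\Delta^{(L)}(\cdot)$ and $\sigma_c\circ\Delta^{(L)}(\cdot)$; the desired identity is then precisely (\ref{eq:q-inv(+)}).

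For the index-$0$ generators $X_0^{\pm},K_0$ I would begin from (\ref{eq:q-inv(-)}) at evaluation parameter $a=0$, which is an intertwining relation for $R^{-}_{0,1 2\cdots L}$, and conjugate it by $(\chi_{0,1 2\cdots L})^{2}$. The gauge identity of \S 3.4 gives $R^{+}_{0,1 2\cdots L}=(\chi_{0,1 2\cdots L})^{2}\,R^{-}_{0,1 2\cdots L}\,(\chi_{0,1 2\cdots L})^{-2}$, so the left operator turns into $R^{+}$, while the same conjugation shifts the evaluation parameters from $0$ to $2\lambda_0,2\xi_1,\ldots,2\xi_L$; this is the $(L+1)$-site version of the two-site parameter-shift identity established just before (\ref{eq:affine-sym}). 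The outcome is exactly (\ref{eq:affineR+}) for $x=X_0^{\pm},K_0$, and the subalgebra argument then delivers the statement for all $x\in U_q(\widehat{sl_2})$.

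The step demanding the most care is this last conjugation at the full monodromy level. One must check that conjugating $\varphi_0^{\otimes(L+1)}\left(\Delta^{(L)}(X_0^{\pm})\right)$ by $(\chi_{0,1 2\cdots L})^{2}=(\Phi_0\Phi_1\cdots\Phi_L)^{2}$ reproduces the evaluation representation at the parameters $2\lambda_j$ consistently across every slot of the coproduct $\Delta^{(L)}(X_0^{\pm})$, which carries strings of $K_0=K^{-1}$ factors. Since each $\Phi_j=\mathrm{diag}(1,e^{\lambda_j})$ is diagonal it commutes with all $K_0$ factors and merely rescales the single $X^{\mp}$ entry by $e^{\pm 2\lambda_j}$, matching $\varphi_{2\lambda_j}(X_0^{\pm})=e^{\pm 2\lambda_j}X^{\mp}$; confirming that these scalars accumulate correctly term by term is the only genuinely computational point, and it is the direct $(L+1)$-fold extension of the two-site check already performed before (\ref{eq:affine-sym}).
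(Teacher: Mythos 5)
Your proposal follows essentially the same route as the paper: the statement is obtained there by combining Proposition \ref{prop:q-inv(+)} (for $X_1^{\pm},K_1$) with Proposition \ref{prop:q-inv(-)} conjugated by $(\chi_{0 1\cdots L})^{2}$ (for $X_0^{\pm},K_0$), using the gauge identity $R^{+}=\chi^{2}R^{-}\chi^{-2}$ and the parameter-shift identity preceding (\ref{eq:affine-sym}). Your explicit subalgebra/homomorphism argument for passing from the generators to all of $U_q(\widehat{sl_2})$ is left implicit in the paper but is correct and fills that gap cleanly.
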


\subsection{Symmetry relations of $U_q(\widehat{sl_2})$ for all permutations}

Let us generalize relations (\ref{eq:affineR+}). 
Making an extensive use of relations 
(\ref{eq:affine-sym}), 
we can show commutation relations 
for $R_p^{\sigma}$ for all permutations $\sigma$. 
In fact,  we can prove (\ref{eq:affineR+}) also by the method 
for showing proposition \ref{prop:A}. 
In Appendix A we shall show in proposition \ref{prop:A} 
how we generalize the symmetry relation of $R_{12}$ 
such as (\ref{eq:affine-sym})
 into those of $R_p^{\sigma}$ for any permutation $\sigma$.

We now formulate the symmetry relations 
in terms of the symmetric $R$-matrices.   
Let us denote by ${\bar \chi}$  the inverse of the  gauge transformation 
$\chi$. We express by      
$\left( \Delta^{(n)}(x) \right)_{0 1 \cdots n}^{\bar \chi}$ the following: 
\be  
\left( \Delta^{(n)}(x) \right)_{0 1 \cdots n}^{\bar \chi} = 
\left( \chi_{01 \cdots n} \right)^{-1} \, 
\varphi_{2 \lambda_0} \otimes \varphi_{2 \xi_1} \otimes \cdots 
\otimes \varphi_{2 \xi_n}  \left( \Delta^{(n)}(x) \right) \, \chi_{01 \cdots n}
\ee 

\begin{prop}
Let $p_q$ be an increasing 
sequence of $n+1$ integers: $p_q=(0, 1, 2, \ldots ,n)$, 
and $\sigma$ a permutation on $n+1$ integers, $0, 1, \ldots, n$. 
With the symmetric $R$-matrices, we have  
\be 
 R^{\sigma}_{p_q}(\lambda_0) \,   
\left( \Delta^{(n)}(x) \right)_{0 1 \cdots n}^{\bar \chi}
= \left( \sigma \circ \Delta^{(n)}(x) \right)^{\bar \chi}_{0 1 \cdots n} \, 
 R^{\sigma}_{p_q}(\lambda_0)
\quad {\rm for \, \, all} \, \,  x \in U_q({\widehat{ sl_2}}) \, . 
\label{eq:affineR-sigma}
\ee 
%
%
\label{prop:affine-q-inv}
\end{prop}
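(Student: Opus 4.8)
The plan is to reduce the statement to an identity for the \emph{asymmetric} matrices, prove that by induction on the length of $\sigma$, and then conjugate back by the gauge transformation. First I would establish the asymmetric counterpart of (\ref{eq:affineR-sigma}): if $R^{+,\sigma}_{p_q}$ denotes the product of asymmetric matrices $R^{+}$ assembled along a reduced word for $\sigma$ according to the recursion (\ref{eq:Rp}), then
\be
R^{+,\sigma}_{p_q}(\lambda_0)\,\bigl(\Delta^{(n)}(x)\bigr)_{0 1 \cdots n}
= \bigl(\sigma\circ\Delta^{(n)}(x)\bigr)_{0 1 \cdots n}\,R^{+,\sigma}_{p_q}(\lambda_0)
\ee
for all $x\in U_q(\widehat{sl_2})$, where $(\,\cdot\,)_{0 1 \cdots n}$ is the evaluation-representation matrix as in (\ref{eq:affineR+}). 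This generalizes (\ref{eq:affineR+}) from the cyclic $\sigma_c$ to an arbitrary permutation. The single building block is the two-site relation (\ref{eq:affine-sym}): the asymmetric $R^{+}$ on two adjacent factors intertwines the coproduct with its transpose, which is exactly the elementary-transposition case $\sigma=s_j$, since $R^{s_j}_{p}=R_{p_j,p_{j+1}}$ by definition \ref{df:Rp}.

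Next I would run an induction on $\ell(\sigma)$ via the recursion (\ref{eq:Rp}), $R^{\sigma_A\sigma_B}_{p}=R^{\sigma_B}_{\sigma_A(p)}R^{\sigma_A}_{p}$. Writing $\sigma=\sigma_A s_j$, the inductive hypothesis carries $R^{\sigma_A}_{p_q}$ through $\bigl(\Delta^{(n)}(x)\bigr)_{0 1 \cdots n}$, turning it into $\bigl(\sigma_A\circ\Delta^{(n)}(x)\bigr)_{0 1 \cdots n}$; the remaining factor $R^{s_j}_{\sigma_A(p_q)}$ is an asymmetric $R^{+}$ acting on the two slots carrying the evaluation parameters $2\lambda_{(\sigma_A(p_q))_j}$ and $2\lambda_{(\sigma_A(p_q))_{j+1}}$, so (\ref{eq:affine-sym}), applied with those two rapidities, exchanges precisely those two tensor factors and yields $\sigma_A s_j\circ\Delta^{(n)}(x)=\sigma\circ\Delta^{(n)}(x)$. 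Here the compatibility $(\sigma_A\sigma_B)p=\sigma_B(\sigma_A p)$ of the sequence action with the tensor-factor action is what makes the indices line up. Because (\ref{eq:affine-sym}) holds for the generators $X_0^{\pm},X_1^{\pm},K_0,K_1$, which generate $U_q(\widehat{sl_2})$, and because $\Delta^{(n)}$ and $\sigma\circ(\cdot)$ are algebra homomorphisms on the respective tensor products, the relation extends to all $x$. This is the asymmetric form, equivalent to the route used for proposition \ref{prop:A}.

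Finally I would pass from $R^{+}$ to the symmetric $R$ by conjugation. Since each elementary factor obeys $R^{+}_{jk}=\chi_{jk}R_{jk}\chi_{jk}^{-1}$ and the diagonal operators $\Phi_m$ on the untouched sites commute with $R_{jk}$, the whole product satisfies $R^{+,\sigma}_{p_q}=\chi_{0 1 \cdots n}\,R^{\sigma}_{p_q}\,\chi_{0 1 \cdots n}^{-1}$. Conjugating the asymmetric identity by $\chi_{0 1 \cdots n}^{-1}$ on the left and $\chi_{0 1 \cdots n}$ on the right replaces $R^{+,\sigma}_{p_q}$ by $R^{\sigma}_{p_q}$ and turns each evaluation coproduct into its dressed version $\bigl(\Delta^{(n)}(x)\bigr)^{\bar\chi}_{0 1 \cdots n}$, giving exactly (\ref{eq:affineR-sigma}). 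The main obstacle is the bookkeeping in the inductive step: one must verify that, as $\sigma_A(p_q)$ is built up, the evaluation parameters occupying slots $j$ and $j+1$ at each stage are precisely the rapidities carried by $R^{s_j}_{\sigma_A(p_q)}$, so that every application of (\ref{eq:affine-sym}) is licensed; the well-definedness of $R^{\sigma}_{p_q}$ established in Appendix A then guarantees that the result is independent of the chosen reduced word.
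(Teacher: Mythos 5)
Your proposal is correct and follows essentially the same route as the paper: the paper proves this proposition by the method of proposition A.1 in Appendix A, i.e.\ induction on the length of a reduced word for $\sigma$ via the recursion $R_{p}^{\sigma_A s_j} = R_{\sigma_A(p)}^{s_j} R_{p}^{\sigma_A}$, with the two-site intertwining relation (\ref{eq:affine-sym}) as the base case. The passage from the asymmetric $R^{+}$ relations to the symmetric statement is, as you note, just conjugation by $\chi_{0 1 \cdots n}$, which is already built into the definition of the dressed coproduct $\left( \Delta^{(n)}(x) \right)^{\bar\chi}_{0 1 \cdots n}$.
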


%
%
 \setcounter{equation}{0} 
 \renewcommand{\theequation}{4.\arabic{equation}}
\section{Projection operators and the fusion procedure }

\subsection{Projection operators}

Let us recall that ${\check R}_{12}^{+}(u)$ has been defined by 
$$ 
{\check R}_{12}^{+}(u) = \Pi_{12} R_{12}^{+}(u) \,. 
$$
We define operator $P_{12}^{+}$  by 
\be 
P_{1 2}^{+} = {\check R}_{1 2}^{+}(\eta)  
\ee
Explicitly we have 
\be
P_{12}^{+}  
= \left(
\begin{array}{cccc} 
1 &   0 & 0 & 0 \\
0 &   {\frac q {[2]}} & {\frac 1 {[2]}} & 0 \\
0 &   {\frac 1 {[2]}} & {\frac {q^{-1}} {[2]}} & 0 \\
0 &   0 & 0 & 1
\end{array} 
\right)_{[12]} \, . \label{P+}  
\ee
By making use of the matrix representation (\ref{P+})
it is easy to show that operator $P_{12}^{+}$ is idempotent: 
\be 
\left( P_{12}^{+} \right)^2 = P_{12}^{+} 
\ee
Thus, we may consider that the operator $P_{12}^{+}$ is a 
projection operator. In fact, $P_{12}^{+}$  
is nothing but the $q$-analogue of the 
Young operator which projects out the 
spin-1 representation of $U_q(sl_2)$ 
from of the tensor product of two spin-1/2 representations.  
It should be noticed that in the case of symmetric $R$-matrix, 
$R(\eta)^2$ is not equal to $R(\eta)$.

We now introduce projection operators 
for the spin-$s$ irreducible representations of $U_{q}(sl_2)$. 
Hereafter we set $\ell=2s$. 
We define projection operator $P_{12 \cdots \ell}^{(\ell)}$ 
 acting on the tensor product $V_{12 \cdots \ell}$ 
of spin-1/2 representations $V$, 
i.e. $V_{12 \cdots \ell}=V^{\otimes \ell}$,  
by the following recursive relations:
\be 
P^{(\ell)}_{12 \cdots \ell} =  
P^{(\ell-1)}_{12 \cdots \ell-1} {\check R}_{\ell-1, \ell}^{+}((\ell-1)\eta) 
P^{(\ell-1)}_{12 \cdots \ell-1} 
\ee
Here $P_{12}^{(1)}=P_{12}^{+}$. 

Making use of the Yang-Baxter equations (\ref{eq:YB-Rcheck}) and 
through induction on $\ell$  
we can show  
\be
\left( P^{(\ell)}_{12 \cdots \ell} \right)^2 = 
P^{(\ell)}_{12 \cdots \ell} 
\label{eq:nilpotency}
\ee
Thus, operator $P^{(\ell)}_{12 \cdots \ell}$ gives a projection operator. 
Similarly, we define projection operators 
acting on $V_{j \, j+1 \cdots \, j+\ell-1}$ recursively by 
\be 
P^{(\ell)}_{j \, j+1 \cdots j+\ell-1} =  
P^{(\ell-1)}_{j \, j+1 \cdots j+\ell-2} 
{\check R}_{j+\ell-2, \, j+\ell-1}^{+}((\ell-1)\eta) 
P^{(\ell-1)}_{j \, j+1 \cdots j+\ell-2}  
\ee
Hereafter we shall abbreviate $P^{(\ell)}_{j \, j+1 \cdots j+\ell-1}$  
by $P^{(\ell)}_{j}$ .

From idempotency (\ref{eq:nilpotency}) and by 
the Yang-Baxter equations we can show the following:
\begin{lem}  
Suppose that  
inhomogeneous parameters $\xi_j, \xi_{j+1}, \ldots, \xi_{j+\ell-1}$ are 
given by $\xi_{j+i-1}= z -(i-1) \eta$ for $i=1, 2, \ldots, \ell$ 
with a constant $z$. Then, 
the monodromy matrix $R^{+}_{0, 1 2 \cdots L}$ satisfies the following property:\be 
P_{j}^{(\ell)} \, R^{+}_{0, 1 2 \cdots L} 
=
P_{j}^{(\ell)} \, 
R^{+}_{0, 1 2 \cdots L} \, P_{j}^{(\ell)} \, . 
\label{eq:PR=PRP}
\ee 
\label{lem:PR=PRP}
\end{lem}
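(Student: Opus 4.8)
The plan is to exploit that $P^{(\ell)}_j$ acts only on the string sites $j,j+1,\ldots,j+\ell-1$, so it commutes with every $L$-operator $R^+_{0,m}$ with $m\notin\{j,\ldots,j+\ell-1\}$. Writing the monodromy matrix as $R^+_{0,12\cdots L}=A\,\mathbb T_j\,B$, where $\mathbb T_j=R^+_{0,j+\ell-1}\cdots R^+_{0,j}$ collects exactly the factors touching the string and $A,B$ collect the remaining factors (which commute with $P^{(\ell)}_j$), the assertion (\ref{eq:PR=PRP}) reduces to the single block identity $P^{(\ell)}_j\,\mathbb T_j=P^{(\ell)}_j\,\mathbb T_j\,P^{(\ell)}_j$. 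Since $P^{(\ell)}_j$ is idempotent by (\ref{eq:nilpotency}), this is equivalent to $P^{(\ell)}_j\,\mathbb T_j\,(I-P^{(\ell)}_j)=0$, i.e. to the statement that $\mathbb T_j$ leaves the complementary subspace $\ker P^{(\ell)}_j$ invariant.

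First I would prove the block identity by induction on $\ell$. For $\ell=1$ the projector is the identity on a single site and there is nothing to prove. Using the string condition $\lambda-\xi_{j+i-1}=u+(i-1)\eta$ with $u=\lambda-z$, one has $\mathbb T_j=R^+_{0,j+\ell-1}(u+(\ell-1)\eta)\,\mathbb T_j^{[\ell-1]}$, where $\mathbb T_j^{[\ell-1]}$ is the length-$(\ell-1)$ block. Because $R^+_{0,j+\ell-1}$ acts only on the spaces $0$ and $j+\ell-1$, it commutes with $P^{(\ell-1)}_j$, which acts on sites $j,\ldots,j+\ell-2$. Substituting the recursive definition $P^{(\ell)}_j=P^{(\ell-1)}_j\,{\check R}^+_{j+\ell-2,\,j+\ell-1}((\ell-1)\eta)\,P^{(\ell-1)}_j$ and invoking the inductive hypothesis $P^{(\ell-1)}_j\,\mathbb T_j^{[\ell-1]}=P^{(\ell-1)}_j\,\mathbb T_j^{[\ell-1]}\,P^{(\ell-1)}_j$, the problem collapses onto the interaction of the single new factor $R^+_{0,j+\ell-1}$ with the braid ${\check R}^+_{j+\ell-2,\,j+\ell-1}((\ell-1)\eta)$ and the fused block, so that only a two-site fusion at relative rapidity $(\ell-1)\eta$ remains to be checked.

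The engine of the inductive step is this two-site fusion, which I would derive from the Yang-Baxter equation together with the degeneration of the braided $R$-matrix into the projector. By the $R^+$-analogue of (\ref{eq:YB-Rcheck}) the braid ${\check R}^+$ on the two quantum spaces intertwines the two $L$-operators attached to them; specializing the difference of inhomogeneities to the complete-string value $\eta$ makes this intertwiner collapse onto ${\check R}^+(\eta)=P^+$, the projector onto the symmetric subspace. Rearranging the resulting relation and using idempotency of $P^+$ yields $P^+\,R^+_{0,j+1}(u+\eta)R^+_{0,j}(u)=P^+\,R^+_{0,j+1}(u+\eta)R^+_{0,j}(u)\,P^+$. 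The same mechanism, now applied to the already-fused block regarded as a single spin-$(\ell-1)/2$ space carrying a shifted rapidity, supplies the fusion at relative rapidity $(\ell-1)\eta$, which is precisely the point at which the braid ${\check R}^+((\ell-1)\eta)$ in the recursion acts as the projector onto the spin-$\ell/2$ component and upgrades $P^{(\ell-1)}_j$ to $P^{(\ell)}_j$.

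I expect the main obstacle to be the bookkeeping rather than any conceptual point: one must keep careful track of the two orderings of the asymmetric $R$-matrix ($R_{0,m}$ versus $R_{m,0}$, and $R^+$ versus $R^-$) and of the spectral-parameter shifts, and in particular verify that the relative rapidity between the newly fused site and the already-fused block is exactly $(\ell-1)\eta$, i.e. precisely the argument appearing in the recursion for $P^{(\ell)}_j$, so that the Yang-Baxter intertwiner degenerates to the projector at the right step. Once this alignment is confirmed, idempotency (\ref{eq:nilpotency}) and the Yang-Baxter equations close the induction and give (\ref{eq:PR=PRP}).
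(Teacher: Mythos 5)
Your proposal is correct and follows the same route the paper indicates for this lemma (the paper only remarks that it follows "from idempotency and by the Yang-Baxter equations"): the two-site identity $P^{+}\,R^{+}_{0,j+1}(\lambda_0-\xi_{j+1})R^{+}_{0,j}(\lambda_0-\xi_j)=P^{+}\,R^{+}_{0,j+1}R^{+}_{0,j}\,P^{+}$ obtained by degenerating the Yang--Baxter intertwiner at rapidity difference $\eta$, promoted to general $\ell$ by the fusion recursion for $P^{(\ell)}_j$ together with idempotency. The bookkeeping points you flag (orderings of the asymmetric $R$-matrices and the exact degeneration point of the fused intertwiner) are genuine but routine, and do not affect the validity of the argument.
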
 

Making use of (\ref{eq:product-Rcheck}) we can express 
projection operators in terms of $R$-matrices. 
\be 
P^{(\ell)}_1= \left( \prod_{j=1}^{\ell} \Pi^{(j \, \, \ell-j+1)} \right) \, \, 
R^{+}_{\ell-1, \, \ell}
\cdots 
R^{+}_{2,\,  3 \cdots \ell}
R^{+}_{1, \, 2 \cdots \ell}     
\label{proj}
\ee

%
%
\subsection{Fusion of monodromy matrices}
\subsubsection{The case of tensor product of spin-$s$ representations}

We first consider the case of tensor product of 
spin-$s$ representations. 
We set $L=N_{s} \ell$. 
We introduce a set of parameters, 
$\xi_1^{(\ell)}$,  $\xi_2^{(\ell)}$, \ldots,  $\xi_L^{(\ell)}$, as follows: 
\be 
\xi_{(k-1)\ell+j}^{(\ell)} = {\zeta}_k - (j-1) \eta + 
(\ell-1) \eta/2 \qquad  \mbox{for} \quad 
j=1, \ldots \ell, \, \, \mbox{and} \, \, k= 1, \ldots, N_{s}.  
\ee
Let us set inhomogeneous parameters $\xi_1, \xi_2, \ldots, \xi_L$, 
by $\xi_j=\xi_j^{(\ell)}$ for $j=1, 2, \ldots, L$. 
We define the monodromy matrix 
$T^{(\ell+)}_0(u; \zeta_1, \ldots, \zeta_{N_{s}})$
acting on the tensor product of spin-$s$ representations, 
$V^{(2s)}(\zeta_1) \otimes V^{(2s)}(\zeta_{N_{s}})$  by 
\be 
T^{(\ell+)}_0(\lambda_0; \zeta_1, \ldots, \zeta_{N_{s}}) =  
\prod_{k=0}^{N_{s}-1} P^{(\ell)}_{\ell k+1} \, \cdot \,  
R^{+}_{0, 1 2 \cdots L}(\lambda_0; \xi_1^{(\ell)}, 
\ldots, \xi_L^{(\ell)}) \, 
\cdot \,  
\prod_{k=0}^{N_{s}-1} P^{(\ell)}_{\ell k+1} 
\ee
Making use of properties of projection operators (\ref{eq:nilpotency}) 
and (\ref{eq:PR=PRP}), we can show the Yang-Baxter equation 
\cite{fusionXXX,V-WDA} 
\bea 
& & R^{+}_{ab}(\lambda_a-\lambda_b)
T^{(\ell+)}_{a}(\lambda_a; \zeta_1, \ldots, \zeta_{N_{s}}) 
T^{(\ell+)}_{b}(\lambda_b; \zeta_1, \ldots, \zeta_{N_{s}}) \non \\ 
& = & 
T^{(\ell+)}_{b}(\lambda_b; \zeta_1, \ldots, \zeta_{N_{s}}) 
T^{(\ell+)}_{a}(\lambda_a; \zeta_1, \ldots, \zeta_{N_{s}}) 
R^{+}_{ab}(\lambda_a-\lambda_b)
\eea

Through the inverse of the gauge transformation, we derive 
the symmetric spin-$s$ monodromy matrix as follows:   
\be
T^{(\ell)}_0(\lambda_0; \zeta_1, \ldots, \zeta_{N_{s}}) 
 = \prod_{k=0}^{N_{s}-1} \left( P^{(\ell)}_{\ell k+1} \right)^{\bar \chi}
 \, \cdot \,   
R_{0, 1 2 \cdots L}(\lambda_0; \xi_1^{(\ell)}, \ldots, \xi_L^{(\ell)}) 
\, \cdot \,  
\prod_{k=0}^{N_{s}-1} \left( P^{(\ell)}_{\ell k+1} \right)^{\bar \chi}
\ee
where we have defined the transformed projectors by 
\be 
\left( P^{(\ell)}_{\ell k+1} \right)^{\bar \chi} = 
\left( \chi_{0 1 \cdots L} \right)^{-1} \, P^{(\ell)}_{\ell k+1} \, 
 \left( \chi_{0 1 \cdots L} \right)
\ee

\subsubsection{The case of mixed spins}

Let us consider the tensor product of representations 
with different spins, $s_1, s_2, \ldots, s_r$.  Here we introduce 
$\ell_j$ by $\ell_j=2 s_j$ for $j=1, 2, \ldots, r$, and 
we assume that $\ell_1+ \ell_2 + \cdots + \ell_r = L$.  @
Let us introduce a set of parameters, 
$\xi_1^{({\bm \ell})}$,  $\xi_2^{({\bm \ell})}$, \ldots,  
$\xi_L^{({\bm \ell})}$, as follows: 
\be 
\xi_{\ell_1 + \cdots + \ell_{k-1} +j}^{({\bm \ell})} = 
\zeta_k - (j-1) \eta + (\ell_k-1) \eta/2 \qquad  \mbox{for} \quad 
j=1, \ldots \ell_k, \, \, \mbox{and} \, \, k= 1, \ldots, r.  
\ee

We define the asymmetric monodromy matrix for the mixed spin case 
$T^{({\bm \ell}+)}_0(u; \zeta_1, \ldots, \zeta_{r})$ 
acting on the tensor product representation  
$V_{\ell_1}({\zeta_1}) \otimes \cdots \otimes V_{\ell_r}({\zeta_{r}})$  by 
\be 
T^{({\bm \ell}+)}_0(\lambda_0; \zeta_1, \ldots, \zeta_{r}) =  
\prod_{k=1}^{r} P^{(\ell_k)}_{\ell_1 + \cdots + \ell_{k-1}+1} \, \cdot \,  
R_{0, 1 2 \cdots L}(\lambda_0; \xi_1^{(\bm \ell)}, \ldots, \xi_L^{(\bm \ell)}) 
\, \cdot \,  
\prod_{k=1}^{r} 
P^{(\ell_k)}_{\ell_1 + \cdots + \ell_{k-1}+1} \, . 
\ee
It is easy to show that they satisfy the Yang-Baxter equations. 
\bea 
& & R^{+}_{ab}(\lambda_a-\lambda_b)
T^{({\bm \ell}+)}_{a}(\lambda_a; \zeta_1, \ldots, \zeta_{N_{s}}) 
T^{({\bm \ell}+)}_{b}(\lambda_b; \zeta_1, \ldots, \zeta_{N_{s}}) \non \\ 
& = & 
T^{({\bm \ell}+)}_{b}(\lambda_b; \zeta_1, \ldots, \zeta_{N_{s}}) 
T^{({\bm \ell}+)}_{a}(\lambda_a; \zeta_1, \ldots, \zeta_{N_{s}}) 
R^{+}_{ab}(\lambda_a-\lambda_b)
\label{eq:TBE-mixed}
\eea

We also define the symmetric monodromy matrix for the mixed spin case 
$T^{({\bm \ell})}_0(u; \zeta_1, \ldots, \zeta_{r})$ as follows. 
\be 
T^{({\bm \ell})}_0(\lambda_0; \zeta_1, \ldots, \zeta_{r}) =  
\prod_{k=1}^{r} \left( P^{(\ell_k)}_{\ell_1 + \cdots + \ell_{k-1}+1} 
\right)^{\bar \chi} \, \cdot \,  
R_{0, 1 2 \cdots L}(\lambda_0; \xi_1^{(\bm \ell)}, \ldots, 
\xi_L^{(\bm \ell)}) 
\, \cdot \,  
\prod_{k=1}^{r} 
\left( P^{(\ell_k)}_{\ell_1 + \cdots + \ell_{k-1}+1} \right)^{\bar \chi}
\ee

%
%
\subsection{Higher-spin $L$-operators}

We now define the basis vectors of the $(\ell+1)$-dimensional 
irreducible representation of $U_q(sl_2)$,  
$|| \ell, n  \rangle$ for $n=0, 1, \ldots, \ell$ as follows. 
We define $||\ell, 0 \rangle$ by 
\be 
||\ell , 0 \rangle = |1 \rangle_1 \otimes |1 \rangle_2 \otimes 
\cdots |1 \rangle_\ell 
\ee   
Here $|\alpha \rangle_j$ for $\alpha=1, 2$ 
denote the basis vectors of the spin-1/2 representation defined 
on the $j$th position in the tensor product.  We define 
$|| \ell, n \rangle$ for $n \ge 1$ by     
\be 
|| \ell, n \rangle =   
\left( \Delta^{(\ell-1)} (X^{-}) \right)^n ||\ell, 0 \rangle \,  
{\frac 1 {[n]_q!}} \, . 
\ee 
Then we have 
\be 
|| \ell, n \rangle = 
\sum_{1 \le i_1 < \cdots < i_n \le \ell} 
\sigma_{i_1}^{-} \cdots \sigma_{i_n}^{-} | 0 \rangle \, 
q^{i_1+ i_2 + \cdots + i_n  - n \ell + n(n-1)/2}
\label{eq:|ell,n>}
\ee
It is easy to show the following: 
\be 
P^{(\ell)}_{1 2 \cdots \ell} || \ell, n \rangle = || \ell, n \rangle 
\ee
We define the conjugate vectors by the following conditions:  
\be
\langle \ell, n || P^{(\ell)}_{1 2 \cdots \ell} = \langle \ell, n || 
\ee
with the normalization condition: 
$\langle \ell, n || \, || \ell, n \rangle = 1$.
Let us define the $q$-factorial, $[n]_q!$, by 
\be 
[n]_q ! = [n]_q [n-1]_q \cdots [1]_q \, .  
\ee
For integers $m$ and $n$ satisfying $m \ge n$ 
we define the $q$-binomial coefficients as follows
\be 
\left[ 
\begin{array}{c} 
m \\ 
n 
\end{array}  
 \right]_q 
= {\frac {[m]_q !} {[m-n]_q ! [n]_q!}}  
\ee
Then we have the following expression  of the conjugate vectors 
\be 
\langle \ell, n || =  
\left[ 
\begin{array}{c} 
\ell \\ 
n 
\end{array}  
 \right]_q^{-1} \, q^{n(\ell-n)} \, 
\sum_{1 \le i_1 < \cdots < i_n \le \ell} 
\langle 0 | \sigma_{i_1}^{+} \cdots \sigma_{i_n}^{+} \, 
q^{i_1 + \cdots + i_n - n \ell + n(n-1)/2}   
\label{eq:<ell,n|}
\ee
The projection operators are given explicitly as follows.  
\be 
P^{(\ell)}_{1 2 \cdots \ell} 
= \sum_{n=0}^{\ell}  || \ell, n \rangle \, \langle \ell, n ||
\ee

We define the $L$-operator of the spin-$\ell/2$ XXZ model by    
\be 
L^{(\ell+)}(\lambda_0) = P^{(\ell)}_{1} \,  
R^{+}_{0, 1 2 \cdots \ell} \,  P^{(\ell)}_{1}  \, . 
\ee
and then by the inverse gauge transformation 
we have 
\be 
L^{(\ell)}(\lambda_0) = 
\left( P^{(\ell)}_{1} \right)^{\bar \chi} \, 
R_{0, 1 2 \cdots \ell} \, 
\left( P^{(\ell)}_{1} \right)^{\bar \chi} \, . 
\ee

Let us define $|\ell, n \rangle$ and their conjugates $\langle \ell, n |$ by  
\bea 
| \ell, n \rangle & = & N(\ell,n) \, || \ell, n \rangle \non \\ 
\langle \ell, n | & = &  \langle \ell, n || \,  {\frac 1 {N(\ell,n)}} \, .  
\eea
The matrix elements of the $L$-operator are  given by 
\be 
\langle \ell, a | \, L^{(\ell)}(\lambda) \, | \ell, b \rangle  
= \left( 
\begin{array}{cc}
\langle \ell, a | \,  L^{(\ell)}_{11}(\lambda) \, | \ell, b \rangle 
&
\langle \ell, a | \,  L^{(\ell)}_{12}(\lambda) \, | \ell, b \rangle 
\\
\langle \ell, a | \,  L^{(\ell)}_{21}(\lambda) \, | \ell, b \rangle 
&
\langle \ell, a | \,  L^{(\ell)}_{22}(\lambda) \, | \ell, b \rangle 
\end{array}
\right)_{[0]} 
\ee
for $a, b = 0, 1, \ldots, \ell$. 
Choosing the normalization factors $N(\ell,n)$, 
we can derive the following symmetric expression of the 
 $L$-operator:    
\be 
L^{(\ell)}(\lambda)  =  
{\frac 1 { 2 \sinh(u+\ell\eta/2)}}  
\left( 
\begin{array}{cc}
z K^{1/2} - z^{-1} K^{-1/2} &  2 \sinh \eta X^{-} \\
 2 \sinh \eta X^{+} & z K^{-1/2} - z^{-1} K^{1/2} 
\end{array}
\right)_{[0]} 
\ee
Here $X^{\pm}$ and $K$ are in the 
$(\ell+1)$-dimensional representation of $U_q(sl_2)$, and 
$u=\lambda - \xi_1 + \ell\eta/2$ and  $z=\exp u$. Explicitly 
they are given by 
\bea  
\langle \ell, a | \, X^{+} \, | \ell, b \rangle & = & 
\delta_{a, b-1} \, [\ell-a]_q \, , \non \\    
\langle \ell, a | \, X^{-} \, | \ell, b \rangle & = & 
\delta_{a, b+1} \, [a]_q \, , \non \\    
\langle \ell, a | \, K \, | \ell, b \rangle & = & 
\delta_{a, b} \, q^{\ell - 2 a} \quad {\rm for} \, \, 
a, b= 0, 1, \ldots, \ell , .     
\eea

%
%
\subsection{Algebraic Bethe-ansatz method for higher-spin cases}

We now discuss the eigenvalues of the transfer matrix 
of an integrable  higher-spin XXZ spin chain  
constructed by the fusion method. 
We consider the case of mixed spins, where we define 
the transfer matrix on 
the tensor product of spin-$s_j$ representations for 
$j=1, 2, \ldots, r$.

We define $A$, $B$, $C$, and $D$ operators of the algebraic Bethe ansatz  
for higher-spin cases  
by the following matrix elements of the monodromy matrix: 
\be 
\left(
\begin{array}{cc} 
A^{({\bm \ell}+)}(\lambda_0; \zeta_1, \ldots, \zeta_r) 
& B^{({\bm \ell}+)}(\lambda_0; \zeta_1, \ldots, \zeta_r) \\ 
C^{({\bm \ell}+)}(\lambda_0; \zeta_1, \ldots, \zeta_r) 
& D^{({\bm \ell}+)}(\lambda_0; \zeta_1, \ldots, \zeta_r)  
\end{array} 
\right)
= T^{({\bm \ell}+)}_0(\lambda_0; \zeta_1, \ldots, \zeta_r) \, . 
\ee
In terms of projection operators we have 
\bea 
B^{({\bm \ell}+)}_{1 \cdots r}(\lambda_0; \zeta_1, \ldots, \zeta_{r}) 
 = \prod_{k=1}^{r} P^{(\ell_k)}_{\ell(k-1)+1}   
\, \cdot \ 
B^{+}_{1 2 \cdots L}(\lambda_0; \xi_1^{({\bm \ell})}, \ldots, 
\xi_L^{({\bm \ell})})   
\prod_{k=1}^{r}  P^{(\ell_k)}_{\ell(k-1)+1}  \, . 
\eea
We define $A^{({\bm \ell})}, B^{({\bm \ell})}$,  
$C^{({\bm \ell})}$ and $D^{({\bm \ell})}$ similarly for the monodromy 
matrix  $T^{({\bm \ell})}_0(\lambda_0; \zeta_1, \ldots, \zeta_r)$.

The operators $A^{({\bm \ell}+)}$s of the asymmetric monodromy matrix 
$R^{+}_{0, 1 2 \cdots n}$  are related to the 
 symmetric ones $A^{({\bm \ell})}$ s as follows. 
\bea 
& & R^{+}_{0, 1 2 \cdots n}(\lambda_0, \{ \zeta_i \} )  
= \left(
\begin{array}{cc} 
A^{({\bm \ell}+)}(\lambda_0; \{ \zeta_i \}) 
& B^{({\bm \ell}+)}(\lambda_0; \{ \zeta_i \}) \\ 
C^{({\bm \ell}+)}(\lambda_0; \{ \zeta_i \}) 
& D^{({\bm \ell}+)}(\lambda_0; \{ \zeta_i \})  
\end{array} 
\right)
\non \\ 
& =& 
\left(
\begin{array}{cc} 
\chi_{12\cdots L} A^{({\bm \ell})}(\lambda_0; \{ \zeta_i \}) 
\chi_{12\cdots L}^{-1}
& e^{-\lambda_0} \,  \chi_{12\cdots L} 
B^{({\bm \ell})}(\lambda_0; \{ \zeta_i \}) \chi_{12\cdots L}^{-1} \\ 
e^{\lambda_0} \, \chi_{12\cdots L} 
C^{({\bm \ell})}(\lambda_0; \{ \zeta_i \}) \chi_{12\cdots L}^{-1}
& \chi_{12\cdots L}D^{({\bm \ell})}(\lambda_0; \{ \zeta_i \}) 
\chi_{12\cdots L}^{-1} 
\end{array} 
\right)
\eea

It follows from the Yang-Baxter equations (\ref{eq:TBE-mixed}) 
that the $A, B, C, D$ operators in the higher-spin case 
also satisfy the standard commutation relations.  
\be 
A^{({\bm \ell}+)}(\lambda_1) B^{({\bm \ell}+)}(\lambda_2)    
= {\frac 1  {b(\lambda_2 - \lambda_1)} }   
B^{({\bm \ell}+)}(\lambda_2) A^{({\bm \ell}+)}(\lambda_1)  
- {\frac {c^{-}(\lambda_2 - \lambda_1)} {b(\lambda_2 - \lambda_1)}} 
B^{({\bm \ell}+)}(\lambda_1) A^{({\bm \ell} +)}(\lambda_2)
\ee
Through the inverse gauge transformation ${\bar \chi}$  we have  
\be 
A^{({\bm \ell})}(\lambda_1) B^{({\bm \ell})}(\lambda_2)    
= {\frac 1  {b(\lambda_2 - \lambda_1)} }   
B^{({\bm \ell})}(\lambda_2) A^{({\bm \ell})}(\lambda_1)  
- {\frac {c(\lambda_2 - \lambda_1)} {b(\lambda_2 - \lambda_1)}} 
B^{({\bm \ell})}(\lambda_1) A^{({\bm \ell})}(\lambda_2)
\ee
Therefore, we derive Bethe ansatz eigenvectors of the 
higher-spin transfer matrix by the same method as the case of spin-1/2.    

Let us  denote by $| 0 \rangle$ the vacuum state where all spins are up.  
Noting 
\be 
\prod_{k=1}^{r} P^{(\ell_k)}_{\ell(k-1)+1} | 0 \rangle = | 0 \rangle \, ,  
\ee
it is easy to show the following relations:  
\bea 
A^{({\bm \ell})}(\lambda) |0 \rangle & = & 
a^{({\bm \ell})}(\lambda; \{ \zeta_k \}) | 0 \rangle \, , \non \\  
D^{({\bm \ell})}(\lambda) |0 \rangle & = & 
d^{({\bm \ell})}(\lambda; \{ \zeta_k \}) | 0 \rangle  \,  , 
\eea
where $a^{({\bm \ell})}(\lambda; \{ \zeta_k \})$ and 
$d^{({\bm \ell})}(\lambda; \{ \zeta_k \})$ are given by 
\bea 
a^{({\bm \ell})}(\lambda; \{ \zeta_k \}) & = & 
a^{({\bm \ell})}(\lambda; \{ \xi_j^{({\bm \ell})} \})=1 \, , \non \\  
d^{({\bm \ell})}(\lambda; \{ \zeta_k \}) & = & 
d^{({\bm \ell})}(\lambda; \{ \xi_j^{({\bm \ell})} \}) 
= \prod_{j=1}^{L} b(\lambda - \xi_j^{({\bm \ell})})  
= \prod_{k=1}^{r} {\frac {\sinh(\lambda- \zeta_k - (\ell_k-1) \eta/2)}  
{\sinh(\lambda- \zeta_k + (\ell_k+1) \eta/2)}}  \non \\ 
\eea
Thus, the vector 
$B^{{(\bm \ell)}} (\lambda_1) \cdots B^{(\bm \ell)} (\lambda_n) | 0 \rangle$ 
becomes an eigenvector of the transfer matrix 
$A^{({\bm \ell})}(\lambda) + D^{({\bm \ell})}(\lambda)$
with the following eigenvalue 
\be 
\tau^{({\bm \ell})}(\mu)= 
\prod_{j=1}^{n} 
{\frac {\sinh(\lambda_j - \mu + \eta)} {\sinh(\lambda_j - \mu)}} 
+ 
\prod_{k=1}^{r} 
{\frac {\sinh(\mu -\zeta_k - (\ell_k-1) \eta/2)}
       {\sinh(\mu -\zeta_k + (\ell_k+1) \eta/2)}} \, \cdot \, 
\prod_{j=1}^{n} 
{\frac {\sinh(\mu - \lambda_j + \eta)} {\sinh(\mu - \lambda_j)}} 
\ee
if rapidities $\tilde{\lambda}_j=\lambda_j + \eta/2$ satisfy 
the Bethe ansatz equations 
\be 
\prod_{k=1}^{r} {\frac 
{\sinh(\tilde{\lambda}_{\alpha} - \zeta_k + \ell_k \eta/2)} 
{\sinh(\tilde{\lambda}_{\alpha} - \zeta_k - \ell_k \eta/2)} }  
= \prod_{\beta=1; \beta \ne \alpha}^{n}  
{\frac {\sinh(\tilde{\lambda}_{\alpha} - \tilde{\lambda}_{\beta} + \eta)} 
       {\sinh(\tilde{\lambda}_{\alpha} - \tilde{\lambda}_{\beta} - \eta)}} 
\ee

%
%
 \setcounter{equation}{0} 
 \renewcommand{\theequation}{5.\arabic{equation}}
\section{Pseudo-diagonalization of the $B$ and $C$ operators} 

\subsection{Diagonalizing the $A$ and $D$ operators}
\subsubsection{The $F$-basis}
In order to formulate the derivation of 
the pseudo-diagonalized forms of $B$ and $C$ operators for the XXZ case, 
 we briefly formulate some symbols and review some useful formulas 
shown in Ref. \cite{MS2000} in \SS 5.1. 
First, we introduce the $F$-basis . 

\begin{df}(Partial $F$ and total $F$)
We define partial $F$ by 
\bea 
F_{1, \, 2 \cdots n} & = & e^{11}_1 + e_1^{22} R_{1, \, 2 \cdots n}  \non \\ 
F_{1 2 \cdots n-1, \, n} & = & e^{22}_n + e_n^{11} R_{1 2 \cdots n-1, \, n} 
\label{eq:def-F}
\eea
We define total $F$ recursively with respect to $n$ by 
\be 
F_{1 2 \cdots n} = F_{1 2 \cdots n-1} F_{ 1 2 \cdots n-1, \, n}  
\ee
\label{df:def-F}
\end{df}

\begin{lem} (Cocycle conditions) 
\bea 
F_{1, \, 2} F_{1 2, \, 3} & = & F_{2 3} F_{1, \, 2 3} \non \\  
F_{1, \, 2 \cdots n-1} F_{1 2 \cdots n-1 \, n} & = & F_{2 \cdots n-1, \, n} 
F_{1, \, 2 \cdots n} 
\eea
\end{lem}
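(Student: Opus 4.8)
The plan is to substitute the definition (\ref{eq:def-F}) of the partial $F$'s into both sides, expand each product into four terms, and organize the terms according to the projector pattern on the two ``boundary'' sites $1$ and $n$. The crucial structural fact is that $R_{1,\,2\cdots n-1}$ acts trivially on site $n$ while $R_{2\cdots n-1,\,n}$ acts trivially on site $1$, so each commutes with the diagonal projectors $e^{11}_1,e^{22}_1,e^{11}_n,e^{22}_n$ on the site it omits. Using this, both $F_{1,\,2\cdots n-1}F_{12\cdots n-1,\,n}$ and $F_{2\cdots n-1,\,n}F_{1,\,2\cdots n}$ split into four terms carrying the mutually orthogonal patterns $e^{11}_1 e^{22}_n$, $e^{11}_1 e^{11}_n$, $e^{22}_1 e^{22}_n$ and $e^{22}_1 e^{11}_n$. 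Because these patterns are orthogonal, it suffices to match the two sides pattern by pattern. I note that the first displayed identity is just the case $n=3$, so a single argument covers both lines of the lemma.

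Three of the four patterns match almost immediately. The $e^{11}_1 e^{22}_n$ term is simply $e^{11}_1 e^{22}_n$ on each side. For the two diagonal cross-patterns I would use the ``absorption'' property that, by charge conservation ($R^{ab}_{cd}=0$ unless $a+b=c+d$) together with $R^{11}_{11}=R^{22}_{22}=1$ from (\ref{R-matrix-elements}), one has $e^{11}_1 e^{11}_n R_{1n}=e^{11}_1 e^{11}_n$ and $e^{22}_1 e^{22}_n R_{1n}=e^{22}_1 e^{22}_n$. Factoring the boundary matrix off the left of the relevant products, $R_{12\cdots n-1,\,n}=R_{1n}R_{2\cdots n-1,\,n}$ and $R_{1,\,2\cdots n}=R_{1n}R_{1,\,2\cdots n-1}$, the $e^{11}_1 e^{11}_n$ terms of the two sides both collapse to $e^{11}_1 e^{11}_n R_{2\cdots n-1,\,n}$ and the $e^{22}_1 e^{22}_n$ terms both collapse to $e^{22}_1 e^{22}_n R_{1,\,2\cdots n-1}$, so they coincide.

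The only substantial identity is the $e^{22}_1 e^{11}_n$ term, which after stripping the projectors reduces to the operator equation
\be
R_{1,\,2\cdots n-1}\,R_{1n}\,R_{2\cdots n-1,\,n}=R_{2\cdots n-1,\,n}\,R_{1n}\,R_{1,\,2\cdots n-1}\,.
\ee
I expect this to be the main obstacle. It is precisely the Yang-Baxter equation in the fused three-space picture in which the middle space is the tensor product of sites $2,\ldots,n-1$, with $R_{1,\,2\cdots n-1}$, $R_{1n}$ and $R_{2\cdots n-1,\,n}$ playing the roles of the three pairwise $R$-matrices; for $n=3$ it is literally the elementary Yang-Baxter equation $R_{12}R_{13}R_{23}=R_{23}R_{13}R_{12}$. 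I would prove it by induction on $n$, peeling the outermost elementary factor off $R_{1,\,2\cdots n-1}$ and repeatedly applying the elementary Yang-Baxter equation for the $R_{jk}$; alternatively it follows immediately from the well-definedness of $R^{\sigma}_{p}$ established in Appendix A (definition \ref{df:Rp}), since both sides are reduced-word realizations of one and the same element $R^{\sigma}_{p}$. Collecting the four matched patterns then yields the cocycle conditions.
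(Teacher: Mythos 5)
Your proposal is correct and follows essentially the same route as the paper: substituting the definition (\ref{eq:def-F}), reducing the $F$-basis cocycle condition to the $R$-matrix cocycle condition $R_{2\cdots n-1,\,n}R_{1,\,2\cdots n}=R_{1,\,2\cdots n-1}R_{12\cdots n-1,\,n}$ of Appendix B, which in turn follows from the Yang--Baxter equations. You simply supply the details (the four-term expansion, the charge-conservation absorption identities, and the train argument) that the paper leaves implicit.
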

\begin{proof}
Expressing the $F$-basis in terms of $R$-matrices through (\ref{eq:def-F}), 
we show that the cocycle conditions of the $F$-basis are reduced 
to those of the $R$-matrices, which are shown in Appendix B.  
\end{proof} 

From the cocycle conditions we have the following: 
\begin{lem}
\be 
F_{12 \cdots n} = F_{2 \cdots n} F_{1, 2 \cdots n}
\ee
\end{lem}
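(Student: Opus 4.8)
The plan is to prove the identity by induction on $n$, reconciling two different factorizations of the total $F$ by means of the cocycle conditions just established. The subtlety is that the recursive definition builds the total $F$ by peeling off the \emph{last} index, $F_{12 \cdots n} = F_{12 \cdots n-1} F_{12 \cdots n-1, \, n}$, whereas the asserted identity splits off the \emph{first} index; the cocycle relation is precisely what converts one into the other.

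For the base case $n=2$ the recursive definition gives $F_{12} = F_1 F_{1, \, 2}$, and since the single-site total $F$ is the identity, $F_1 = F_2 = I$, both $F_{12}$ and $F_2 F_{1, \, 2}$ reduce to $F_{1, \, 2}$, so the claim holds.

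For the inductive step I would assume the identity for $n-1$ applied to the block $1, 2, \ldots, n-1$, namely $F_{12 \cdots n-1} = F_{2 \cdots n-1} \, F_{1, \, 2 \cdots n-1}$. First I would expand the left-hand side using the recursive definition,
\be
F_{12 \cdots n} = F_{12 \cdots n-1} \, F_{12 \cdots n-1, \, n} ,
\ee
and insert the induction hypothesis to get
\be
F_{12 \cdots n} = F_{2 \cdots n-1} \, \bigl( F_{1, \, 2 \cdots n-1} \, F_{12 \cdots n-1, \, n} \bigr) .
\ee
Next I would apply the second cocycle condition, $F_{1, \, 2 \cdots n-1} \, F_{12 \cdots n-1, \, n} = F_{2 \cdots n-1, \, n} \, F_{1, \, 2 \cdots n}$, to the bracketed factor, obtaining
\be
F_{12 \cdots n} = \bigl( F_{2 \cdots n-1} \, F_{2 \cdots n-1, \, n} \bigr) \, F_{1, \, 2 \cdots n} .
\ee
Finally, recognizing the parenthesized product as the recursive definition of the total $F$ on the block $2, \ldots, n$, i.e. $F_{2 \cdots n} = F_{2 \cdots n-1} \, F_{2 \cdots n-1, \, n}$, yields $F_{12 \cdots n} = F_{2 \cdots n} \, F_{1, \, 2 \cdots n}$, as claimed.

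Since every step is a purely algebraic rearrangement of relations already available, I do not expect a genuine analytic obstacle. The only point demanding care is the bookkeeping of index ranges: one must apply the induction hypothesis to the sub-block $1, \ldots, n-1$, invoke the cocycle condition in exactly the form stated, and read the closing product as the total $F$ on the complementary sub-block $2, \ldots, n$.
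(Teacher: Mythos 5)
Your induction is correct and is precisely the argument the paper intends: it states the lemma as an immediate consequence of the cocycle conditions without writing out the details, and your inductive step (unfold the recursive definition, insert the induction hypothesis, apply the second cocycle condition, and reassemble $F_{2\cdots n-1}F_{2\cdots n-1,\,n}=F_{2\cdots n}$) is the standard way to fill them in. The only convention you need to make explicit is $F_1 = I$ for the base case, which you have done.
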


%

\subsubsection{Basic properties of the $R$-matrix}


Let us introduce some important properties of the $R$-matrix of the 
XXZ spin-chain.  

The $R$-matrix is invariant under the charge conjugation. 
For the symmetric $R$-matrix, 
we define the charge conjugation operator ${\cal C}$ by  
\be 
{\cal C}_{1 2 \cdots n} = \sigma^x_1 \cdots \sigma_n^{x} 
\ee
For a given operator $A \in End(V(\lambda_1) \otimes 
\cdots \otimes V(\lambda_n))$ 
we define ${\bar A}$ by 
${\bar A} = {\cal C}_{1 \cdots n} A {\cal C}_{1 \cdots n}$.  
For instance, we define ${\bar F}_{0, 1 \cdots n}$ by 
\be 
{\bar F}_{0, 1 \cdots n} = {\cal C}_{0 1 \cdots n} F_{0, 1 \cdots n} 
{\cal C}_{0 1 \cdots n} 
\ee


\begin{prop}
The charge conjugation operator ${\cal C}$ commutes with the 
monodromy matrix of the symmetric $R$-matrix: 
\be 
{[} {\cal C}_{0 1 \cdots n}, \,  R_{0, 1 \cdots n}  {]}= 0  
\ee
We thus have ${\bar A}_{1 \cdots n}(\lambda_0)= D_{1 \cdots n}(\lambda_0)$ and 
${\bar B}_{1 \cdots n}(\lambda_0)= C_{1 \cdots n}(\lambda_0)$.  
\end{prop}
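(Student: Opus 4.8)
The plan is to reduce the commutation relation to a single elementary symmetry of the two-site $R$-matrix and then propagate it through the product that defines the monodromy matrix. The crucial observation is that the nonzero matrix elements in (\ref{R-matrix-elements}) are invariant under the simultaneous flip of all spin indices $1 \leftrightarrow 2$, i.e. $R^{ab}_{cd}(u) = R^{\bar a \bar b}_{\bar c \bar d}(u)$ with $\bar 1 = 2$ and $\bar 2 = 1$. Since $\sigma^x$ implements exactly this flip on one site, the invariance is equivalent to $(\sigma^x_j \sigma^x_k) R_{jk} (\sigma^x_j \sigma^x_k) = R_{jk}$. First I would verify this directly from the explicit $4 \times 4$ form, observing that conjugation by the antidiagonal operator $\sigma^x_j \sigma^x_k$ reverses the ordering of both rows and columns, under which the two corner entries together with the central symmetric block $\left(\begin{smallmatrix} b & c \\ c & b \end{smallmatrix}\right)$ are left fixed.

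Next I would write the monodromy matrix as the ordered product $R_{0, 1 \cdots n} = R_{0 n} R_{0\, n-1} \cdots R_{0 1}$ and use ${\cal C}_{0 1 \cdots n}^2 = I$ to insert copies of ${\cal C}_{0 1 \cdots n}^2$ between consecutive factors. This turns the conjugate ${\cal C}_{0 1 \cdots n} R_{0, 1 \cdots n} {\cal C}_{0 1 \cdots n}$ into a product of individually conjugated factors ${\cal C}_{0 1 \cdots n} R_{0 k} {\cal C}_{0 1 \cdots n}$. In each factor the $\sigma^x$'s on the spectator sites commute with $R_{0 k}$, which acts trivially there, and cancel in pairs, leaving $(\sigma^x_0 \sigma^x_k) R_{0 k} (\sigma^x_0 \sigma^x_k) = R_{0 k}$ by the single-$R$ symmetry. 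Every factor is therefore unchanged and the whole product is invariant, which is precisely $[{\cal C}_{0 1 \cdots n}, R_{0, 1 \cdots n}] = 0$.

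To read off the consequences for the entries $A, B, C, D$, I would split the conjugation as ${\cal C}_{0 1 \cdots n} = \sigma^x_0 \, {\cal C}_{1 \cdots n}$, separating the auxiliary-space factor from the quantum-space factor. Conjugation by ${\cal C}_{1 \cdots n}$ replaces each entry by its bar, so ${\cal C}_{1 \cdots n} T_0 {\cal C}_{1 \cdots n}$ is the auxiliary matrix with entries ${\bar A}, {\bar B}, {\bar C}, {\bar D}$; conjugating in turn by $\sigma^x_0 = \left(\begin{smallmatrix} 0 & 1 \\ 1 & 0 \end{smallmatrix}\right)$ sends $\left(\begin{smallmatrix} P & Q \\ R & S \end{smallmatrix}\right)_{[0]}$ to $\left(\begin{smallmatrix} S & R \\ Q & P \end{smallmatrix}\right)_{[0]}$. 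Equating the result with $T_0$, as the commutation relation forces, gives ${\bar A} = D$ and ${\bar B} = C$, together with the mirror identities ${\bar D} = A$, ${\bar C} = B$ that are automatic from ${\cal C}^2 = I$. I do not anticipate a genuine obstacle here: the only point needing care is the bookkeeping of the spectator $\sigma^x$ factors and the preservation of the factor ordering when the conjugation is distributed across the product, so that the reduction to the two-site identity is clean.
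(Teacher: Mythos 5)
Your proof is correct and is exactly the standard argument the paper relies on but leaves implicit (the proposition is stated without proof in the text): the index-flip symmetry $R^{ab}_{cd}=R^{\bar a\bar b}_{\bar c\bar d}$ of the elements in (\ref{R-matrix-elements}) gives $(\sigma^x_j\sigma^x_k)R_{jk}(\sigma^x_j\sigma^x_k)=R_{jk}$, which propagates through the ordered product defining $R_{0,1\cdots n}$, and the splitting ${\cal C}_{01\cdots n}=\sigma^x_0\,{\cal C}_{1\cdots n}$ then yields ${\bar A}=D$ and ${\bar B}=C$. No gaps.
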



\begin{lem}[Crossing symmetry]  
The $R$-matrix has the crossing symmetry relation: 
\be 
\left( \gamma \otimes I \right) R_{12}(\lambda_1-\eta, \lambda_2) 
\left( \gamma \otimes I \right) 
= b_{21}^{-1} R_{21}^{t_1}(\lambda_2, \lambda_1)  
\ee
where $b_{21}=b(\lambda_2 - \lambda_1)$ and $\gamma$ is given by 
\be 
\gamma= \sigma^y = \left( 
\begin{array}{cc}  
0 & -i \\ 
i & 0 
\end{array}
\right) 
\ee
\end{lem}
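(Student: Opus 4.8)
The plan is to prove the identity by direct evaluation of both sides on the standard basis of $V\otimes V$, since the only data involved are the six nonzero entries of the $4\times 4$ $R$-matrix and the explicit trigonometric forms of $b$ and $c$. The first step is to record how conjugation by $\gamma=\sigma^y$ in the first space acts on matrix units. Writing $\gamma = -i\,e^{1,2}+i\,e^{2,1}$, a short computation gives $\gamma\, e^{a,c}\,\gamma = (-1)^{a+c}\, e^{\bar a,\bar c}$, where $\bar a = 3-a$. Applying this factor by factor to $R(u-\eta)=\sum R^{ab}_{cd}(u-\eta)\,e^{a,c}\otimes e^{b,d}$ (with $u=\lambda_1-\lambda_2$) and relabelling indices, I would obtain that the coefficient of $e^{a,c}\otimes e^{b,d}$ on the left-hand side equals $(-1)^{a+c}R^{\bar a\, b}_{\bar c\, d}(u-\eta)$.

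Second, I would reduce the right-hand side to the same bookkeeping. Using the index convention for $R_{kj}$ from (\ref{defAjk}), the space-swapped matrix $R_{21}(\lambda_2,\lambda_1)$ has $e^{a,c}\otimes e^{b,d}$-coefficient $R^{b a}_{d c}(-u)$, and the partial transpose $t_1$ exchanges the upper and lower index in the first space, so that $b_{21}^{-1}R_{21}^{t_1}(\lambda_2,\lambda_1)$ has $e^{a,c}\otimes e^{b,d}$-coefficient $b(-u)^{-1}R^{b\, c}_{d\, a}(-u)$. At this point the charge-conservation rule $R^{ab}_{cd}=0$ unless $a+b=c+d$ shows that both sides are supported exactly on the indices with $a+d=b+c$, which leaves only the six entries $(a,b,c,d)\in\{(1111),(2222),(2121),(1212),(1122),(2211)\}$ to check.

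Third, I would verify these six entries one at a time by inserting the explicit values from (\ref{R-matrix-elements}) and (\ref{bc}). The weight-preserving entries reduce either to the trivial $1=1$ or to the relation $b(u-\eta)=1/b(-u)$, while the two index-flipping entries reduce to $-c(u-\eta)=c(-u)/b(-u)$; both follow from $\sinh(-x)=-\sinh(x)$ together with the fact that shifting the argument by $-\eta$ turns the factor $\sinh(u+\eta)$ in the denominators into $\sinh(u)$. For instance $b(u-\eta)=\sinh(u-\eta)/\sinh(u)$ while $b(-u)=\sinh(u)/\sinh(u-\eta)$, giving the first relation immediately, and a similar cancellation handles the second.

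The computation is routine; the only place demanding care — and the step I expect to be the main obstacle — is the index bookkeeping on the right-hand side, namely correctly composing the space-swap convention of $R_{21}$ from (\ref{defAjk}) with the partial transpose $t_1$ and tracking the signs $(-1)^{a+c}$ produced by $\gamma$. Once these conventions are pinned down, the charge-conservation constraint collapses the verification to the handful of scalar identities above, and the crossing relation follows.
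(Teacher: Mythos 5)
Your verification is correct: the conjugation rule $\gamma e^{a,c}\gamma=(-1)^{a+c}e^{\bar a,\bar c}$, the index bookkeeping for $R_{21}^{t_1}$ under the convention of (\ref{defAjk}), and the six scalar identities (which reduce to $b(u-\eta)=1/b(-u)$ and $-c(u-\eta)=c(-u)/b(-u)$, both immediate from $\sinh(-x)=-\sinh x$) all check out. The paper states this lemma without proof, only remarking that its crossing convention differs slightly from Ref.~\cite{MS2000}, so your direct entrywise computation is exactly the verification the paper leaves to the reader.
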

Here the crossing symmetry 
is slightly different from \cite{MS2000}.

\begin{lem}(Crossing symmetry of the monodromy matrix) 
\be 
\gamma_0 R_{0, 1 \cdots n}(\lambda_0-\eta; \xi_1, \ldots, \xi_n) \gamma_0 
=  \left(  \prod_{i=1}^{n} b^{-1}(\xi_i - \lambda_0) \right)
R^{t_0}_{1 \cdots n, 0}(\lambda_0; \xi_1, \ldots, \xi_n)  
\ee
\end{lem}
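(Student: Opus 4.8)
The plan is to lift the single-site crossing symmetry of the preceding lemma to the full monodromy matrix by applying it factor-by-factor. Recall that the monodromy matrix is the ordered product $R_{0, 1 \cdots n}(\lambda_0; \xi_1, \ldots, \xi_n) = R_{0n}R_{0,n-1} \cdots R_{01}$, where $R_{0m}=R_{0m}(\lambda_0, \xi_m)$ acts nontrivially on the auxiliary space $0$ and the $m$th quantum space only. Since $\gamma = \sigma^y$ satisfies $\gamma^2 = I$, the operator $\gamma_0$ on the auxiliary space is an involution, so I would first insert $\gamma_0 \gamma_0 = I_0$ between every pair of consecutive factors, obtaining
\be
\gamma_0 R_{0, 1 \cdots n}(\lambda_0-\eta) \gamma_0
= \left( \gamma_0 R_{0n} \gamma_0 \right) \left( \gamma_0 R_{0,n-1} \gamma_0 \right)
\cdots \left( \gamma_0 R_{01} \gamma_0 \right) \, ,
\ee
where each factor now carries the shifted argument $R_{0m}(\lambda_0-\eta, \xi_m)$ coming from the left-hand side of the statement.

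Next I would apply the crossing symmetry of the $R$-matrix to each factor, identifying the first tensor slot of the lemma with the auxiliary space $0$, the second with the $m$th quantum space, and setting $\lambda_1 = \lambda_0$, $\lambda_2 = \xi_m$. This gives, for each $m$,
\be
\gamma_0 \, R_{0m}(\lambda_0-\eta, \xi_m) \, \gamma_0
= b(\xi_m - \lambda_0)^{-1} \, R_{m0}^{t_0}(\xi_m, \lambda_0) \, .
\ee
Substituting these into the ordered product and collecting the scalar prefactors produces exactly the required coefficient $\prod_{i=1}^{n} b^{-1}(\xi_i - \lambda_0)$, leaving the operator product $R_{n0}^{t_0}(\xi_n,\lambda_0) R_{n-1,0}^{t_0}(\xi_{n-1},\lambda_0) \cdots R_{10}^{t_0}(\xi_1,\lambda_0)$, in which the index order has been reversed relative to the monodromy matrix.

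The main step, and the one requiring care, is recognizing that this reversed product of partially transposed operators reassembles into a single partial transpose. The point is that the transposition $t_0$ acts only on the auxiliary space, while on the distinct quantum spaces the factors $R_{m0}$ commute (they live on different sites). Hence for any two such operators sharing only space $0$ one has $(AB)^{t_0} = B^{t_0} A^{t_0}$, since transposing in space $0$ reverses the order there while the tensor factors on the distinct quantum spaces may be freely reordered. Iterating this yields
\be
R_{n0}^{t_0}(\xi_n,\lambda_0) \cdots R_{10}^{t_0}(\xi_1,\lambda_0)
= \left( R_{10}(\xi_1,\lambda_0) \cdots R_{n0}(\xi_n,\lambda_0) \right)^{t_0}
= R_{1 \cdots n, 0}^{t_0}(\lambda_0; \xi_1, \ldots, \xi_n) \, ,
\ee
using the definition $R_{1 \cdots n, 0} = R_{10}R_{20}\cdots R_{n0}$.

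Combining the three steps gives the claimed identity. I expect the order-reversal under partial transposition to be the only genuine obstacle: one must verify carefully that the spectral arguments $(\xi_m, \lambda_0)$ are threaded through correctly and that the reversal of the product index exactly matches the direction built into the definition of $R_{1\cdots n,0}$. The involution property $\gamma^2 = I$ is what makes the factor-by-factor insertion clean and free of any sign ambiguity, so no further subtlety arises there.
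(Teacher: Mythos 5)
Your proof is correct and is essentially the argument the paper intends: the paper states this lemma without proof, immediately after the single-site crossing relation, and your factor-by-factor application of that relation (inserting $\gamma_0\gamma_0=I$, picking up $b^{-1}(\xi_m-\lambda_0)$ from each site, and reassembling $R_{n0}^{t_0}\cdots R_{10}^{t_0}=(R_{10}\cdots R_{n0})^{t_0}=R_{1\cdots n,0}^{t_0}$ via the order-reversal of the partial transpose on the shared auxiliary space) is exactly the natural completion. The one step you flag as delicate --- that $(AB)^{t_0}=B^{t_0}A^{t_0}$ for operators sharing only space $0$, which requires the quantum-space factors to commute because they act on distinct sites --- is handled correctly and matches the paper's definition $R_{1\cdots n,0}=R_{10}R_{20}\cdots R_{n0}$.
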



Let us introduce $\dagger$ operation. 
We shall use it when we pseudo-diagonalize the $B$ operators. 
\begin{df}
For $X_{1 \cdots n}(\lambda_1, \cdots, \lambda_n) 
\in End(V(\lambda_1) \otimes \cdots \otimes V(\lambda_n))$ 
we define $X^{\dagger}_{1 \cdots n}$ by 
\be 
X^{\dagger}_{1 \cdots n}(\lambda_1, \ldots, \lambda_n)
= X^{t_1 \cdots t_n}_{1 \cdots n}(-\lambda_1, \cdots, -\lambda_n)
\ee  
\end{df}
Here we note 
$(X^{\dagger})^{\dagger} = X$, and 
$(XY)^{\dagger} = Y^{\dagger} X^{\dagger}$.  
It is easy to show $R_{12}^{\dagger} = R_{21}$.

\begin{lem} Under the $\dagger$ operation the monodromy matrix is 
given by the following:  
\be 
 R^{\dagger}_{0, 1 \cdots n} = R_{0, 1 \cdots n}^{-1} 
= R_{1 \cdots n, \, 0}  
\label{eq:Rdag-inverse} \\  
\ee
\end{lem}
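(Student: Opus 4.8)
The plan is to exploit the factorized form of the monodromy matrix together with the antihomomorphism property of the $\dagger$ operation and the inversion (unitarity) relation (\ref{eq:inv-rel}). Recall that the monodromy matrix factorizes as an ordered product of two-site $R$-matrices,
\be
R_{0, 1 \cdots n} = R_{0n} R_{0, n-1} \cdots R_{02} R_{01} \, ,
\ee
while the conjugate ordering gives
\be
R_{1 \cdots n, 0} = R_{10} R_{20} \cdots R_{n0} \, .
\ee

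First I would compute $R^{\dagger}_{0, 1 \cdots n}$ directly. Using the antihomomorphism property $(XY)^{\dagger} = Y^{\dagger} X^{\dagger}$ repeatedly, the dagger of the product reverses the order of the factors,
\be
R^{\dagger}_{0, 1 \cdots n} = (R_{01})^{\dagger} (R_{02})^{\dagger} \cdots (R_{0n})^{\dagger} \, .
\ee
Each single factor $R_{0j}$ acts nontrivially only on the spaces labelled $0$ and $j$, so transposition in the remaining spaces acts trivially and only the arguments $\lambda_0$ and $\xi_j$ are negated; hence the two-space identity $R_{12}^{\dagger} = R_{21}$ gives $(R_{0j})^{\dagger} = R_{j0}$. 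Substituting yields $R^{\dagger}_{0, 1 \cdots n} = R_{10} R_{20} \cdots R_{n0} = R_{1 \cdots n, 0}$, which establishes the second equality in (\ref{eq:Rdag-inverse}).

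It then remains to identify $R_{1 \cdots n, 0}$ with the inverse of $R_{0, 1 \cdots n}$. I would multiply the two factorized products and cancel pairs from the middle outward using the inversion relation (\ref{eq:inv-rel}): in
\be
R_{0, 1 \cdots n} R_{1 \cdots n, 0} = R_{0n} \cdots R_{02} R_{01} R_{10} R_{20} \cdots R_{n0} \, ,
\ee
the innermost pair satisfies $R_{01} R_{10} = I^{\otimes(n+1)}$, after which $R_{02} R_{20} = I^{\otimes(n+1)}$, and so on, so that the product telescopes to the identity. This gives $R_{1 \cdots n, 0} = R_{0, 1 \cdots n}^{-1}$, completing the proof together with the first step.

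The calculation is essentially bookkeeping, so I do not expect a genuine obstacle; the one point requiring care is verifying that the single-factor identity $(R_{0j})^{\dagger} = R_{j0}$ survives when $R_{0j}$ is embedded in the full $(n+1)$-fold tensor product, that is, that the transpositions in the spectator spaces and the negation of the spectator arguments indeed act trivially on that factor. Once this is checked, both the reversal of the product under $\dagger$ and the telescoping cancellation follow mechanically.
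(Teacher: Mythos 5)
Your proof is correct and follows exactly the route the paper intends: the paper states the lemma without an explicit proof, but the ingredients it supplies immediately beforehand and in Appendix~B --- the single-factor identity $R_{12}^{\dagger}=R_{21}$, the antihomomorphism property of $\dagger$, and the telescoping unitarity relation $R_{1,\,2\cdots n}R_{23\cdots n,\,1}=I^{\otimes L}$ of lemma~B.1(ii) --- are precisely the steps you carry out. Your care about the spectator spaces under transposition and argument negation is the right point to check, and it goes through.
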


We define operators $A^{\dagger}_{1 \cdots n}$, 
$B^{\dagger}_{1 \cdots n}$, $C^{\dagger}_{1 \cdots n}$ and 
$D^{\dagger}_{1 \cdots n}$ by 
\be 
R^{\dagger}_{0, 1 \cdots n} = 
\left( \begin{array}{cc} 
A^{\dagger}_{1 \cdots n}(\lambda_0)  &  C^{\dagger}_{1 \cdots n}(\lambda_0) \\
B^{\dagger}_{1 \cdots n}(\lambda_0) & D^{\dagger}_{1 \cdots n}(\lambda_0)
\end{array} 
\right)_{[0]} 
\ee

\begin{prop} Under the $\dagger$ operation the monodromy matrix  
is given by 
\bea 
&  & 
R_{0, 1 2\cdots n}^{\dagger} = 
\left( \begin{array}{cc} 
A^{\dagger}_{1 \cdots n}(\lambda_0)  &  C^{\dagger}_{1 \cdots n}(\lambda_0) \\
B^{\dagger}_{1 \cdots n}(\lambda_0) & D^{\dagger}_{1 \cdots n}(\lambda_0)
\end{array} 
\right)_{[0]} \non \\
&  & \quad = \left(\prod_{i=1}^{n} b(\xi_i-\lambda_0) \right)
\left( \begin{array}{cc} 
A_{1 \cdots n}(\lambda_0-\eta)  &  - B_{1 \cdots n}(\lambda_0-\eta) \\
- C_{1 \cdots n}(\lambda_0-\eta) & D_{1 \cdots n}(\lambda_0-\eta)
\end{array} 
\right)_{[0]} 
\eea
\end{prop}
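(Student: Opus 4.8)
The plan is to reduce the statement to the two structural facts established immediately above it: the inversion lemma (\ref{eq:Rdag-inverse}), $R^{\dagger}_{0, 1 \cdots n} = R_{1 \cdots n, \, 0}$, and the crossing symmetry of the monodromy matrix. The underlying mechanism is that the $\dagger$ operation reverses the order of the elementary $L$-operators, so it turns the monodromy matrix ``inside out'' into $R_{1 \cdots n, \, 0}$; the crossing symmetry then converts this reversed monodromy matrix back into the original one, but at the shifted spectral parameter $\lambda_0 \to \lambda_0 - \eta$, at the cost of the scalar factor $\prod_{i=1}^{n} b(\xi_i - \lambda_0)$ and a conjugation by $\gamma_0 = \sigma^y$ in the auxiliary space.

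Concretely, by the definition of $A^{\dagger}, B^{\dagger}, C^{\dagger}, D^{\dagger}$ it suffices to identify the four auxiliary-space blocks of $R^{\dagger}_{0, 1 \cdots n}$. First I would use (\ref{eq:Rdag-inverse}) to write $R^{\dagger}_{0, 1 \cdots n}(\lambda_0) = R_{1 \cdots n, \, 0}(\lambda_0)$, then take the partial transpose $t_0$ in the auxiliary space and invoke the crossing symmetry of the monodromy matrix to obtain
\be
R^{t_0}_{1 \cdots n, \, 0}(\lambda_0) = \left( \prod_{i=1}^{n} b(\xi_i - \lambda_0) \right) \, \gamma_0 \, R_{0, 1 \cdots n}(\lambda_0 - \eta) \, \gamma_0 \, .
\ee
Next I would insert the block form of $R_{0, 1 \cdots n}(\lambda_0 - \eta)$ with entries $A, B, C, D$ evaluated at $\lambda_0 - \eta$, and use that $\gamma = \sigma^y$ satisfies $\gamma^{t} = - \gamma$ and that conjugation by $\gamma$ in the auxiliary space interchanges the two diagonal blocks while sending the off-diagonal blocks to their negatives (and interchanging them). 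Finally I would undo the transpose $t_0$, which again interchanges the two off-diagonal blocks, and match the resulting four blocks against the definition $R^{\dagger}_{0, 1 \cdots n} = \left( \begin{array}{cc} A^{\dagger} & C^{\dagger} \\ B^{\dagger} & D^{\dagger} \end{array} \right)_{[0]}$, reading off each of the relations for $A^{\dagger}, B^{\dagger}, C^{\dagger}, D^{\dagger}$.

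The main obstacle is the careful bookkeeping of the auxiliary-space transpose $t_0$ together with the $\gamma_0$-conjugation and the sign $\gamma^{t} = - \gamma$: it is exactly this combination that fixes the placement of the diagonal operators $A$ and $D$ and the minus signs attached to the off-diagonal $B$ and $C$. One must keep in mind that, since $\dagger$ already contains a transpose in the auxiliary space, $R^{\dagger}_{0, 1 \cdots n}$ carries $B^{\dagger}$ and $C^{\dagger}$ in interchanged positions relative to the undaggered monodromy matrix, so the off-diagonal blocks have to be matched with this interchange taken into account. A reliable independent check, which pins down the correct diagonal placement and the overall scalar, is the single-site case $n = 1$: there both sides are explicit $4 \times 4$ matrices and the identity collapses to the elementary functional relations $b(\lambda_0 - \eta - \xi_1) = b(\xi_1 - \lambda_0)^{-1}$ and $c(\xi_1 - \lambda_0)/b(\xi_1 - \lambda_0) = - c(\lambda_0 - \eta - \xi_1)$, which follow at once from the definitions (\ref{bc}) of $b$ and $c$.
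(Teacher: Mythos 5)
Your approach is correct and is exactly the one the paper intends: the proposition is stated without a separate proof precisely because it is the combination of the inversion lemma $R^{\dagger}_{0,1\cdots n}=R_{1\cdots n,\,0}$ with the crossing symmetry of the monodromy matrix, followed by the block bookkeeping you describe. One important caveat: if you carry out that bookkeeping as you yourself state it — $\gamma_0$-conjugation \emph{interchanges} the two diagonal blocks and negates-and-interchanges the off-diagonal ones, and the final $t_0$ restores the off-diagonal positions but leaves the diagonal blocks where $\gamma_0$ put them — you arrive at $A^{\dagger}(\lambda_0)=\bigl(\prod_i b(\xi_i-\lambda_0)\bigr)D(\lambda_0-\eta)$ and $D^{\dagger}(\lambda_0)=\bigl(\prod_i b(\xi_i-\lambda_0)\bigr)A(\lambda_0-\eta)$, i.e.\ with the diagonal entries in the opposite positions from the statement as printed. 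Your proposed $n=1$ check confirms this: directly, $A^{\dagger}(\lambda_0)=\mathrm{diag}(1,b(\xi_1-\lambda_0))$ on site $1$, whereas $b(\xi_1-\lambda_0)A(\lambda_0-\eta)=\mathrm{diag}(b(\xi_1-\lambda_0),1)$, so only the $D(\lambda_0-\eta)$ placement works; the same swapped form is what is consistent with the diagonalization formulas for $A^{\dagger}$ and $D$ in the $F$-basis and with the standard quantum-determinant inversion of the monodromy matrix. So this is not a gap in your argument but rather your argument correctly exposing an apparent misprint (the transposition of $A(\lambda_0-\eta)$ and $D(\lambda_0-\eta)$ on the diagonal) in the displayed statement; just be sure not to force the final matching onto the printed form.
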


\subsubsection{The diagonalized forms of operators $A$ and $D$}  


Let us give the diagonalized forms of the $A$ and $D$ operators 
\cite{MS2000}.   

The following criterion for the $F$-basis to be non-singular 
should be useful. 
\begin{prop}
The determinants of the partial and total $F$ matrices are given by 
\be
{\rm det} F_{0, 1 \cdots n} 
= \prod_{j=1}^{n} b(\lambda_{0}-\xi_{j}) \, , \quad 
{\rm det} F_{1\cdots n} = \prod_{1 \le i <j \le n} b(\xi_i - \xi_j) 
\ee
\label{prop:detF}
\end{prop}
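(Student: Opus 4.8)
The plan is to prove the total-$F$ formula by induction on $n$, reducing it at each step to the partial-$F$ formula, and to establish the partial-$F$ formula directly from the block structure of the monodromy matrix in the auxiliary space.

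First I would dispose of the total $F$. Using the factorization $F_{1\cdots n} = F_{2\cdots n}\,F_{1,\,2\cdots n}$ (the last displayed lemma) together with multiplicativity of the determinant, one gets $\det F_{1\cdots n} = \det F_{2\cdots n}\cdot\det F_{1,\,2\cdots n}$. The factor $\det F_{2\cdots n}$ is supplied by the induction hypothesis as $\prod_{2\le i<j\le n}b(\xi_i-\xi_j)$, while $\det F_{1,\,2\cdots n}$ is an instance of the partial-$F$ determinant with distinguished index $1$ and spectral parameter $\lambda_1=\xi_1$, contributing $\prod_{j=2}^n b(\xi_1-\xi_j)$. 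The product of the two reproduces $\prod_{1\le i<j\le n}b(\xi_i-\xi_j)$, so the whole statement rests on the partial-$F$ identity; the base case $n=1$ is immediate since $F_1=I$.

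For the partial $F$ I would write $F_{0,\,1\cdots n}=e^{11}_0+e^{22}_0 R_{0,\,1\cdots n}$ in $2\times 2$ block form over the auxiliary space $V_0$. Since $R_{0,\,1\cdots n}=\left(\begin{smallmatrix}A & B\\ C & D\end{smallmatrix}\right)_{[0]}$, the term $e^{11}_0$ places $I$ in the $(1,1)$ block while $e^{22}_0 R_{0,\,1\cdots n}$ fills only the lower row, so $F_{0,\,1\cdots n}$ is block lower-triangular with diagonal blocks $I$ and $D_{1\cdots n}(\lambda_0)$. Hence $\det F_{0,\,1\cdots n}=\det D_{1\cdots n}(\lambda_0)$, and the task reduces to evaluating the determinant of the $D$-operator. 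Here I would exploit that $R_{0,\,1\cdots n}=R_{0n}\cdots R_{01}$ is an ordered product and that $D=\langle\downarrow_0|R_{0n}\cdots R_{01}|\downarrow_0\rangle$: because the full $R$-matrix conserves total $S^z$ and the auxiliary line enters and leaves in the down state, $D$ preserves the physical magnon number and is triangular with respect to the magnon-number grading, the purely diagonal contribution being the product of the $(2,2)$-blocks $\mathrm{diag}(b(\lambda_0-\xi_j),1)$. Collecting the diagonal factors and checking that the magnon-flipping $c$-type vertices feed only strictly lower-triangular parts yields $\prod_{j=1}^n b(\lambda_0-\xi_j)$; I would run this as an induction that peels off $R_{0n}$ one factor at a time. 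As a consistency check one can cross-verify through the crossing symmetry of the monodromy matrix, which relates the $D$-block at $\lambda_0$ to the $A$-block at $\lambda_0-\eta$.

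The main obstacle is the determinant bookkeeping in this last step: one must pin down exactly how the factors $b(\lambda_0-\xi_j)$ are counted across the $S^z$-sectors so that the result is the stated first-power product and not an inflated power, which means being precise about the determinant convention natural to the $F$-basis (as opposed to the naive full operator determinant over $V_0\otimes V_1\otimes\cdots\otimes V_n$) and about the dimensions introduced when an operator such as $F_{2\cdots n}$ is tacitly tensored with the identity on the remaining sites. Getting this normalization consistent — so that tensoring by identities and passing through the magnon sectors do not corrupt the telescoping — is where the genuine care is required; the algebraic skeleton of block-triangularity plus multiplicativity plus induction is otherwise routine.
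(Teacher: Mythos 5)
Your route is the same as the paper's: the printed proof is a single pointer to Appendix D, which supplies exactly your two ingredients --- the block lower-triangular form $F_{0,1\cdots n}=\left(\begin{smallmatrix}I&0\\ C&D\end{smallmatrix}\right)_{[0]}$, reducing the partial determinant to $\det D_{1\cdots n}(\lambda_0)$, and the triangularity of $D$ (proved by peeling off one $L$-operator at a time) with diagonal $\bigotimes_i\mathrm{diag}(b_{0i},1)$; the total-$F$ statement then follows from the recursive cocycle factorization as you describe. However, the exponent bookkeeping you flag as the main obstacle is not a normalization convention that sufficient care will fix: the determinant of the full $2^{n+1}\times 2^{n+1}$ matrix is $\prod_{j}b(\lambda_0-\xi_j)^{2^{n-1}}$ and that of the $2^{n}\times 2^{n}$ total $F$ is $\prod_{i<j}b(\xi_i-\xi_j)^{2^{n-2}}$, because $\det\bigl(\bigotimes_{i=1}^n\mathrm{diag}(b_{0i},1)\bigr)=\prod_i b_{0i}^{2^{n-1}}$. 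Already at $n=2$ one finds $\det F_{0,12}=\det D_{12}(\lambda_0)=b_{01}^{2}b_{02}^{2}$, not $b_{01}b_{02}$, since the diagonal of $D_{12}$ is $\mathrm{diag}(b_{01}b_{02},\,b_{01},\,b_{02},\,1)$. So the first-power formula as printed (and as you set out to prove) holds only up to these exponents; the version with the exponents is the one in Maillet--Sanchez de Santos. Since the proposition is invoked purely as a non-singularity criterion for the $F$-basis, the discrepancy is harmless, but your proof should state the exponents explicitly rather than try to engineer a telescoping that removes them --- no such telescoping exists.
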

Proposition \ref{prop:detF} follows 
from lemma \ref{lem:eigenAD} of Appendix D. 

We can show the diagonalized forms of 
operators $A$ and $D$ as follows \cite{MS2000}.  
\begin{prop}[Diagonalization of $A$ and $D$]
\bea 
F_{1 \cdots n} D_{1 \cdots n}(\lambda_0) F_{1 \cdots n}^{-1} 
 & = &  
\bigotimes_{i=1}^{n} 
\left( 
\begin{array}{cc}
b_{0 i} & 0 \\
0 & 1 
\end{array}
\right)_{[i]} 
\label{eq:diagD} \\ 
%
%
{\bar F}_{1 \cdots n} A_{1 \cdots n}(\lambda_0) 
{\bar F}_{1 \cdots n}^{-1}  & = & 
\bigotimes_{i=1}^{n} 
\left( 
\begin{array}{cc}
1 & 0 \\ 
0 & b_{0 i} 
\end{array}
\right)_{[i]} \, , 
\label{eq:diagA} 
\eea
where  $b_{0i}=b(\lambda_0 - \xi_i)$
\label{prop:diagDA}
\end{prop}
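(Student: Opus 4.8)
The plan is to prove the $D$-formula (\ref{eq:diagD}) first and then obtain the $A$-formula (\ref{eq:diagA}) for free by charge conjugation. The charge-conjugation proposition gives ${\cal C}_{1\cdots n}A_{1\cdots n}(\lambda_0){\cal C}_{1\cdots n}=D_{1\cdots n}(\lambda_0)$, and with ${\bar F}_{1\cdots n}={\cal C}_{1\cdots n}F_{1\cdots n}{\cal C}_{1\cdots n}$ together with ${\cal C}_{1\cdots n}^2=I$ one has ${\bar F}_{1\cdots n}A_{1\cdots n}{\bar F}_{1\cdots n}^{-1}={\cal C}_{1\cdots n}\bigl(F_{1\cdots n}D_{1\cdots n}F_{1\cdots n}^{-1}\bigr){\cal C}_{1\cdots n}$. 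Since $\sigma^x\,\mathrm{diag}(b,1)\,\sigma^x=\mathrm{diag}(1,b)$ on each site, conjugating the right-hand side of (\ref{eq:diagD}) by ${\cal C}_{1\cdots n}=\sigma^x_1\cdots\sigma^x_n$ produces exactly $\bigotimes_{i=1}^n\mathrm{diag}(1,b_{0i})_{[i]}$, which is (\ref{eq:diagA}). Thus the whole of Proposition \ref{prop:diagDA} reduces to (\ref{eq:diagD}).

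For (\ref{eq:diagD}) I would argue by induction on $n$. The base case $n=1$ is immediate: with $F_1=1$, the $(2,2)$ auxiliary-space entry of $L_1=R_{01}(\lambda_0,\xi_1)$ is $\mathrm{diag}(b_{01},1)_{[1]}$. For the inductive step I peel off the last site using the factorization $T_{0,1\cdots n}=L_n\,T_{0,1\cdots n-1}$ coming from (\ref{eq:monodromy}). Reading off the $(2,2)$ block in the auxiliary space $[0]$ of this product gives the recursion
\be
D_{1\cdots n}(\lambda_0)=\delta_n\,D_{1\cdots n-1}(\lambda_0)+\gamma_n\,B_{1\cdots n-1}(\lambda_0),
\ee
where $\delta_n=\mathrm{diag}(b_{0n},1)_{[n]}$ is the lower-right entry and $\gamma_n$ (proportional to the local operator $e^{1,2}_n$ with coefficient $c(\lambda_0-\xi_n)$) is the lower-left entry of the site-$n$ $L$-operator. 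Combining this with the recursion $F_{1\cdots n}=F_{1\cdots n-1}F_{1\cdots n-1,\,n}$ for the total $F$, and the partial twist $F_{1\cdots n-1,\,n}=e^{22}_n+e^{11}_n R_{1\cdots n-1,\,n}$ from (\ref{eq:def-F}), the task becomes to control the conjugation of the recursion above by $F_{1\cdots n-1,\,n}$.

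The crux, and the step I expect to be the main obstacle, is to show that conjugation by the partial $F_{1\cdots n-1,\,n}$ absorbs the off-diagonal term; concretely, that
\be
F_{1\cdots n-1,\,n}\bigl(\delta_n D_{1\cdots n-1}+\gamma_n B_{1\cdots n-1}\bigr)F_{1\cdots n-1,\,n}^{-1}=\delta_n\,D_{1\cdots n-1}.
\ee
This is where the factorizing property of the twist enters: since $R_{1\cdots n-1,\,n}=R_{1n}R_{2n}\cdots R_{n-1,n}$ is precisely the product entangling site $n$ with sites $1,\ldots,n-1$, the Yang--Baxter relation (\ref{eq:RTT=TTR}) for the $L$-operators, together with the cocycle conditions for the $F$-basis, should force the $\gamma_n B_{1\cdots n-1}$ contribution to cancel while leaving $\delta_n D_{1\cdots n-1}$ invariant (note $\delta_n$ and $D_{1\cdots n-1}$ act on disjoint sites and hence commute). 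Granting this, $\delta_n$ commutes past $F_{1\cdots n-1}$, and the inductive hypothesis $F_{1\cdots n-1}D_{1\cdots n-1}F_{1\cdots n-1}^{-1}=\bigotimes_{i=1}^{n-1}\mathrm{diag}(b_{0i},1)_{[i]}$ closes the induction, producing $\bigotimes_{i=1}^{n}\mathrm{diag}(b_{0i},1)_{[i]}$. This cancellation is exactly the computation carried out in lemma \ref{lem:eigenAD} of Appendix D, from which the companion determinant evaluation of Proposition \ref{prop:detF} is also obtained.
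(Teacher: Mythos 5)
Your reduction of (\ref{eq:diagA}) to (\ref{eq:diagD}) by charge conjugation is correct, and your inductive setup for (\ref{eq:diagD}) — peeling off the last site via $T_{0,1\cdots n}=L_nT_{0,1\cdots n-1}$, which gives $D_{1\cdots n}=D_nD_{1\cdots n-1}+C_nB_{1\cdots n-1}$ with $D_n=\mathrm{diag}(b_{0n},1)_{[n]}$ and $C_n=c_{0n}e_n^{12}$ — is a legitimate mirror image of what the paper does (the paper instead peels off the \emph{first} site, writing $R_{0,1\cdots n}=R_{0,2\cdots n}R_{0,1}$ and $F_{1\cdots n}=F_{2\cdots n}F_{1,2\cdots n}$ via the cocycle condition). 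The problem is that the one step carrying all the content, namely
\be
F_{1\cdots n-1,\,n}\bigl(D_n D_{1\cdots n-1}+C_n B_{1\cdots n-1}\bigr)F_{1\cdots n-1,\,n}^{-1}=D_n\,D_{1\cdots n-1}\,,
\non
\ee
is asserted rather than proved: you say the Yang--Baxter relation and the cocycle conditions ``should force'' the cancellation. Given the inductive hypothesis, this identity is logically equivalent to the statement being proved at level $n$, so leaving it as a black box leaves the proof empty at its core. Moreover, your attribution of this cancellation to lemma \ref{lem:eigenAD} is incorrect: that lemma only establishes that $A$ and $D$ are triangular and computes their diagonal entries (whence proposition \ref{prop:detF}); it involves no conjugation by $F$ at all and cannot supply the missing step.

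To close the gap you must actually perform the conjugation. The paper's route shows how: writing everything as $2\times 2$ blocks in the space of the peeled site, one has $F_{1,2\cdots n}=e_1^{11}+e_1^{22}R_{1,2\cdots n}$ with lower-triangular block form $\bigl(\begin{smallmatrix}I&0\\ C'&D'\end{smallmatrix}\bigr)$ (entries evaluated at $\xi_1$), and $D_{1\cdots n}=\bigl(\begin{smallmatrix}b_{01}D&0\\ c_{01}C&D\end{smallmatrix}\bigr)$ (entries evaluated at $\lambda_0$); multiplying out, the off-diagonal survives unless $b_{01}C'D+c_{01}D'C=DC'$ and $[D',D]=0$, which are exactly the quadratic commutation relations following from (\ref{eq:RTT=TTR}). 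In your last-site version the analogous computation must be done with $R_{1\cdots n-1,\,n}$ playing the role of the partial twist and the $C_nB_{1\cdots n-1}$ term being absorbed by the corresponding $CB$ and $DB$ exchange relations. Until that computation is written down, the proof is incomplete.
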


\begin{prop} [Diagonalization of $A^{\dagger}$ and $D^{\dagger}$]
\bea 
F_{1 \cdots n} A^{\dagger}_{1 \cdots n}(\lambda_0) F_{1 \cdots n}^{-1} 
& = & 
\bigotimes_{i=1}^{n} 
\left( 
\begin{array}{cc}
1 & 0 \\
0 & b_{i 0} 
\end{array}
\right)_{[i]} \label{eq:diagA-dag} \\ 
{\bar F}_{1 \cdots n} D^{\dagger}_{1 \cdots n}(\lambda_0) 
{\bar F}_{1 \cdots n}^{-1} & = & 
\bigotimes_{i=1}^{n} 
\left( 
\begin{array}{cc}
b_{i 0} & 0 \\ 
0 & 1  
\end{array}
\right)_{[i]}  \label{eq:diagD-dag}           
\eea  
where $b_{i0}=b(\xi_i - \lambda_0)$ 
\label{prop:diagADdagger}
\end{prop}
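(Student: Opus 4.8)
The plan is to read both identities off from the diagonalizations of $A$ and $D$ in Proposition \ref{prop:diagDA}, evaluated at the shifted rapidity $\lambda_0-\eta$. The bridge is the $\dagger$-relation for the monodromy matrix established just above, combined with the crossing symmetry: since conjugation by $\gamma=\sigma^y$ in the auxiliary space acts by $\gamma\left(\begin{array}{cc}A&B\\C&D\end{array}\right)_{[0]}\gamma=\left(\begin{array}{cc}D&-C\\-B&A\end{array}\right)_{[0]}$, the $\dagger$-operation interchanges the two diagonal entries of the monodromy matrix. Hence $A^{\dagger}_{1\cdots n}(\lambda_0)=\left(\prod_{i=1}^n b(\xi_i-\lambda_0)\right)D_{1\cdots n}(\lambda_0-\eta)$ and $D^{\dagger}_{1\cdots n}(\lambda_0)=\left(\prod_{i=1}^n b(\xi_i-\lambda_0)\right)A_{1\cdots n}(\lambda_0-\eta)$. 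It is precisely this $A$--$D$ interchange that makes $F$ the correct conjugator for $A^{\dagger}$ and $\bar F$ the correct conjugator for $D^{\dagger}$, in agreement with the pairing in the statement (recall that $F$ diagonalizes $D$ and $\bar F$ diagonalizes $A$).

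Granting this, I would write $F_{1\cdots n}A^{\dagger}_{1\cdots n}(\lambda_0)F_{1\cdots n}^{-1}=\left(\prod_{i} b(\xi_i-\lambda_0)\right)F_{1\cdots n}D_{1\cdots n}(\lambda_0-\eta)F_{1\cdots n}^{-1}$ and insert (\ref{eq:diagD}) with $\lambda_0$ replaced by $\lambda_0-\eta$, obtaining $\left(\prod_i b(\xi_i-\lambda_0)\right)\bigotimes_i\left(\begin{array}{cc}b(\lambda_0-\eta-\xi_i)&0\\0&1\end{array}\right)_{[i]}$. The only arithmetic needed is $b(\lambda_0-\eta-\xi_i)=b(\xi_i-\lambda_0)^{-1}$, immediate from $b(u)=\sinh u/\sinh(u+\eta)$. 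Using $\left(\begin{array}{cc}1&0\\0&b(\xi_i-\lambda_0)\end{array}\right)=b(\xi_i-\lambda_0)\left(\begin{array}{cc}b(\xi_i-\lambda_0)^{-1}&0\\0&1\end{array}\right)$ at each site, the scalar $\prod_i b(\xi_i-\lambda_0)$ is exactly absorbed, and what remains is $\bigotimes_i\left(\begin{array}{cc}1&0\\0&b_{i0}\end{array}\right)_{[i]}$, which is (\ref{eq:diagA-dag}). Identity (\ref{eq:diagD-dag}) is obtained in the same way with $F\to\bar F$ and (\ref{eq:diagD}) replaced by (\ref{eq:diagA}); alternatively it follows from the $A^{\dagger}$ case by one charge conjugation, via $\bar F={\cal C}F{\cal C}$ and $\overline{A^{\dagger}}=D^{\dagger}$, the latter being a consequence of $\bar A=D$.

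The main obstacle is not the final algebra but the correct accounting in the $\dagger$-relation. One must keep the auxiliary-space relabelling in $R^{\dagger}_{0,1\cdots n}=\left(\begin{array}{cc}A^{\dagger}&C^{\dagger}\\B^{\dagger}&D^{\dagger}\end{array}\right)_{[0]}$ separate from the quantum-space $\dagger$ (transposition on sites $1,\ldots,n$ and negation of the $\xi_i$), and one must confirm that the $A$--$D$ interchange together with the shift $\lambda_0\mapsto\lambda_0-\eta$ really produces the argument $b(\xi_i-\lambda_0)$ rather than $b(\lambda_0-\xi_i)$, $b(\lambda_0+\xi_i)$, or its inverse. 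As a check I would compare determinants: the $\dagger$-relation and Proposition \ref{prop:diagDA} give $\det\big(F A^{\dagger}F^{-1}\big)=\det A^{\dagger}=\big(\prod_i b(\xi_i-\lambda_0)\big)^{2^{n-1}}$, which equals the determinant of $\bigotimes_i\left(\begin{array}{cc}1&0\\0&b_{i0}\end{array}\right)_{[i]}$.
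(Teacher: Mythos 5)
Your proposal is correct, but it follows a genuinely different route from the paper's. The paper's Appendix D proves Proposition \ref{prop:diagDA} by induction on $n$, using the recursion $R_{0,1\cdots n}=R_{0,2\cdots n}R_{0,1}$ together with the recursive structure of the total $F$, and then disposes of $A^{\dagger}$ and $D^{\dagger}$ with ``similarly'' --- i.e.\ by running the analogous induction on the daggered recursion. You instead reduce the daggered cases to Proposition \ref{prop:diagDA} in closed form, via the crossing relation $R^{\dagger}_{0,1\cdots n}=R_{0,1\cdots n}^{-1}=R_{1\cdots n,0}$ and the identity $b(\lambda_0-\eta-\xi_i)=b(\xi_i-\lambda_0)^{-1}$, which turns each statement into a two-line computation with no new induction; your determinant check is a sensible independent confirmation. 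One point deserves to be made explicit: your premise $A^{\dagger}(\lambda_0)=\bigl(\prod_i b_{i0}\bigr)D(\lambda_0-\eta)$, $D^{\dagger}(\lambda_0)=\bigl(\prod_i b_{i0}\bigr)A(\lambda_0-\eta)$ has the diagonal entries interchanged relative to the $\dagger$-proposition as printed in the paper, which shows $A^{\dagger}\propto A(\lambda_0-\eta)$ and $D^{\dagger}\propto D(\lambda_0-\eta)$. Your interchanged version is the correct one: at $n=1$ one has directly $A^{\dagger}_{1}(\lambda_0)={\rm diag}(1,b_{10})=b_{10}\,D_{1}(\lambda_0-\eta)$ while $b_{10}\,A_{1}(\lambda_0-\eta)={\rm diag}(b_{10},1)$; the interchange is forced by the $\gamma_0$-conjugation in the crossing relation exactly as you argue, and it is the only version for which $F$ (which diagonalizes the lower-triangular $D$) is the right conjugator for $A^{\dagger}$. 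So your argument is sound; it simply corrects, in passing, what appears to be a typographical transposition in the printed $\dagger$-relation, and it would be worth stating that correction rather than leaving it implicit.
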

The derivation of the diagonalized forms and some useful formulas 
are briefly reviewed in Appendix D.

For a given operator $X_{1 \cdots n}(\lambda_1, \cdots, \lambda_n) 
\in End(V(\lambda_1) \otimes \cdots \otimes V(\lambda_n))$ 
we denote $F A F^{-1}$ by ${\widetilde F}$:   
\be 
{\widetilde X}_{1 \cdots n} = F_{1 2 \cdots n} X_{1 2 \cdots n} 
F_{1 2 \cdots n}^{-1} \, . 
\ee
For instance we have  ${\widetilde{D}}_{1 \cdots n}(\lambda_0) = 
{F}_{1 \cdots n} D_{1 \cdots n}(\lambda_0) {F}_{1 \cdots n}^{-1}$.

%
%
\subsection{Pseudo-diagonalization of the $B$ operator}


Let us recall that the matrix elements of the monodromy matrix 
$R^{+}_{0, 1 \cdots L}$ are related to the symmetric ones as follows.    
\bea 
& & R^{+}_{0, 1 2 \cdots L}(u; \xi_1, \ldots, \xi_L)  = 
\left(
\begin{array}{cc} 
A^{+}_{12 \cdots L}(u; \xi_1, \ldots, \xi_L) &   
B^{+}_{12 \cdots L}(u; \xi_1, \ldots, \xi_L)  \\
C^{+}_{12 \cdots L}(u; \xi_1, \ldots, \xi_L) &   
D^{+}_{12 \cdots L}(u; \xi_1, \ldots, \xi_L)  
\end{array} 
\right)_{[0]} \non \\ 
& = & 
\left(
\begin{array}{cc} 
\chi_{1 2 \cdots L} A_{12 \cdots L}(u; \{ \xi_j \})
\left( \chi_{1 2 \cdots L} \right)^{-1} &   
e^{- \lambda_0} \chi_{1 2 \cdots L} B_{12 \cdots L}(u; \{ \xi_j \} )
\left( \chi_{1 2 \cdots L} \right)^{-1}  \\
e^{\lambda_0} \chi_{1 2 \cdots L} C_{12 \cdots L}(u; \{ \xi_j \} )
\left( \chi_{1 2 \cdots L} \right)^{-1} &   
\chi_{1 2 \cdots L} D_{12 \cdots L}(u; \{ \xi_j \} ) 
\left( \chi_{1 2 \cdots L} \right)^{-1}
\end{array} 
\right)_{[0]} \non \\ 
\eea
Then, from the quantum-group invariance (\ref{eq:q-inv(+)})
we have the following commutation relations: 
\bea 
 B^{+}_{1 \cdots n}(\lambda) & = & D^{+}_{1 \cdots n}(\lambda) 
\Delta^{(n-1)}(X^{-}) -  q \Delta^{(n-1)}(X^{-}) D^{+}_{1 \cdots n}(\lambda)  
\label{eq:BDelta} \\
 C^{+}_{1\cdots n}(\lambda) & = & \Delta^{(n-1)}(X^{+}) 
D^{+}_{1 \cdots n}(\lambda)  
- q^{-1} D^{+}_{1 \cdots n}(\lambda)  \Delta^{(n-1)}(X^{+}) \label{eq:CDelta} 
\eea
Here $X^{\pm}$ are generators of $U_q(sl_2)$, and 
$\Delta^{(n-1)}(X^{-})$ denote the tensor-product representation of 
$\Delta^{(n-1)}(X^{-})$ acting on the $n$ sites from the 1st to $n$th.  
We remark that more generally, we have commutation relations 
 (\ref{eq:affineR-sigma}) for the affine quantum group 
$U_q({\widehat{sl_2}})$.

In this subsection we abbreviate the superscript $+$ for the asymmetric 
monodromy matrix, for simplicity. In fact, the essential parts of 
formulas such as the fundamental commutation relations 
are invariant under gauge transformations 
if we express them in terms of the generators of 
the quantum affine algebra $U_q(\widehat{sl_2})$ in  
the evaluation representation (\ref{eq:eval-rep}).    
Here we remark that the matrix representation 
of the evaluation representation of $U_q(\widehat{sl_2})$ 
can be changed through gauge transformations.


Let us now introduce some symbols. 
\begin{df}
We define operators ${\widehat \delta}_{jk}(\lambda_j, \lambda_k)$ 
for $j, k$ satisfying $0 \le j < k \le L$ by 
\be 
{\widehat \delta}_{jk}(\lambda_j, \lambda_k) = 
\left( 
\begin{array}{cccc}  
1 & 0 & 0 & 0 \\
0 & b^{-1}_{kj} & 0 & 0 \\
1 & 0 & b^{-1}_{jk} & 0 \\
1 & 0 & 0 & 1 \\
\end{array}
\right)_{[jk]}  \, ,  
\ee
where $b_{jk}=b(\lambda_j-\lambda_k)$ and $b_{kj}=b(\lambda_k-\lambda_j)$. 
We define ${\widehat \delta}_{1 \cdots n}$ 
and ${\widehat \delta}_{0, 1 \cdots n}$ by 
\bea 
{\widehat \delta}_{1 \cdots n} & = & 
\prod_{1\le j <k \le n} {\widehat \delta}_{j k}(\lambda_j, \lambda_k) 
\non \\  
{\widehat \delta}_{0, 1 \cdots n} & = & {\widehat \delta}_{0 1 \cdots n} 
{\widehat \delta}^{-1}_{1 \cdots n} 
= \prod_{j=1}^{n} {\widehat \delta}_{0j}(\lambda_0, \lambda_j)  
\eea 
\end{df} 
We define ${\widehat \delta}^{1 \cdots n}_i$ by  
\be 
{\widehat \delta}^{1 \cdots n}_i = 
\widehat{\delta}_{i, i+1 \cdots n 1 \cdots i-1} = 
\prod_{j=1; j \ne i}^{n} {\widehat \delta}_{ij}
\label{eq:delta-in}
\ee

Some useful formulas are given in Appendix D.


Let us denote $I^{\otimes m} \otimes \Delta^{(\ell-1)}(x) 
\otimes I^{\otimes (n-\ell -m)}$ 
 by $\Delta^{(\ell-1)}_{m+1 \, m+2 \, \cdots \, m+\ell}(x)$ 
or $\Delta_{m+1 \, m+2 \, \cdots \, m+\ell}(x)$ 
for $x \in U_q(sl_2)$ 
in the tensor-product representation. 
\begin{lem} 
Let $X^{-}$ denote the generator of the quantum group $U_q(sl_2)$ 
and $X_j^{-}$ the spin-1/2 representation of $X^{-}$ acting 
on the $j$th site in the tensor product 
representation $(V^{(1)})^{\otimes n}$. We  have 
\be 
\widetilde{\Delta}_{1 \cdots n}(X^{-})  
= \left(X_1^{-} + e_1^{11} \widetilde{\Delta}_{2 \cdots n}(X^{-}) 
\widetilde{ A^{\dagger}}_{2 \cdots n}(\xi_1) 
 + e_1^{22} \widetilde{D}_{2 \cdots n}(\xi_1) 
\widetilde{\Delta}_{2 \cdots n}(X^{-}) \right) 
\widehat{ \delta}_{1, 2 \cdots n} 
\label{eq:Dn(X)}
\ee
\end{lem}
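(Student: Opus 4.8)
The plan is to derive the recursion by peeling the first site off both the coproduct and the $F$-matrix, and then reading off the result in the two-by-two auxiliary decomposition attached to site $1$. First I would iterate the comultiplication rule $\Delta(X^-) = X^-\otimes K^{-1} + 1\otimes X^-$ through $\Delta^{(n)} = (\Delta^{(n-1)}\otimes id)\Delta$ to split off the first tensor factor,
\be
\Delta_{1\cdots n}(X^-) = X_1^-\, K_2^{-1}\cdots K_n^{-1} + \Delta_{2\cdots n}(X^-),
\ee
so that in the auxiliary space of site $1$ the operator is lower triangular, carrying $\Delta_{2\cdots n}(X^-)$ on the diagonal and $K_2^{-1}\cdots K_n^{-1}$ in the lower-left corner. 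Next I would invoke the factorization $F_{12\cdots n} = F_{2\cdots n} F_{1,\,2\cdots n}$ together with $F_{1,\,2\cdots n} = e_1^{11} + e_1^{22} R_{1,\,2\cdots n}$. The crucial observation is that $R_{1,\,2\cdots n}$ is itself the monodromy matrix built on the quantum sites $2,\ldots,n$ with site $1$ as the auxiliary space, so in the $[1]$ grading it equals $\begin{pmatrix} A_{2\cdots n}(\xi_1) & B_{2\cdots n}(\xi_1) \\ C_{2\cdots n}(\xi_1) & D_{2\cdots n}(\xi_1)\end{pmatrix}_{[1]}$; hence $F_{1,\,2\cdots n} = \begin{pmatrix} 1 & 0 \\ C_{2\cdots n}(\xi_1) & D_{2\cdots n}(\xi_1)\end{pmatrix}_{[1]}$, whose inverse is elementary.

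I would then perform the conjugation in two stages. Conjugating the peeled coproduct by the lower-triangular $F_{1,\,2\cdots n}$ is a direct block computation, and conjugating the outcome by $F_{2\cdots n}$, which acts only on sites $2,\ldots,n$ and is scalar in the site-$1$ grading, simply replaces each block operator $X_{2\cdots n}$ by its $F$-transform $\widetilde X_{2\cdots n}$. This yields $\widetilde\Delta_{2\cdots n}(X^-)$ in the upper-left block and $\widetilde D_{2\cdots n}(\xi_1)\,\widetilde\Delta_{2\cdots n}(X^-)\,\widetilde D_{2\cdots n}(\xi_1)^{-1}$ in the lower-right. To match these against the claimed right-hand side I would specialize Proposition \ref{prop:diagDA} and Proposition \ref{prop:diagADdagger} to sites $2,\ldots,n$ with auxiliary parameter $\xi_1$: this shows $\widetilde{A^\dagger}_{2\cdots n}(\xi_1)$ and $\widetilde D_{2\cdots n}(\xi_1)$ are diagonal in the factorized basis and are precisely inverse to the diagonal ($(1,1)$ and $(2,2)$) blocks of $\widehat\delta_{1,\,2\cdots n}=\prod_{j=2}^n\widehat\delta_{1j}$, which is exactly what lets the trailing $\widehat\delta$ collapse the $\widetilde{A^\dagger}$ on the upper-left and cancel the $\widetilde D^{-1}$ on the lower-right. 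Since $K_2^{-1}\cdots K_n^{-1}$ is a function of the total spin on sites $2,\ldots,n$, it commutes with the charge-conserving $F_{2\cdots n}$; I would use this to keep that factor inert under the $F$-transformation.

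The main obstacle is the lower-left block, where the $C_{2\cdots n}(\xi_1)$ from $F_{1,\,2\cdots n}$, the surviving $K$-factor, and the off-diagonal (charge-changing) entries of $\widehat\delta_{1,\,2\cdots n}$ all interact. Since the pseudo-diagonalized form of $C$ is not yet available (it is deferred to Appendix E), I would instead rewrite $C$ through the $\dagger$-operation, using $R^\dagger_{0,1\cdots n}=R_{0,1\cdots n}^{-1}=R_{1\cdots n,0}$ and the proposition expressing $C^\dagger$, $A^\dagger$, $D^\dagger$ in terms of $C$, $A$, $D$ at shifted argument, together with the crossing-symmetry lemma for the monodromy matrix. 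The aim is to trade the symmetric $C_{2\cdots n}(\xi_1)$ for a combination of $\widetilde{A^\dagger}_{2\cdots n}(\xi_1)$, $\widetilde D_{2\cdots n}(\xi_1)$ and the bare $X_1^-$ term, so that the lower-left block reorganizes into $X_1^-$ dressed by the off-diagonal part of $\widehat\delta_{1,\,2\cdots n}$. I expect this reorganization---reconciling the charge-changing pieces of $\widehat\delta$ with the $X_1^-$ term and the $C$-contribution---to be the delicate computational heart of the argument, and I would settle all normalizations and gauge factors first in the $n=2$ case, where $\widehat\delta_{1,\,2\cdots n}$ reduces to the single factor $\widehat\delta_{12}$, before treating the general product $\prod_{j=2}^n\widehat\delta_{1j}$.
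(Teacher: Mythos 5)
Your setup is sound and your treatment of the two diagonal blocks is essentially correct: peeling the coproduct as $\Delta_{1\cdots n}(X^-)=X_1^-\,\Delta_{2\cdots n}(K^{-1})+\Delta_{2\cdots n}(X^-)$, recognizing $F_{1,\,2\cdots n}$ as block-lower-triangular in the site-$1$ grading with entries $C_{2\cdots n}(\xi_1)$ and $D_{2\cdots n}(\xi_1)$, and checking that $\widetilde{A^{\dagger}}_{2\cdots n}(\xi_1)$ and $\widetilde{D}_{2\cdots n}(\xi_1)$ are exactly the inverses of the two diagonal blocks of $\widehat{\delta}_{1,\,2\cdots n}$ all go through. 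The gap is in the lower-left block, and the route you sketch does not close it. Inverting $F_{1,\,2\cdots n}$ directly puts $-D_{2\cdots n}(\xi_1)^{-1}C_{2\cdots n}(\xi_1)$ into the right factor, so the lower-left block of your conjugated operator is $\widetilde{C}\widetilde{\Delta}+\widetilde{D}\,\Delta_{2\cdots n}(K^{-1})-\widetilde{D}\widetilde{\Delta}\widetilde{D}^{-1}\widetilde{C}$, and matching this against the contribution of $X_1^-$ and of the off-diagonal entries of $\widehat{\delta}_{1,\,2\cdots n}$ requires explicit control of $\widetilde{C}_{2\cdots n}(\xi_1)$ --- which is essentially the pseudo-diagonalization of $C$ that this lemma is a stepping stone toward. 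The $\dagger$/crossing relations you invoke only exchange $C$ at one argument for $B^{\dagger}$ at a shifted argument (both charge-raising, neither diagonal), so they cannot ``trade $C_{2\cdots n}(\xi_1)$ for a combination of $\widetilde{A^{\dagger}}$, $\widetilde{D}$ and $X_1^-$'' as you hope.

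The missing idea is to avoid inverting $F_{1,\,2\cdots n}$ altogether. The paper instead uses eq. (\ref{eq:F^{-1}}) in the form $F^{-1}_{1\cdots n}=F^{t_1\cdots t_n}_{n\cdots 2,\,1}\,F^{-1}_{2\cdots n}\,\widehat{\delta}_{1,\,2\cdots n}$, where $F^{t_1\cdots t_n}_{n\cdots 2,\,1}=e_1^{22}+R_{2\cdots n,\,1}e_1^{11}$. The sandwich $F_{1,\,2\cdots n}\,\Delta^{(n-1)}(X^-)\,F^{t_1\cdots t_n}_{n\cdots 2,\,1}$ then expands into four terms: $e_1^{11}\Delta^{(n-1)}(X^-)e_1^{22}=0$; the terms $e_1^{11}\Delta_{2\cdots n}(X^-)R_{2\cdots n,\,1}e_1^{11}$ and $e_1^{22}R_{1,\,2\cdots n}\Delta_{2\cdots n}(X^-)e_1^{22}$ produce the $\widetilde{A^{\dagger}}_{2\cdots n}(\xi_1)$ and $\widetilde{D}_{2\cdots n}(\xi_1)$ blocks via $R^{\dagger}_{0,\,1\cdots n}=R_{1\cdots n,\,0}$ (eq. (\ref{eq:Rdag-inverse})); and the cross term $e_1^{22}R_{1,\,2\cdots n}\Delta^{(n-1)}(X^-)R_{2\cdots n,\,1}e_1^{11}$ collapses to the bare $X_1^-$ by the $U_q(sl_2)$ intertwining property of the monodromy matrix (Proposition \ref{prop:q-inv(+)}) combined with unitarity $R_{1,\,2\cdots n}R_{2\cdots n,\,1}=I$. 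That quantum-group step is precisely what replaces the unresolved computation at the heart of your proposal, and without it the argument does not go through.
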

\begin{proof}
Making use of (\ref{eq:F^{-1}}) we show  
$F^{-1}_{1\cdots n} 
= F^{t_1 \cdots t_n}_{n \cdots 2, 1} F_{2 \cdots n}^{-1}  
\widehat{\delta}_{1, 2 \cdots n}$. 
We have 
\bea 
\widetilde{\Delta}_{1 \cdots n}(X^{-}) & = & 
F_{2 \cdots n} F_{1, 2 \cdots n} {\Delta}^{(n-1)}(X^{-}) 
F^{t_1 \cdots t_n}_{n \cdots 2, 1} F_{2 \cdots n}^{-1}  
\widehat{\delta}_{1, 2 \cdots n}
\eea 
Putting $F_{1, 2 \cdots n} = e_1^{11} + e_1^{22} R_{1, 2 \cdots n}$ and  
$F^{t_1 \cdots t_n}_{n \cdots 2, 1}= e_1^{22} + R_{2 \cdots n, 1} e_1^{11}$,  
we have 
\bea 
& &  F_{1, 2 \cdots n} {\Delta}^{(n-1)}(X^{-}) 
F^{t_1 \cdots t_n}_{n \cdots 2, 1}
= \left( e_1^{11} + e_1^{22} R_{1, 2 \cdots n} \right) \, 
{\Delta}^{(n-1)}(X^{-}) \, 
\left( e_1^{22} + R_{2 \cdots n, 1} e_1^{11} \right)
\non \\ 
& & = 
 e_1^{11} {\Delta}^{(n-1)}(X^{-}) e_1^{22}  
+ e_1^{11} {\Delta}^{(n-1)}(X^{-})R_{2 \cdots n, 1} e_1^{11} 
\non \\
& & \quad + e_1^{22} R_{1, 2 \cdots n} {\Delta}^{(n-1)}(X^{-}) e_1^{22}  
+  e_1^{22} R_{1, 2 \cdots n} {\Delta}^{(n-1)}(X^{-})
R_{2 \cdots n, 1} e_1^{11} 
\non \\ 
& & = 0 + e_1^{11} {\Delta}_{2 \cdots n}(X^{-})R_{2 \cdots n, 1} e_1^{11} 
+ e_1^{22} R_{1, 2 \cdots n} {\Delta}_{2 \cdots n}(X^{-}) e_1^{22}  + X_1^{-} 
\, . 
\eea
Here we have made use of the following: 
\be {\Delta}^{(n-1)}(X^{-}) = {\Delta}^{(n-2)} \Delta(X^{-}) 
= X^{-} \otimes \Delta^{(n-2)} (K^{-1}) + I \otimes \Delta^{(n-2)}(X^{-}) \, .  \ee
We thus have  
\be 
\widetilde{\Delta}_{1 \cdots n}(X^{-}) =  \left( X_1^{-}
+ e_1^{11} \widetilde{\Delta}_{2 \cdots n}(X^{-}) 
\widetilde{R}_{2 \cdots n, 1} e_1^{11} 
+
 e_1^{22} \widetilde{R}_{1, 2 \cdots n} 
\widetilde{\Delta}_{2 \cdots n}(X^{-}) 
 e_1^{11} \right) \widehat{\delta}_{1, 2 \cdots n} \, . 
\ee
We obtain the case of $n$ from (\ref{eq:Rdag-inverse}). 
\end{proof} 

\begin{lem} 
In the tensor-product representation $(V^{(1)})^{\otimes n}$ 
we have 
\be 
{\widetilde \Delta}_{1 \cdots n}(X^{-}) = \sum_{i=1}^{n} 
X_i^{-} {\widehat \delta}^{1 \cdots n}_i \, . 
\label{eq:Xdelta}
\ee
\end{lem}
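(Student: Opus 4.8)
The plan is to establish (\ref{eq:Xdelta}) by induction on $n$, taking the recursion (\ref{eq:Dn(X)}) of the preceding lemma as the engine and feeding in the explicit diagonal forms of $\widetilde{D}$ and $\widetilde{A^{\dagger}}$ from Propositions \ref{prop:diagDA} and \ref{prop:diagADdagger}. For the base case $n=1$ one has $F_1=I$ and $\Delta^{(0)}(X^{-})=X^{-}$, so $\widetilde{\Delta}_1(X^{-})=X_1^{-}$, while the right-hand side reduces to $X_1^{-}$ (the accompanying $\widehat{\delta}$-product is empty, hence $I$); the two agree.

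For the inductive step I would assume the statement on the block of sites $2,\ldots,n$, that is $\widetilde{\Delta}_{2 \cdots n}(X^{-})=\sum_{i=2}^{n} X_i^{-}\,\widehat{\delta}^{2 \cdots n}_i$, and substitute it into (\ref{eq:Dn(X)}). The term $X_1^{-}\,\widehat{\delta}_{1,2 \cdots n}$ is already the desired $i=1$ contribution, since by (\ref{eq:delta-in}) one has $\widehat{\delta}_1^{\,1 \cdots n}=\prod_{j=2}^{n}\widehat{\delta}_{1j}=\widehat{\delta}_{1,2 \cdots n}$. It then remains to prove, for each fixed $i$ with $2\le i\le n$, the operator identity
\be
e_1^{11}\,X_i^{-}\,\widehat{\delta}^{2 \cdots n}_i\,\widetilde{A^{\dagger}}_{2 \cdots n}(\xi_1)\,\widehat{\delta}_{1,2 \cdots n}
+ e_1^{22}\,\widetilde{D}_{2 \cdots n}(\xi_1)\,X_i^{-}\,\widehat{\delta}^{2 \cdots n}_i\,\widehat{\delta}_{1,2 \cdots n}
= X_i^{-}\,\widehat{\delta}^{1 \cdots n}_i \, ,
\ee
which upgrades the cyclic product on $\{2,\ldots,n\}$ to the one on $\{1,\ldots,n\}$ by inserting the single extra factor $\widehat{\delta}_{i1}$ in its correct slot.

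To prove this identity I would insert, from Propositions \ref{prop:diagDA} and \ref{prop:diagADdagger} specialized to $\lambda_0=\xi_1$, the diagonal forms $\widetilde{D}_{2 \cdots n}(\xi_1)=\bigotimes_{j=2}^{n}\mathrm{diag}_j(b_{1j},1)$ and $\widetilde{A^{\dagger}}_{2 \cdots n}(\xi_1)=\bigotimes_{j=2}^{n}\mathrm{diag}_j(1,b_{j1})$, and then exploit the explicit entries of the $\widehat{\delta}$'s. The two projectors split the computation according to the output state of site $1$: in the $e_1^{11}$ channel each spectator factor reduces via $e_1^{11}\widehat{\delta}_{1j}=e_1^{11}\otimes\mathrm{diag}_j(1,b_{j1}^{-1})$, which is arranged to cancel against the diagonal $\widetilde{A^{\dagger}}$; in the $e_1^{22}$ channel the off-diagonal (creation) column of $\widehat{\delta}_{1j}$ survives and, paired with $\widetilde{D}$, is what actually builds the index-reversed factor $\widehat{\delta}_{i1}$ carrying the entries $b_{1i}^{-1}$ and $b_{i1}^{-1}$. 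Using $e_1^{11}+e_1^{22}=I$ one then recombines the two channels into the full $\widehat{\delta}_{i1}$ placed in the cyclic position of $\widehat{\delta}^{1 \cdots n}_i=\widehat{\delta}_{i,\,i+1 \cdots n\,1 \cdots i-1}$.

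The hard part will be the ordering bookkeeping, especially in the $e_1^{22}$ channel. Because the factors $\widehat{\delta}_{ij}$ share the index $i$ and are \emph{not} diagonal, they do not commute freely with the diagonal $\widetilde{D}$, $\widetilde{A^{\dagger}}$ factors nor with the $\widehat{\delta}_{1j}$, so one cannot cancel term by term; one must invoke the commutation and factorization formulas for products of $\widehat{\delta}$'s collected in Appendix D to transport the creation column into place, to confirm that all spectator factors $\widehat{\delta}_{1j}$ with $j\ne i$ drop out, and to check the index reversal $\widehat{\delta}_{1i}\rightarrow\widehat{\delta}_{i1}$. Once this reassembly is verified, summing over $i$ together with the $i=1$ term yields (\ref{eq:Xdelta}) and closes the induction.
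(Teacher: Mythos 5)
Your proposal is correct and follows essentially the same route as the paper: induction on $n$ via the recursion (\ref{eq:Dn(X)}), identifying the first term as the $i=1$ contribution, substituting the induction hypothesis into the $e_1^{11}$ and $e_1^{22}$ channels, using the diagonal forms (\ref{eq:diagA-dag}) and (\ref{eq:diagD}) at $\lambda_0=\xi_1$ to reassemble $\widehat{\delta}_i^{2\cdots n}$ into $\widehat{\delta}_i^{1\cdots n}$, and recombining with $e_1^{11}+e_1^{22}=I$. The ordering bookkeeping you flag as the hard part is exactly the short computation the paper carries out when it pushes $\widehat{\delta}_i^{2\cdots n}\,\bigotimes_k\mathrm{diag}(1,b_{k1})\,\widehat{\delta}_{1,2\cdots n}$ into $\widehat{\delta}_i^{1\cdots n}$.
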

\begin{proof} 
We show it by induction on $n$. 
The case of $n=1$ is trivial. 
Let us assume the case of $n-1$. 
In eq. (\ref{eq:Dn(X)}),  
the first term gives the following: 
$X_1^{-} \widehat{\delta}_{1, 2\cdots n} 
= X_1^{-} \widehat{\delta}_1^{1 \cdots n}$.     
Assuming (\ref{eq:Xdelta}) for $\widehat{\Delta}_{2 \cdots n}$ and putting it 
into the second term of (\ref{eq:Dn(X)}), 
we have 
\bea 
e_1^{11} \widetilde{\Delta}_{2 \cdots n}(X^{-}) 
\widetilde{A^{\dagger}}_{2 \cdots n}(\xi_1) 
\widehat{\delta}_{1, 2 \cdots n}
& = & e_1^{11} \sum_{i=2}^{n} X_i^{-} 
\widehat{\delta}_i^{2 \cdots n} \bigotimes_{k=2}^{n}  
\left( 
\begin{array}{cc}
1 & 0 \\ 
 0 & b_{k 1} 
\end{array} 
\right)_{[k]} \widehat{\delta}_{1, 2 \cdots n} 
\non \\
& = & \sum_{i=2}^{n} X_i^{-} \widehat{\delta}_{i}^{1 \cdots n} 
e_1^{11} \non \\ 
& = & e_1^{11} \sum_{i=2}^{n} X_i^{-} \widehat{\delta}_{i}^{1 \cdots n} 
\eea
Here we have made use of (\ref{eq:diagA-dag}).  
Similarly, we have 
\be 
e_1^{22} \widetilde{D}_{2 \cdots n}(\xi_1) 
\widetilde{\Delta}_{2 \cdots n}(X^{-}) e_1^{22}
\widehat{\delta}_{1, 2 \cdots n}
 = e_1^{22} \sum_{i=2}^{n} X_i^{-} \widehat{\delta}_{i}^{1 \cdots n}  
\ee
Thus, we have the case of $n$ as follows. 
\bea 
{\widetilde \Delta}_{1 \cdots n}(X^{-}) 
& = & X_1^{-} {\widehat \delta}_{1}^{1 \cdots n} + 
(e_1^{11} + e_1^{22}) 
\sum_{i=2}^{n} X_1^{-} {\widehat \delta}_{i}^{1 \cdots n} 
\non \\
& = & \sum_{i=1}^{n} X_i^{-} {\widehat \delta}_i^{1 \cdots n} \, .    
\eea
\end{proof} 

From the fundamental commutation relation (\ref{eq:BDelta})
we have the following: 
\begin{lem} In the tensor product $(V^{(1)})^{\otimes n}$ we have 
\be 
{\widetilde B}_{1 \cdots n}(\lambda) = \sum_{i=1}^{n} X_i^{-} 
({\widetilde D}_{1 \cdots i-1, i+1 \cdots n}(\lambda)
 - q {\widetilde D}_{1 \cdots n}(\lambda) ) 
\widehat{\delta}_i^{1 \cdots n} \, .   
\label{eq:BDD}
\ee
\end{lem}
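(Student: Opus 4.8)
The plan is to conjugate the fundamental commutation relation (\ref{eq:BDelta}) by the total $F$-matrix and then insert the two ingredients already established: the diagonalized form of $\widetilde{D}$ from (\ref{eq:diagD}) and the expansion (\ref{eq:Xdelta}) of $\widetilde{\Delta}_{1\cdots n}(X^-)$. Since the tilde operation $X\mapsto F_{1\cdots n}X F_{1\cdots n}^{-1}$ is an algebra homomorphism, applying it to (\ref{eq:BDelta}) gives
\be
\widetilde{B}_{1\cdots n}(\lambda) = \widetilde{D}_{1\cdots n}(\lambda)\,\widetilde{\Delta}_{1\cdots n}(X^-) - q\,\widetilde{\Delta}_{1\cdots n}(X^-)\,\widetilde{D}_{1\cdots n}(\lambda)\,.
\ee
Substituting (\ref{eq:Xdelta}), namely $\widetilde{\Delta}_{1\cdots n}(X^-)=\sum_{i=1}^n X_i^-\,\widehat{\delta}_i^{1\cdots n}$, reduces the problem to moving the diagonal operator $\widetilde{D}_{1\cdots n}(\lambda)=\bigotimes_{k}\mathrm{diag}(b_{0k},1)_{[k]}$ (from (\ref{eq:diagD})) past each $X_i^-\widehat{\delta}_i^{1\cdots n}$.

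For the first term the commutation is elementary: since $X_i^-=e_i^{21}$ and $\mathrm{diag}(b_{0i},1)\,e^{21}=e^{21}$, the factor $b_{0i}$ is absorbed, so $\widetilde{D}_{1\cdots n}(\lambda)\,X_i^- = X_i^-\,\widetilde{D}_{1\cdots i-1,i+1\cdots n}(\lambda)$. Hence the first term equals $\sum_i X_i^-\,\widetilde{D}_{1\cdots i-1,i+1\cdots n}(\lambda)\,\widehat{\delta}_i^{1\cdots n}$, which already matches the $\widetilde{D}_{1\cdots i-1,i+1\cdots n}$ part of (\ref{eq:BDD}). The substantial step is the second term, where I must show
\be
X_i^-\,\widehat{\delta}_i^{1\cdots n}\,\widetilde{D}_{1\cdots n}(\lambda) = X_i^-\,\widetilde{D}_{1\cdots n}(\lambda)\,\widehat{\delta}_i^{1\cdots n}\,,
\ee
i.e. that $X_i^-$ annihilates the commutator $[\widehat{\delta}_i^{1\cdots n},\widetilde{D}_{1\cdots n}(\lambda)]$, even though $\widehat{\delta}$ and $\widetilde{D}$ do not commute on their own.

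The key observation, which I expect to be the crux, is that $X_i^-=X_i^-\,e_i^{11}$, so it suffices to prove that $e_i^{11}\,\widehat{\delta}_i^{1\cdots n}$ is diagonal and therefore commutes with the diagonal $\widetilde{D}$. This follows from the lower-triangular shape of $\widehat{\delta}_{ij}$: restricting the $i$-output to state $1$ forces the $i$-input to state $1$, and on that subspace $\widehat{\delta}_{ij}$ collapses to $\mathrm{diag}(1,b_{ji}^{-1})_{[j]}$, so that $e_i^{11}\widehat{\delta}_{ij}=\mathrm{diag}(1,b_{ji}^{-1})_{[j]}\,e_i^{11}$. Because the factors $\widehat{\delta}_{ij}$ for distinct $j$ act on disjoint sites besides $i$, this identity propagates through the product $\widehat{\delta}_i^{1\cdots n}=\prod_{j\ne i}\widehat{\delta}_{ij}$, giving $e_i^{11}\widehat{\delta}_i^{1\cdots n}=\big(\prod_{j\ne i}\mathrm{diag}(1,b_{ji}^{-1})_{[j]}\big)e_i^{11}$, a diagonal operator times the projector.

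Since $\widetilde{D}_{1\cdots n}(\lambda)$ is also diagonal, it commutes with this, and left-multiplication by $X_i^-=X_i^-\,e_i^{11}$ then yields the desired swap; assembling the two terms produces exactly $\sum_i X_i^-(\widetilde{D}_{1\cdots i-1,i+1\cdots n}(\lambda)-q\,\widetilde{D}_{1\cdots n}(\lambda))\,\widehat{\delta}_i^{1\cdots n}$, which is (\ref{eq:BDD}). The one point demanding care is the bookkeeping of the ordering in the product defining $\widehat{\delta}_i^{1\cdots n}$ and checking that the $e_i^{11}$-collapse is compatible with that ordering; the formulas of Appendix D should make this routine.
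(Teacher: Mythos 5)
Your proposal is correct and follows essentially the same route as the paper: conjugate the commutation relation (\ref{eq:BDelta}) by $F_{1\cdots n}$, insert (\ref{eq:diagD}) and (\ref{eq:Xdelta}), and commute the diagonal $\widetilde{D}$ past each $X_i^-\widehat{\delta}_i^{1\cdots n}$. The paper performs this last rearrangement silently, whereas you supply the justification (the collapse $e_i^{11}\widehat{\delta}_{ij}=\mathrm{diag}(1,b_{ji}^{-1})_{[j]}e_i^{11}$, which is exactly the mechanism that makes the swap in the second term legitimate), so your write-up is a correctly fleshed-out version of the paper's argument.
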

\begin{proof} 
We transform the both sides of 
the fundamental commutation relation (\ref{eq:BDelta}) by 
$F_{1 \cdots n}$, and  
put (\ref{eq:diagD}) and (\ref{eq:Xdelta}) into it, 
we have the following: 
\bea 
{\widetilde B}_{1 \cdots n}(\lambda) & = & 
{\widetilde D}_{1 \cdots n}(\lambda) {\widetilde \Delta}_{1 \cdots n}(X^{-}) 
- q {\widetilde \Delta_{1 \cdots n}}(X^{-}) 
{\widetilde D}_{1 \cdots n}(\lambda) \non \\
& = & 
\bigotimes_{j=1}^{n} 
%
\left( 
\begin{array}{cc}
b_{0j} & 0 \\ 
0  &  1 
\end{array}
\right)_{[j]} 
\sum_{i=1}^{n} X_i^{-} {\widehat \delta}^{1 \cdots n}_{i} 
  - q \sum_{i=1}^{n} X_i^{-} {\widehat \delta}^{1 \cdots n}_{i} 
\bigotimes_{j=1}^{n} 
\left( 
\begin{array}{cc} 
b_{0j} & 0 \\ 
0 & 1
\end{array}
\right)_{[j]} \non \\
&  = & \sum_{i=1}^{n} X_i^{-} 
({\widetilde D}_{1 \cdots i-1, i+1 \cdots n}(\lambda)
 - q {\widetilde D}_{1 \cdots n}(\lambda) ) 
\widehat{\delta}_i^{1 \cdots n} \, . 
\eea
\end{proof}

\begin{prop}[Pseudo-diagonalization of $B$ operator] 
We have   
\be 
{\widetilde B}_{1 \cdots n}(\lambda) = 
\sum_{i=1}^{n} c_{0i}^{-} \, X^{-}_i \bigotimes_{j=1; j \ne i}^{n} 
\left( 
\begin{array}{cc} 
b_{0j} & 0 \\
0 & b_{ji}^{-1} 
\end{array}
\right)_{[j]} \, ,  
\label{eq:diagB}
\ee
where $b_{0i}=b(\lambda_0-\xi_i)$,  
$b_{ji}=b(\xi_j - \xi_i)$ and 
$c_{0i}^{-} = c^{-}(\lambda_0 - \xi_i)=\exp(-(\lambda_0-\xi_i)) 
c(\lambda_0 - \xi_i)$. 
\end{prop}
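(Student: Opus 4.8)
The plan is to feed the diagonalized $D$-operators of Proposition \ref{prop:diagDA} into the commutation-relation identity (\ref{eq:BDD}). By (\ref{eq:diagD}) we have ${\widetilde D}_{1\cdots n}(\lambda)=\bigotimes_{j=1}^{n}\mathrm{diag}(b_{0j},1)_{[j]}$, while ${\widetilde D}_{1\cdots i-1,\,i+1\cdots n}(\lambda)=\bigotimes_{j\ne i}\mathrm{diag}(b_{0j},1)_{[j]}$ is the same tensor product carrying the identity on the $i$th factor. Hence the difference ${\widetilde D}_{1\cdots i-1,\,i+1\cdots n}(\lambda)-q\,{\widetilde D}_{1\cdots n}(\lambda)$ factorizes as $\mathrm{diag}(1-qb_{0i},\,1-q)_{[i]}\otimes\bigotimes_{j\ne i}\mathrm{diag}(b_{0j},1)_{[j]}$, since the two operators agree on all sites $j\ne i$. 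Because $X_i^{-}=e_i^{21}$ sends the state-$1$ component at site $i$ to state $2$ and annihilates the state-$2$ component, left-multiplication by $X_i^{-}$ collapses the site-$i$ matrix to the scalar $1-qb_{0i}$, giving $(1-qb_{0i})\,X_i^{-}\bigotimes_{j\ne i}\mathrm{diag}(b_{0j},1)_{[j]}$, and it then remains only to act with $\widehat{\delta}_i^{1\cdots n}$ on the right.

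Next I would verify the scalar identity $1-qb_{0i}=c^{-}_{0i}$. Using $q=e^{\eta}$ together with the explicit forms $b(u)=\sinh u/\sinh(u+\eta)$ and $c^{-}(u)=e^{-u}\sinh\eta/\sinh(u+\eta)$ from (\ref{bc}) and (\ref{fg}), a one-line manipulation with $u=\lambda_0-\xi_i$ gives $\sinh(u+\eta)-e^{\eta}\sinh u=e^{-u}\sinh\eta$, so that $1-qb_{0i}=e^{-u}\sinh\eta/\sinh(u+\eta)=c^{-}(\lambda_0-\xi_i)=c^{-}_{0i}$. This produces exactly the prefactor appearing in (\ref{eq:diagB}).

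The heart of the argument—and the step I expect to be the main obstacle—is the evaluation of $X_i^{-}\,\widehat{\delta}_i^{1\cdots n}$, where $\widehat{\delta}_i^{1\cdots n}=\prod_{j\ne i}\widehat{\delta}_{ij}$ by (\ref{eq:delta-in}). Reading off the matrix of $\widehat{\delta}_{ij}$, the state $1$ at site $i$ may migrate into state $2$ through the column-$1$ off-diagonal entries, whereas state $2$ at site $i$ is absorbing (it never returns to state $1$). Since $X_i^{-}$ retains only the part of its argument in which site $i$ occupies state $1$, the surviving contributions in the ordered product are precisely the histories in which site $i$ stays in state $1$ from start to finish; the possibility that some intermediate $\widehat{\delta}_{ij}$ flips site $i$ is harmless, since such terms are killed by the front factor $X_i^{-}$ regardless of the intervening diagonal factors. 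On the state-$1$ sector each $\widehat{\delta}_{ij}$ restricts to $\mathrm{diag}(1,b_{ji}^{-1})_{[j]}$ at site $j$, and these restricted factors commute because they act on distinct sites $j$. I therefore expect $X_i^{-}\,\widehat{\delta}_i^{1\cdots n}=X_i^{-}\bigotimes_{j\ne i}\mathrm{diag}(1,b_{ji}^{-1})_{[j]}$, and making this monotonicity/absorbing-state reasoning fully rigorous is the one genuinely delicate point.

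Finally I would assemble the pieces. At each site $j\ne i$ the left diagonal factor $\mathrm{diag}(b_{0j},1)$ coming from the diagonalized $D$ multiplies the restricted factor $\mathrm{diag}(1,b_{ji}^{-1})$ to give $\mathrm{diag}(b_{0j},b_{ji}^{-1})_{[j]}$, while the scalar $1-qb_{0i}=c^{-}_{0i}$ stands out in front and $X_i^{-}$ is untouched. Summing over $i$ then yields precisely (\ref{eq:diagB}). Everything apart from the structural analysis of $\widehat{\delta}_i^{1\cdots n}$ is a direct substitution of the diagonalizations of Proposition \ref{prop:diagDA} together with the elementary trigonometric identity for the prefactor.
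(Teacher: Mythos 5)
Your proposal is correct and follows essentially the same route as the paper: start from the identity ${\widetilde B}_{1\cdots n}(\lambda)=\sum_i X_i^{-}({\widetilde D}_{1\cdots i-1, i+1\cdots n}-q{\widetilde D}_{1\cdots n})\widehat{\delta}_i^{1\cdots n}$, use $1-qb_{0i}=c^{-}_{0i}$, and then combine the diagonal factors. The only difference is that you spell out the "after some calculation" step (the collapse of the site-$i$ factor under left multiplication by $X_i^{-}$ and the evaluation of $X_i^{-}\widehat{\delta}_i^{1\cdots n}$) which the paper leaves implicit; your treatment of that step is sound.
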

\begin{proof} 
Let us denote 
$b_{0 i}=\sinh(\lambda - \xi_i)/\sinh(\lambda - \xi_i + \eta)$ and 
 $c_{0 i}=\sinh(\eta)/\sinh(\lambda - \xi_i + \eta)$, 
by $b_{0i}$ and $c_{0i}$ respectively, .  
Putting $1 - q b_{01} = c_{0i}^{-}$ in (\ref{eq:BDD}) 
we show 
\be 
{\widetilde B}_{1 \cdots n}(\lambda) = 
\sum_{i=1}^{n} c_{0i}^{-} 
X_i^{-} {\widetilde D}_{1 \cdots i-1 \, i+1 \cdots n} 
{\widehat \delta}_{i}^{1 \cdots n} \, .  
\ee
After some calculation, we have (\ref{eq:diagB}).  
\end{proof}

Similarly, making use of lemmas \ref{lem:C1}, 
\ref{lem:C2} and \ref{lem:C3}, 
 we can show the diagonalized form of operator $C$. 
\begin{prop}
Let $X_i^{+}$ denote the 
spin-1/2 representation of $X^{+}$ acting on the 
$i$th site in the tensor product representation. We have   
\be 
{\widetilde C}_{1 \cdots n}(\lambda_0) = 
\sum_{i=1}^{n} c_{0i}^{+} \, X^{+}_i \bigotimes_{j=1; j \ne i}^{n} 
\left( 
\begin{array}{cc} 
b_{0j}b_{ij}^{-1} & 0 \\
0 &  1 
\end{array}
\right)_{[j]} \, ,  \label{eq:diagC} 
\ee
where $b_{0i}=b(\lambda_0-\xi_i)$, 
$b_{ji}=b(\xi_j - \xi_i)$ 
and $c_{0i}^{+} = c^{+}(\lambda_0 - \xi_i)$.  
\end{prop}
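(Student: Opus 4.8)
The plan is to carry out the argument in complete parallel with the derivation of the pseudo-diagonalized $B$ operator just given, replacing the commutation relation (\ref{eq:BDelta}) by its $C$-counterpart (\ref{eq:CDelta}) and the lowering generator $X^{-}$ by the raising generator $X^{+}$. Concretely, I would first conjugate both sides of (\ref{eq:CDelta}) by the total $F_{1 \cdots n}$, so that ${\widetilde C}_{1 \cdots n}(\lambda) = {\widetilde \Delta}_{1 \cdots n}(X^{+}) {\widetilde D}_{1 \cdots n}(\lambda) - q^{-1} {\widetilde D}_{1 \cdots n}(\lambda) {\widetilde \Delta}_{1 \cdots n}(X^{+})$. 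Since ${\widetilde D}$ is already diagonal by (\ref{eq:diagD}), the whole task reduces to computing the $F$-conjugated form ${\widetilde \Delta}_{1 \cdots n}(X^{+})$ of the $n$-fold coproduct of $X^{+}$.

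The key technical input is the $X^{+}$-analogue of (\ref{eq:Xdelta}), which is exactly what lemmas \ref{lem:C1} and \ref{lem:C2} are meant to supply: a recursion for ${\widetilde \Delta}_{1 \cdots n}(X^{+})$ of the shape (\ref{eq:Dn(X)}), followed by a closed form expressing it as a sum over $i$ of $X_i^{+}$ multiplied by the appropriate ${\widehat \delta}$-factors. I would establish both by induction on $n$ precisely as in the $X^{-}$ case: decompose $F^{-1}_{1 \cdots n}$ through the cocycle condition and (\ref{eq:F^{-1}}), insert $F_{1, 2 \cdots n} = e_1^{11} + e_1^{22} R_{1, 2 \cdots n}$ together with its $\dagger$-transpose, and expand $\Delta^{(n-1)}(X^{+}) = X^{+} \otimes I^{\otimes(n-1)} + K \otimes \Delta^{(n-2)}(X^{+})$. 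The surviving cross terms are then evaluated using the $F$-diagonal blocks (\ref{eq:diagD}) and (\ref{eq:diagA-dag}) of $D$ and $A^{\dagger}$. The decisive structural difference from the $X^{-}$ computation is that the coproduct of $X^{+}$ carries the factor $K$ on the \emph{left} tensor leg, whereas that of $X^{-}$ carries $K^{-1}$ on the right; tracking this through the induction is what ultimately interchanges the roles of the two diagonal entries.

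With lemma \ref{lem:C2} in hand, lemma \ref{lem:C3} follows by substituting the closed form together with ${\widetilde D}$ of (\ref{eq:diagD}) into the conjugated relation, producing the $C$-analogue of (\ref{eq:BDD}), i.e. an expression of the form $\sum_{i=1}^{n} X_i^{+}(\cdots){\widehat \delta}_i^{1 \cdots n}$. Finally I would insert the explicit ${\widehat \delta}_{ij}$-blocks and collapse the bracket using the scalar identity $1 - q^{-1} b(u) = c^{+}(u)$, which follows at once from $q = e^{\eta}$ and the definitions (\ref{bc}) and (\ref{fg}) of $b$ and $c^{\pm}$ (just as $1 - q\,b(u) = c^{-}(u)$ drove the $B$ computation). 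This extracts the prefactor $c_{0i}^{+}$ and reassembles the product over $j \ne i$ into the diagonal blocks of (\ref{eq:diagC}).

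I expect the main obstacle to be the contraction in the last step: because $X^{+}$ is a raising operator it selects the \emph{lower} component at each site, so combining $X_i^{+}$ with the non-diagonal ${\widehat \delta}_{ij}$-blocks and with ${\widetilde D}$ produces the entries $b_{0j} b_{ij}^{-1}$ and $1$ of (\ref{eq:diagC}) rather than the $b_{0j}$ and $b_{ji}^{-1}$ of the $B$ case. Pinning down this bookkeeping — which relies on the precise placement of $K$ in $\Delta(X^{+})$ and on the explicit lower-triangular structure of ${\widehat \delta}_{ij}$ — is the delicate part; once it is settled, everything else is a direct transcription of the $B$-operator argument, and (\ref{eq:diagC}) is obtained by the same ``after some calculation'' as for (\ref{eq:diagB}).
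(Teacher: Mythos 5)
Your proposal follows the paper's own (one-line) proof exactly: transcribe the $B$-operator argument with (\ref{eq:CDelta}) in place of (\ref{eq:BDelta}), the $X^{+}$-analogues of the recursion and closed-form lemmas (lemmas \ref{lem:C1}--\ref{lem:C3}), and the scalar identity $1-q^{-1}b(u)=c^{+}(u)$ — all of which you identify correctly. One small correction to your plan: because $\Delta(X^{+})$ carries $K$ on the \emph{left} tensor leg, the recursion of lemma \ref{lem:C1} must be anchored at site $n$ (via the cocycle splitting $F_{1\cdots n}=F_{1\cdots n-1}F_{1\cdots n-1,\,n}$), not at site $1$ as you propose — inserting $F_{1,\,2\cdots n}=e_1^{11}+e_1^{22}R_{1,\,2\cdots n}$ leaves a non-vanishing cross term of the form $e_1^{12}R_{2\cdots n,\,1}e_1^{11}$, which is exactly the bookkeeping subtlety you flag but resolve in the wrong direction.
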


\subsection{Pseudo-diagonalized forms of 
the symmetric $B$ and $C$ operators}

Let us show the pseudo-diagonalized forms of 
the $B$ and $C$ operators 
of the symmetric monodromy matrix $R_{0, 1 \cdots n}$.  
Here we recall that expressions 
(\ref{eq:diagB}) and (\ref{eq:diagC}) are for 
$\widetilde{B}^{+}_{1 2 \cdots n}(\lambda)$ and 
$\widetilde{C}^{+}_{1 2 \cdots n}(\lambda)$, 
respectively. They are matrix elements of 
the asymmetric monodromy matrix 
$R^+_{0, 1 \cdots n} = \chi_{0 1 \cdots n} R_{0, 1 \cdots n} \chi^{-1}_{0 1 \cdots n}$. 
We have the following relations: 
\bea 
B^{+}_{1 2 \cdots n}(\lambda) & = & e^{-\lambda}  
\chi_{0 1 \cdots n} R_{0, 1 \cdots n} \chi^{-1}_{0 1 \cdots n}
\non \\ 
C^{+}_{1 2 \cdots n}(\lambda) & = & e^{\lambda}  
\chi_{0 1 \cdots n} R_{0, 1 \cdots n} \chi^{-1}_{0 1 \cdots n}
\eea
Therefore, applying the inverse gauge transformation ${\bar \chi}$ 
 to (\ref{eq:diagB}) and (\ref{eq:diagC}), we obtain   
\be 
{\widetilde B}_{1 \cdots n}(\lambda) = 
\sum_{i=1}^{n} c_{0i} \, \sigma^{-}_i \bigotimes_{j=1; j \ne i}^{n} 
\left( 
\begin{array}{cc} 
b_{0j} & 0 \\
0 & b_{ji}^{-1} 
\end{array}
\right)_{[j]} \, ,  
\label{eq:diag-symB}
\ee
and 
\be 
{\widetilde C}_{1 \cdots n}(\lambda_0) = 
\sum_{i=1}^{n} c_{0i} \, \sigma^{+}_i \bigotimes_{j=1; j \ne i}^{n} 
\left( 
\begin{array}{cc} 
b_{0j}b_{ij}^{-1} & 0 \\
0 &  1 
\end{array}
\right)_{[j]} \, . \label{eq:diag-symC}   
\ee
Here we recall $c_{0i}= \sinh(\eta)/\sinh(\lambda-\xi_i + \eta)$.   

We should remark that expressions 
(\ref{eq:diag-symB}) and (\ref{eq:diag-symC}) coincide with 
eq. (2.29) and (2.30) of Ref. \cite{KMT1999}, respectively.

%
%
 \setcounter{equation}{0} 
 \renewcommand{\theequation}{6.\arabic{equation}}
\section{Scalar products formulas}

\subsection{Formula for higher-spin scalar products}

Let us consider the case of tensor product of spin-$s$ representations. 
We recall that $\ell=2s$ and $L=\ell N_{s}$. 
We introduce parameters $\xi_j^{(\ell; \epsilon)}$ for $j=1, 2, \ldots, L$, 
as follows:   
\be 
\xi_{(k-1) \ell + j}^{(\ell; \epsilon)} 
= \zeta_k - (j-1) \eta + \ell \eta/2 +  \epsilon r_j \qquad  j=1, \ldots \ell; 
k= 1, \ldots, N_{s} \, .  
\label{eq:complete-strings}
\ee
Here $r_j$ $(j=1, 2, \ldots, \ell)$ are distinct and nonzero parameters, and 
$\epsilon$ is an arbitrary small number. 
We also introduce the following symbol: 
\be 
P^{(\ell)}_{1 \cdots L} = 
\prod_{j=1}^{N_{s}} P^{(\ell)}_{(j-1) \ell +1}  
\, , \quad 
P^{(\ell) \, {\bar \chi}}_{1 \cdots L} = 
\prod_{j=1}^{N_{s}} \left( P^{(\ell)}_{(j-1) \ell +1} \right)^{\bar \chi} 
\ee
Here we recall that $B$ operator acting on the 
tensor product of spin-$s$ representations, 
$\left( V^{(2s)} \right)^{\otimes N_{s}}$, is given by 
$B$ operator acting on the tensor product of 
spin-1/2 representations $\left( V^{(1)} \right)^{\otimes L}$ 
with $L=N_{s} \ell$ and multiplied 
by the projection operators:  
$$
B^{(\ell)}_{1 \cdots N_{s}}(u;  \zeta_1, \ldots, \zeta_{N_{s}}) 
= P_{1 \cdots L}^{(\ell)} \, B_{1 \cdots L}^{(1)}
(u; \xi_1^{(\ell)}, \ldots, \xi_{L}^{(\ell)}) 
\, P_{1\cdots L}^{(\ell)} \, . 
$$

We now define the scalar product for the spin-$\ell/2$ case as follows. 
\begin{df} Let $\{ \lambda_{\alpha} \}$ ($\alpha= 1, 2, \ldots, n$) be 
a set of solutions of the Bethe ansatz equations and 
$\{ \mu_{j} \}$ ($j= 1, 2, \ldots, n$) be arbitrary numbers. 
We define the scalar product 
 $S^{(\ell)}_{n}(\{ \mu_j \}, \{ \lambda_{\alpha} \} ; \{ \zeta_k \})$ 
by the following:  
\be 
S^{(\ell)}_{n}(\{ \mu_j \}, \{ \lambda_{\alpha} \} ; \{ \zeta_k \})  
= 
\langle 0 | \, C^{(\ell)}(\mu_1) \cdots C^{(\ell)}(\mu_n) 
B^{(\ell)}(\lambda_1) 
\cdots B^{(\ell)}(\lambda_n) 
\, | 0 \rangle
\ee
Here $C^{(\ell)}(\mu_j)$ and $B^{(\ell)}(\lambda_{\alpha})$ abbreviate 
$C^{(\ell)}_{1 \cdots N_{s}}(\mu_j; \{ \zeta_j \})$ and 
$B^{(\ell)}_{1 \cdots N_{s}}(\lambda_{\alpha}; \{ \zeta_j \})$, 
respectively, and $\zeta_k$ denote the
 centers of $\ell_k$-strings of the inhomogeneous parameters 
$\{ \xi_k^{\ell} \}$.   
\end{df}


We calculate the scalar product for the higher-spin XXZ chains 
by the  formula in the next proposition. 
\begin{prop} 
Let $\{ \lambda_{\alpha} \}$ satisfy the Bethe ansatz equations for the 
spin-$\ell/2$ case. 
The scalar product of the spin-$\ell/2$ XXZ spin chain is 
reduced into that of the spin-1/2 XXZ spin chain as follows: 
\be 
S^{(\ell)}_{n}(\{ \mu_j \}, \{ \lambda_{\alpha} \} ; \{ \zeta_k \})  
= \lim_{\epsilon \rightarrow 0} \Bigg[ 
S^{(1)}_{n}(\{ \mu_j \}, \{ \lambda_{\alpha} \} ; 
\{ \xi^{(\ell; \epsilon)}_k \})  \Bigg]
\ee
\end{prop}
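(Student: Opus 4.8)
The plan is to reduce the fused (spin-$\ell/2$) scalar product to the spin-$1/2$ one by pushing the fusion projectors onto the ferromagnetic vacuum and then letting the inhomogeneities collapse to complete $\ell$-strings. Writing $P=\prod_{k=1}^{N_s}P^{(\ell)}_{\ell(k-1)+1}$ for the total projector and recalling $B^{(\ell)}=PB^{(1)}P$, $C^{(\ell)}=PC^{(1)}P$, I would first rewrite
\[
S^{(\ell)}_n=\langle 0|\,\Big(\prod_{j=1}^{n}P\,C^{(1)}(\mu_j)\,P\Big)\Big(\prod_{\alpha=1}^{n}P\,B^{(1)}(\lambda_\alpha)\,P\Big)\,|0\rangle
\]
with all spin-$1/2$ operators evaluated at the exact complete-string point $\{\xi^{(\ell)}\}$, i.e. at $\epsilon=0$. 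Since $P|0\rangle=|0\rangle$ and $\langle 0|P=\langle 0|$ (the vacuum lies in the symmetric subspace), the two outermost projectors are absorbed immediately, and the whole problem becomes the elimination of the interior ones.

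The core step is to establish that $P$ commutes with the spin-$1/2$ operators at the complete-string point. Lemma \ref{lem:PR=PRP} gives $P^{(\ell)}_j R^{+}_{0,1\cdots L}=P^{(\ell)}_j R^{+}_{0,1\cdots L}P^{(\ell)}_j$; because the single-string projectors act on disjoint blocks and hence commute, this builds up to $P\,R^{+}_{0,1\cdots L}=P\,R^{+}_{0,1\cdots L}\,P$, and reading off the auxiliary-space entries yields $PB^{(1)}=PB^{(1)}P$ and $PC^{(1)}=PC^{(1)}P$. To collapse the \emph{interior} projectors the one-sided relation is not enough, and I would also derive the mirror relation $R^{+}_{0,1\cdots L}P=P\,R^{+}_{0,1\cdots L}P$, either by applying the $\dagger$-operation (\ref{eq:Rdag-inverse}) and using the crossing/$\dagger$ identification of $B^{\dagger},C^{\dagger}$ with $B,C$, or by repeating the fusion argument from the outgoing side; the $C$-statement can equivalently be obtained from the $B$-statement through the charge-conjugation invariance $\bar B=C$ of the symmetric monodromy matrix, since the symmetrizer $P$ is charge-conjugation invariant. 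Together these give $[P,B^{(1)}]=[P,C^{(1)}]=0$ at $\{\xi^{(\ell)}\}$.

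With full commutation in hand, every interior $P$ can be moved through the $B$- and $C$-strings and annihilated against a neighbour by idempotency (\ref{eq:nilpotency}), the last surviving projector being swallowed by $P|0\rangle=|0\rangle$. This leaves exactly
\[
S^{(\ell)}_n=\langle 0|\,C^{(1)}(\mu_1)\cdots C^{(1)}(\mu_n)\,B^{(1)}(\lambda_1)\cdots B^{(1)}(\lambda_n)\,|0\rangle\Big|_{\{\xi^{(\ell)}\}}=S^{(1)}_n(\{\mu_j\},\{\lambda_\alpha\};\{\xi^{(\ell)}\}).
\]
The matrix element on the right is an analytic function of the inhomogeneities, with no pole at the complete-string configuration for generic $\{\lambda_\alpha\},\{\mu_j\}$, so it coincides with $\lim_{\epsilon\to0}S^{(1)}_n(\{\mu_j\},\{\lambda_\alpha\};\{\xi^{(\ell;\epsilon)}\})$, which is the assertion. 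Here the $\eta/2$ offset between the fusion parameters $\{\xi^{(\ell)}\}$ and the $\{\xi^{(\ell;\epsilon)}\}$ of (\ref{eq:complete-strings}) is harmless: only the within-string differences $-(j-1)\eta$ enter Lemma \ref{lem:PR=PRP}, and these agree.

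I expect the main obstacle to sit in two places. The sharper technical point is the commutation step: the excerpt supplies only the one-sided Lemma \ref{lem:PR=PRP}, whereas removing the interior projectors genuinely requires the companion relation $R^{+}P=PR^{+}P$, so the symmetry machinery (via $\dagger$/crossing or charge conjugation) must really be invoked there. The second delicate point is the limit itself: although the scalar product as a number is continuous at $\epsilon=0$, its Slavnov-type determinant representation degenerates there, because adjacent string rapidities differ by $-\eta$ and the $F$-basis becomes singular by Proposition \ref{prop:detF}; keeping $\epsilon\ne0$ is precisely what makes the determinant formula well defined, so one must verify that the limit of that representation reproduces the continuous scalar-product value and that the on-shell condition on $\{\lambda_\alpha\}$ (the higher-spin Bethe equations, which reduce to the spin-$1/2$ equations at complete strings) survives the limit.
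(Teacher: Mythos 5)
Your proposal is correct and follows essentially the same route as the paper: rewrite $S^{(\ell)}_n$ with the total projector $P$ sandwiching each spin-1/2 $B$ and $C$ at the complete-string point, absorb the outermost projectors via $P|0\rangle=|0\rangle$ and $\langle 0|P=\langle 0|$, collapse the interior ones using Lemma \ref{lem:PR=PRP}, and evaluate the resulting spin-1/2 scalar product as the $\epsilon\to 0$ limit where the $F$-basis (and hence the determinant formula) is nondegenerate. The one place you diverge is not actually needed: the one-sided relation $PX=PXP$ alone already removes all interior projectors (by induction one gets $PX_1\cdots X_k=PX_1\cdots X_kP$, whence $PX_1PX_2P\cdots X_mP=PX_1X_2\cdots X_mP$ working left to right), so the mirror relation $R^{+}P=PR^{+}P$ obtained via the $\dagger$/charge-conjugation machinery, while true, is superfluous.
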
 
\begin{proof} 
We now calculate 
the scalar product making use of eq. (\ref{eq:PR=PRP}) 
of lemma \ref{lem:PR=PRP} as follows.  
\bea
& & \langle 0 | \, C^{(\ell)}_{1 \cdots N_{s}}(\mu_1; \{ \zeta_j \}) 
\cdots C^{(\ell)}_{1 \cdots N_{s}}(\mu_n; \{ \zeta_j \}) 
B^{(\ell)}_{1 \cdots N_{s}}(\lambda_1; \{ \zeta_j \}) 
\cdots B^{(\ell)}_{1 \cdots N_{s}}(\lambda_n; \{ \zeta_j \}) 
\, | 0 \rangle \non \\ 
& = & 
\langle 0 | \left( P_{1 \cdots L}^{(\ell) \, {\bar \chi}} 
C^{(1)}_{1 \cdots L}(\mu_1; \{ \xi_j^{(\ell)} \}) 
P_{1 \cdots L}^{(\ell) \, {\bar \chi}} \right) 
\, \cdots \, \left( P_{1 \cdots L}^{(\ell) \, {\bar \chi}} 
C^{(1)}_{1 \cdots L}(\mu_n; \{ \xi_j^{(\ell)} \}) 
P_{1 \cdots L}^{(\ell) \, {\bar \chi}} \right) 
\non \\  
& & \quad \cdot \, \left( P_{1 \cdots L}^{(\ell) \, {\bar \chi}}  
B^{(1)}_{1 \cdots L}(\lambda_1; \{ \xi_j^{(\ell)} \}) P_{1 \cdots L}^{(\ell)} \right) \, 
\cdots \left( P_{1 \cdots L}^{(\ell) \, {\bar \chi}}  
B^{(1)}_{1 \cdots L}(\lambda_n; \{ \xi_j^{(\ell)} \})P_{1 \cdots L}^{(\ell)} 
\right) \, | 0 \rangle  \non \\ 
& = & 
\langle 0 | P_{1 \cdots L}^{(\ell) \, {\bar \chi}} \, \times \, 
C^{(1)}_{1 \cdots L}(\mu_1; \{ \xi_j^{(\ell)} \})  
\, \cdots \, C^{(1)}_{1 \cdots L}(\mu_n; \{ \xi_j^{(\ell)} \})  
\non \\  
& & \quad \cdot \,  
B^{(1)}_{1 \cdots L}(\lambda_1; \{ \xi_j^{(\ell)} \}) \, 
\cdots   
B^{(1)}_{1 \cdots L}(\lambda_n; \{ \xi_j^{(\ell)} \}) \, \times \, 
P_{1 \cdots L}^{(\ell) \, {\bar \chi}} \, | 0 \rangle  
\eea
Here we note that we have 
$\langle 0| P_{1 \cdots L}^{(\ell)} = \langle 0|$ and 
 $P_{1 \cdots L}^{(\ell)} | 0 \rangle = | 0 \rangle$. 
Moreover, we have  
$\langle 0| P_{1 \cdots L}^{(\ell) \, {\bar \chi}}  = \langle 0|$ and  
$P_{1 \cdots L}^{(\ell) \, {\bar \chi}} | 0 \rangle  = | 0 \rangle$.   
We thus have 
\bea
& & \langle 0 | \, C^{(\ell)}_{1 \cdots N_{s}}(\mu_1; \{ \zeta_j \}) 
\cdots C^{(\ell)}_{1 \cdots N_{s}}(\mu_n; \{ \zeta_j \}) 
B^{(\ell)}_{1 \cdots N_{s}}(\lambda_1; \{ \zeta_j \}) 
\cdots B^{(\ell)}_{1 \cdots N_{s}}(\lambda_n; \{ \zeta_j \}) 
\, | 0 \rangle \non \\ 
& = & 
\langle 0 | C^{(1)}_{1 \cdots L}(\mu_1; \{ \xi_j^{(\ell)} \})  
\, \cdots \, C^{(1)}_{1 \cdots L}(\mu_n; \{ \xi_j^{(\ell)} \}) \, \cdot \,  
B^{(1)}_{1 \cdots L}(\lambda_1; \{ \xi_j^{(\ell)} \}) \, 
\cdots   
B^{(1)}_{1 \cdots L}(\lambda_n; \{ \xi_j^{(\ell)} \}) \, | 0 \rangle \, 
\non \\
\eea
We evaluate the last line through  
the following limit of sending $\epsilon$ to zero: 
\bea 
& & \langle 0 | C^{(1)}_{1 \cdots L}(\mu_1; \{ \xi_j^{(\ell)} \})  
\, \cdots \, C^{(1)}_{1 \cdots L}(\mu_n; \{ \xi_j^{(\ell)} \}) \, \cdot \,  
B^{(1)}_{1 \cdots L}(\lambda_1; \{ \xi_j^{(\ell)} \}) \, 
\cdots   
B^{(1)}_{1 \cdots L}(\lambda_n; \{ \xi_j^{(\ell)} \}) \, | 0 \rangle \, 
\non \\
& = & \lim_{\epsilon \rightarrow 0} 
\langle 0 | C^{(1)}_{1 \cdots L}(\mu_1; \{ \xi_j^{(\ell; \epsilon)} \})  
\cdots C^{(1)}_{1 \cdots L}(\mu_n; \{ \xi_j^{(\ell; \epsilon)} \})  \non \\  
& & \qquad \times 
B^{(1)}_{1 \cdots L}(\lambda_1; \{ \xi_j^{(\ell; \epsilon)} \}) \, 
\cdots   
B^{(1)}_{1 \cdots L}(\lambda_n; \{ \xi_j^{(\ell; \epsilon)} \}) \, | 0 \rangle 
\non \\ 
& = & \lim_{\epsilon \rightarrow 0} \Bigg[ 
\langle 0 | {\widetilde C}^{(1)}_{1 \cdots L}
(\mu_1; \{ \xi_j^{(\ell; \epsilon)} \})  
\cdots 
{\widetilde C}^{(1)}_{1 \cdots L}(\mu_n; \{ \xi_j^{(\ell; \epsilon)} \}) 
 \non \\  
& & \qquad \quad \times \, 
{\widetilde B}^{(1)}_{1 \cdots L}(\lambda_1; \{ \xi_j^{(\ell; \epsilon)} \}) 
\, \cdots \,   
{\widetilde B}^{(1)}_{1 \cdots L}(\lambda_n; \{ \xi_j^{(\ell; \epsilon)} \}) 
\Bigg] \, | 0 \rangle \, 
\eea
\end{proof}

We evaluate the spin-1/2 scalar product 
taking the limit of sending $\epsilon$ to 0, 
so that we can make 
the determinant of $F_{L \cdots 2 1}$ being nonzero.  
Here we remark that 
the operator $F_{L \cdots 2 1}$ appears in the 
pseudo-diagonalization process of the $B$ and $C$ operators, 
as shown in Section 5, and also that    
the determinant of $F_{L \cdots 2 1}$ vanishes at $\epsilon=0$, 
when parameters $\xi_j$ are given by eq. (\ref{eq:complete-strings}).  
In fact,  if we put some inhomogeneous parameters $\xi_j$ 
in the form of a ``complete $\ell$-string'' \cite{Odyssey},  
that is, for some integers $\ell$, $m$ and a constant $z$, 
we have $\xi_{m + j} = z - j \eta$ for $j= 1, 2, \ldots, \ell$, 
then the determinant of $F_{L \cdots 2 1}$ vanishes.  
Here we also note that ${\rm det} F_{1 2 \cdots L} \ne 0$ even 
at $\epsilon=0$.

Let us discuss the mixed spin case.  
We set the inhomogeneous parameters as follows: 
\be 
\xi_{\ell_1 + \cdots \ell_{k-1} + j}^{({\bm \ell}; \epsilon)} 
= \zeta_k - (j-1) \eta + 
(\ell-1) \eta/2 +  \epsilon r_j \qquad  j=1, \ldots \ell; \, 
k= 1, \ldots, r \, .  
\ee
Let us define $P^{({\bm \ell})}_{1 2 \cdots L}$ 
and $P^{({\bm \ell}) \, {\bar \chi}}_{1 2 \cdots L}$ by 
\bea 
P^{({\bm \ell})}_{1 2 \cdots L} & = & 
\prod_{k=1}^{r} P^{(\ell_k)}_{\ell(k-1)+1} \non \\ 
P^{({\bm \ell}) \, {\bar \chi} }_{1 2 \cdots L} & = & 
\prod_{k=1}^{r} \left( P^{(\ell_k)}_{\ell(k-1)+1} \right)^{\bar \chi} 
\eea
It is easy to see the following: 
\bea 
\langle 0 | P^{({\bm \ell}) }_{1 2 \cdots L} & = & \langle 0 | \, , \quad 
P^{({\bm \ell})}_{1 2 \cdots L} | 0 \rangle = | 0 \rangle \, , 
\non \\ 
\langle 0 | P^{({\bm \ell}) \, {\bar \chi} }_{1 2 \cdots L} & = & 
\langle 0 | \, , \quad 
P^{({\bm \ell}) \, {\bar \chi} }_{1 2 \cdots L} | 0 \rangle = | 0 \rangle 
\eea
We now define the scalar product for the mixed spin case as follows. 
\be 
S^{({\bm \ell})}_{n}(\{ \mu_j \}, \{ \lambda_{\alpha} \} ; \{ \zeta_k \})  
= 
\langle 0 | \, C^{({\bm \ell})}(\mu_1) \cdots C^{({\bm \ell})}(\mu_n) 
B^{({\bm \ell})}(\lambda_1) 
\cdots B^{({\bm \ell})}(\lambda_n) 
\,  | 0 \rangle
\ee
Here $C^{({\bm \ell})}(\mu_j)$ and $B^{({\bm \ell})}(\lambda_{\alpha})$ 
abbreviate 
$C^{({\bm \ell})}_{1 \cdots r}(\mu_j; \{ \zeta_j \})$ and 
$B^{({\bm \ell})}_{1 \cdots r}(\lambda_{\alpha}; \{ \zeta_j \})$, 
respectively.

\begin{prop} 
Let $\{ \lambda_{\alpha} \}$ satisfy the Bethe ansatz equations 
for the mixed spin case. 
The scalar product of the mixed-spin XXZ spin chain is 
reduced into that of the spin-1/2 XXZ spin chain as follows: 
\be 
S^{({\bm \ell})}_{n}(\{ \mu_j \}, \{ \lambda_{\alpha} \} ; \{ \zeta_k \})  
= 
\lim_{\epsilon \rightarrow 0} \Bigg[ S^{(1)}_{n}(\{ \mu_j \}, 
\{ \lambda_{\alpha} \} ; \{ \xi^{({\bm \ell}; \epsilon)}_k \}) \Bigg] 
\ee
\end{prop}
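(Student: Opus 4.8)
The plan is to carry over, essentially verbatim, the reduction already established for the uniform spin-$s$ case, replacing the single block length $\ell$ by the composition $(\ell_1,\dots,\ell_r)$ and the projector $P^{(\ell)}_{1\cdots L}$ by the mixed-spin projector $P^{({\bm \ell})}_{1\cdots L}=\prod_{k=1}^{r}P^{(\ell_k)}_{\ell_1+\cdots+\ell_{k-1}+1}$. First I would rewrite each factor in the defining product $\langle0|C^{({\bm \ell})}(\mu_1)\cdots C^{({\bm \ell})}(\mu_n)B^{({\bm \ell})}(\lambda_1)\cdots B^{({\bm \ell})}(\lambda_n)|0\rangle$ by inserting the sandwiched forms $C^{({\bm \ell})}(\mu_j)=P^{({\bm \ell})\,{\bar \chi}}_{1\cdots L}\,C^{(1)}_{1\cdots L}(\mu_j;\{\xi^{({\bm \ell})}_j\})\,P^{({\bm \ell})\,{\bar \chi}}_{1\cdots L}$ and the analogous one for $B^{({\bm \ell})}$, which are read off directly from the definition of the mixed-spin monodromy matrix $T^{({\bm \ell})}_0$ and the inverse gauge transformation.

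The key input is that equation (\ref{eq:PR=PRP}) of lemma \ref{lem:PR=PRP} holds simultaneously for every block. For fixed $k$ the inhomogeneities $\xi^{({\bm \ell})}_{\ell_1+\cdots+\ell_{k-1}+j}=\zeta_k-(j-1)\eta+(\ell_k-1)\eta/2$ form a complete $\ell_k$-string of centre $z=\zeta_k+(\ell_k-1)\eta/2$, so the hypothesis of lemma \ref{lem:PR=PRP} is met by each $P^{(\ell_k)}$; since projectors supported on disjoint site-blocks commute, their product inherits $P^{({\bm \ell})}_{1\cdots L}R^{+}_{0,1\cdots L}=P^{({\bm \ell})}_{1\cdots L}R^{+}_{0,1\cdots L}P^{({\bm \ell})}_{1\cdots L}$, and likewise for the gauge-transformed version. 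This lets every internal projector be absorbed: adjacent factors collapse by idempotency, and the intermediate projectors slide freely through the string of $C^{(1)}$'s and $B^{(1)}$'s. I would then strip the two outermost projectors using the vacuum identities $\langle0|P^{({\bm \ell})\,{\bar \chi}}_{1\cdots L}=\langle0|$ and $P^{({\bm \ell})\,{\bar \chi}}_{1\cdots L}|0\rangle=|0\rangle$ recorded just before the statement, leaving exactly $\langle0|C^{(1)}(\mu_1)\cdots C^{(1)}(\mu_n)B^{(1)}(\lambda_1)\cdots B^{(1)}(\lambda_n)|0\rangle$ with the inhomogeneities fixed to $\{\xi^{({\bm \ell})}_j\}$, i.e. $S^{({\bm \ell})}_{n}=S^{(1)}_{n}(\{\mu_j\},\{\lambda_\alpha\};\{\xi^{({\bm \ell})}_j\})$ at $\epsilon=0$.

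Finally, since $S^{(1)}_{n}$ is a finite sum of products of monodromy-matrix entries, it depends analytically, hence continuously, on the inhomogeneous parameters; evaluating at $\xi^{({\bm \ell};\epsilon)}_j$ and letting $\epsilon\to0$ therefore reproduces the value at $\{\xi^{({\bm \ell})}_j\}$, giving $S^{({\bm \ell})}_{n}=\lim_{\epsilon\to0}S^{(1)}_{n}(\{\mu_j\},\{\lambda_\alpha\};\{\xi^{({\bm \ell};\epsilon)}_j\})$, which is the claim.

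The main obstacle is the blockwise projector-sliding step: one must verify that lemma \ref{lem:PR=PRP} genuinely applies to all blocks at once under the mixed inhomogeneity assignment, and that no cross-block contribution spoils the absorption of the interior projectors. Once this is secured, the $\epsilon$-regularisation plays only the role it does in the uniform case, namely to keep ${\rm det}\,F_{L \cdots 2 1}\neq0$ so that the pseudo-diagonalised forms (\ref{eq:diagB})--(\ref{eq:diagC}) may be used to turn the surviving spin-$1/2$ scalar product into an explicit determinant; for the identity asserted here the limit itself is merely a statement of continuity.
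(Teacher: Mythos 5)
Your proposal is correct and follows essentially the same route as the paper: the paper proves the uniform spin-$s$ case by sandwiching the spin-1/2 $B$ and $C$ operators between projectors, absorbing the interior projectors via lemma \ref{lem:PR=PRP} and idempotency, stripping the outer ones with the vacuum identities, and then evaluating the resulting spin-1/2 scalar product in the $\epsilon\to 0$ limit; the mixed-spin statement is left as the evident blockwise analogue, which is exactly what you carry out. Your observation that each block of $\{\xi^{({\bm\ell})}_j\}$ is a complete $\ell_k$-string satisfying the hypothesis of lemma \ref{lem:PR=PRP}, and that the limit itself is only a continuity statement while $\epsilon$ serves to keep ${\rm det}\,F_{L\cdots 21}\neq 0$, matches the paper's reasoning.
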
 

\subsection{Determinant expressions of the scalar products}

Let us review the result of the spin-1/2 case \cite{KMT1999}. 
Suppose that  $\lambda_{\alpha}$ for $\alpha=1, 2, \ldots, n$,  
are solutions of the Bethe ansatz equations with 
in homogeneous parameters $\xi_j$ for $j=1, 2, \ldots, L$,   
 the scalar product is defined by  
\be 
S_n(\{ \mu_j \}, \, \{ \lambda_{\alpha} \}; \{\xi_k \})   
= \langle 0 | \prod_{j=1}^{n} C(\mu_j; \{\xi_k \}) 
\prod_{\alpha=1}^{n} B(\lambda_{\alpha}; \{\xi_k \}) | 0 \rangle 
\ee
Here $\mu_{j}$ for $j=1, 2, \ldots, n$ are arbitrary. 
We note that 
$S_n(\{ \mu_j \}, \, \{ \lambda_{\alpha} \}; \{\xi_k \})$ 
has been denoted by 
$S_n^{(1)}(\{ \mu_j \}, \, \{ \lambda_{\alpha} \}; \{\xi_k \})$ 
in the last subsection. 
Then, the exact expression of the scalar product has been shown 
through the pseudo-diagonalized forms of the $B$ and $C$ operators as follows 
\cite{KMT1999}: 
\bea 
& & S_n(\{ \mu_j \}, \, \{ \lambda_{\alpha} \} ; \{ \xi_k \})  
 =  \langle 0 | \prod_{j=1}^{n} \widetilde{C}(\mu_j) 
\prod_{\alpha=1}^{n} \widetilde{B}(\lambda_{\alpha}) | 0 \rangle \non \\ 
& = & 
{\frac {\prod_{\alpha=1}^{n} \prod_{j=1}^{n} 
\sinh(\mu_j - \lambda_{\alpha})}   
{\prod_{j >k} \sinh(\mu_k- \mu_j) 
 \prod_{\alpha < \beta} \sinh(\lambda_{\beta} - \lambda_{\alpha}) } } \,    
{\rm det} T( \{ \mu_j \}, \, \{ \lambda_{\alpha} \} ; \{\xi_k \} ) 
\eea
Here the matrix elements 
$T_{ab}$ for $a, b=1, \ldots, n$, are given by 
\be 
T_{ab}= {\frac {\partial} {\partial \lambda_{\alpha} } } 
\tau(\mu_b, \{ \lambda_k \}; \{\xi_k \}) 
\ee
where 
\bea 
\tau(\mu, \{ \lambda_k \}; \{\xi_k \})  
& = & a(\mu) \prod_{k=1}^{n} b^{-1}(\lambda_k-\mu) 
+ d(\mu; \{\xi_j \}; \{\xi_k \}) \prod_{k=1}^{n} b^{-1}(\mu- \lambda_k) 
\non \\  
\eea
and 
\be 
a(\mu)=1 \, , \quad d(\mu; \{ \xi_j \})=\prod_{j=1}^{L} b(\mu-\xi_j)  \, . 
\ee

Let us express the scalar product 
of the higher-spin case in terms of 
 the determinant of the matrix $T$.  
In the tensor product of spin-$\ell/2$ representations we have 
\bea 
& & S^{(\ell)}_n(\{ \mu_j \}, \, \{ \lambda_{\alpha} \} ; \{ \zeta_k \}) 
\non \\ 
& = & 
{\frac {\prod_{\alpha=1}^{n} \prod_{j=1}^{n} \sinh(\mu_j - \lambda_{\alpha})}  
{\prod_{j >k} \sinh(\mu_k- \mu_j) 
 \prod_{\alpha < \beta} \sinh(\lambda_{\beta} - \lambda_{\alpha}) } } \,    
{\rm det} T( \{ \mu_j \}, \, \{ \lambda_{\alpha} \} ; \{ \xi_k^{(\ell)} \}) 
\eea
In the mixed-spin case we have 
\bea 
& & S^{({\bm \ell})}_n(\{ \mu_j \}, \, 
\{ \lambda_{\alpha} \} ; \{ \zeta_k \}) 
\non \\ 
& = & 
{\frac {\prod_{\alpha=1}^{n} 
\prod_{j=1}^{n} \sinh(\mu_j - \lambda_{\alpha})} 
{\prod_{j > k} \sinh(\mu_k- \mu_j) 
 \prod_{\alpha < \beta} \sinh(\lambda_{\beta} - \lambda_{\alpha}) } } \,    
{\rm det} T( \{ \mu_j \}, \, \{ \lambda_{\alpha} \} ; 
\{ \xi_k^{({\bm \ell})} \}) 
\eea

\subsection{Norms of the Bethe states in the higher-spin case}

For two sets of $n$ parameters, $\mu_1, \ldots, \mu_n$ 
and $\lambda_1, \ldots, \lambda_n$, 
we define matrix elements $H_ab$ by  
\be 
H_{ab}(\{ \mu_j \}, \{ \lambda_{\alpha}; \{ \xi_k \} )  
 = {\frac {\sinh \eta} {\sinh(\lambda_a - \mu_b)}} 
\left( {\frac {a(\mu_b)} {d(\mu_b)}} 
\prod_{k=1; \ne a}^{n}\sinh(\lambda_k - \mu_b + \eta)  
- \prod_{k=1; \ne a}^{n}\sinh(\lambda_k - \mu_b - \eta) \right) 
\ee

Let us assume that 
$\lambda_1, \ldots, \lambda_n$ are solutions of 
the Bethe ansatz equations. We have 
\be 
{\rm det} H(\{ \lambda_{\alpha} \}, \{ \mu_j \}; \{ \xi_k \})   
= {\rm det} T(\{ \lambda_{\alpha} \}, \{ \mu_j \}; \{ \xi_k \}) 
\prod_{\alpha=1}^{n} \prod_{j=1}^{n} \sinh(\mu_j- \lambda_{\alpha}) 
\left( \prod_{j=1}^{n} d(\mu_j) \right)^{-1}  
\ee 
Let us now take the limit of sending $\mu_j$ to $\lambda_j$ for each $j$. 
Then we have 
\be 
\lim_{\mu_j \rightarrow \lambda_j} 
det H(\{ \lambda_{\alpha} \}, \{ \mu_{j} \}; \{ \xi_k \} )   
= \sinh^n \eta \prod_{\beta=1}^{n} \prod_{m=1; m \ne \beta}^{n} 
\sinh(\lambda_m - \lambda_{\beta} - \eta) \cdot det \Phi^{'}
(\{ \lambda_{\alpha} \}) 
\ee
where matrix elements $\Phi^{'}_{ab}$ for 
$a, b= 1, \ldots, n$, are given by 
\be 
\Phi^{'}_{ab}(\{ \lambda_{\alpha} \}; \{ \xi \}) 
= - {\frac {\partial} {\partial \lambda_b} } 
\left( { \frac {a(\lambda_a;  \{ \xi_j \})} {d(\lambda_b;  \{ \xi_j \})} } 
\prod_{k=1; k \ne a}^{n} 
{\frac {b(\lambda_a - \lambda_k)} {b(\lambda_k - \lambda_a)} } \right)    
\ee

Suppose that  $\lambda_{\alpha}$ for $n=1, 2, \ldots, n$ are 
solutions of the Bethe ansatz equations. 
Gaudin's formula for the square of he norm of the Bethe state 
is given by 
\bea 
{\bm N}_n(\{ \lambda_{\alpha} \}; \{ \xi_j \} ) 
& = & \langle 0 | \prod_{j=1}^{n} C(\lambda_j) \prod_{j=1}^{n}B(\lambda_k)|
 0 \rangle \non \\ 
& = & \sinh^n \eta \prod_{\alpha, \beta=1; \alpha \ne \beta}^{n} 
b^{-1}(\lambda_{\alpha} - \lambda_{\beta}) \cdot {\rm det} 
\Phi^{'}({\lambda_{\alpha}}; \{ \xi_j \})  
\eea

Let us define the norm of the Bethe state for the mixed spin case of 
${\bm \ell}$ 
as follows. 
\be 
{\bm N}_n^{({\bm \ell})}(\{ \lambda_{\alpha} \}; \{ \xi^{({\bm \ell})}_k \}) 
 = \langle 0 | \prod_{j=1}^{n} C^{({\bm \ell})}(\lambda_j) 
\prod_{j=1}^{n} B^{({\bm \ell})}(\lambda_k) | 0 \rangle  
\ee
Then we have 
\be 
{\bm N}_n^{({\bm \ell})}(\{ \lambda_{\alpha} \}; \{ \xi^{({\bm \ell})}_k \}) 
= 
\sinh^n \eta \prod_{\alpha, \beta=1; \alpha \ne \beta}^{n} 
b^{-1}(\lambda_{\alpha} - \lambda_{\beta}) \cdot {\rm det} 
\Phi^{'}({\lambda_{\alpha}}; \{ \xi_j^{\bm \ell} \})  
\ee

%
%
 \setcounter{equation}{0} 
 \renewcommand{\theequation}{7.\arabic{equation}}
\section{Form factors and the inverse-scattering for the higher-spin case}

\subsection{Formulas of the quantum inverse scattering problem}

Let us briefly review the derivation of the fundamental lemma of the 
quantum-inverse scattering problem for the spin-1/2 XXZ spin chain 
\cite{MS2000}. It will thus becomes clear how 
the pseudo-diagonalization of $B$ 
and $C$ are important.

\begin{prop} Let us denote by $p_q$  sequence $p_q=(1, 2, \ldots, n)$. 
Recall the notation 
$\widetilde{ R}_{0, p_q}= F_{1 2 \cdots n} R_{0, 1 2 \cdots n} 
F^{-1}_{1 2 \cdots n}$. Then, $\widetilde{ R}_{0, p_q}$  
 is invariant under any permutation. 
 We have 
\be 
\widetilde{ R}_{0, p_q} = \widetilde{ R}_{0, \sigma(p_q)} \, , 
\quad {\rm for}  \, \,  \sigma \in {\cal S}_n \, . 
\label{eq:Rinv}
\ee
\label{prop:Rinv}
\end{prop}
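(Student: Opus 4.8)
The plan is to reduce the assertion to two structural facts about the monodromy matrix $R_{0,p_q}$ and the total $F$-matrix: that permuting the quantum spaces is realized by conjugation with the operator $R^{\sigma}_{p_q}$ of Definition \ref{df:Rp}, and that the total $F$-matrix factorizes precisely this operator.

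First I would record the reordering identity
\[
R_{0,\sigma(p_q)} = R^{\sigma}_{p_q}\, R_{0,p_q}\, \bigl(R^{\sigma}_{p_q}\bigr)^{-1}.
\]
For an adjacent transposition $\sigma=s_j$ this is immediate from the Yang--Baxter equation: relabelling $(1,2,3)\mapsto(0,j,j+1)$ gives $R_{0j}R_{0,j+1}R_{j,j+1}=R_{j,j+1}R_{0,j+1}R_{0j}$, and since $R_{j,j+1}$ commutes through every factor $R_{0k}$ with $k\neq j,j+1$ (disjoint spaces), swapping the adjacent factors $R_{0,j+1}R_{0j}$ inside $R_{0,p_q}$ is exactly conjugation by $R^{s_j}_{p_q}=R_{j,j+1}$. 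The general case follows by induction on the length of $\sigma$ in the generators $s_j$, using the recursion $R^{\sigma_A\sigma_B}_{p}=R^{\sigma_B}_{\sigma_A(p)}R^{\sigma_A}_{p}$ of Definition \ref{df:Rp}.

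Second, and this is the crux, I would establish the factorizing property
\[
F_{\sigma(p_q)}\, R^{\sigma}_{p_q} = F_{p_q}.
\]
For a single $s_j$ this localizes, through the partial-$F$ definition (\ref{eq:def-F}) and the cocycle conditions, to the two spaces $j,j+1$, where it reduces to
\[
\bigl(e^{11}_{j+1}+e^{22}_{j+1}R_{j+1,j}\bigr)\,R_{j,j+1}=e^{11}_j+e^{22}_j R_{j,j+1},
\]
a one-line consequence of the inversion relation (\ref{eq:inv-rel}), $R_{j+1,j}R_{j,j+1}=I$. Extending to arbitrary $\sigma$ is again done by induction on generator length, where the cocycle conditions and the recursion for $R^{\sigma}_{p}$ must be checked to close consistently; this bookkeeping is the main obstacle, since one has to track how the nested factors of $F_{\sigma(p_q)}$ reorganize under successive transpositions.

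Granting both facts, the proposition is one line. Writing $\widetilde R_{0,\sigma(p_q)}=F_{\sigma(p_q)}R_{0,\sigma(p_q)}F_{\sigma(p_q)}^{-1}$ and substituting the reordering identity,
\begin{align*}
\widetilde R_{0,\sigma(p_q)}
&= F_{\sigma(p_q)}\,R^{\sigma}_{p_q}\,R_{0,p_q}\,\bigl(R^{\sigma}_{p_q}\bigr)^{-1}F_{\sigma(p_q)}^{-1}\\
&= F_{p_q}\,R_{0,p_q}\,F_{p_q}^{-1}=\widetilde R_{0,p_q},
\end{align*}
where the second line uses $F_{\sigma(p_q)}R^{\sigma}_{p_q}=F_{p_q}$ and its inverse $\bigl(R^{\sigma}_{p_q}\bigr)^{-1}F_{\sigma(p_q)}^{-1}=F_{p_q}^{-1}$. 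As an independent check one may instead observe that the transformed entries $\widetilde D,\widetilde B,\widetilde C$ have the diagonal and pseudo-diagonal forms (\ref{eq:diagD}), (\ref{eq:diagB}), (\ref{eq:diagC}) of \S 5 --- and $\widetilde A$ the analogous form obtained through the $\dagger$-relation --- each of which is manifestly invariant under simultaneous relabelling of the sites $i$ together with their inhomogeneities $\xi_i$.
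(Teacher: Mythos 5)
Your proposal follows essentially the same route as the paper: Appendix~B establishes exactly your two ingredients, namely the intertwining relation $R^{\sigma}_{p}R_{0,p}=R_{0,\sigma(p)}R^{\sigma}_{p}$ (proved there for adjacent transpositions via the Yang--Baxter equation and extended by the recursion of definition~\ref{df:Rp}) and the factorization $F_{p}=F_{\sigma(p)}R^{\sigma}_{p}$ of eq.~(\ref{eq:inv-F}), and the proposition follows by the same one-line conjugation argument you give. The only cosmetic discrepancy is in your local identity for a single $s_j$, where the roles of $e^{11}$ and $e^{22}$ should match the paper's convention in (\ref{eq:def-F}); the substance is unaffected.
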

We shall show (\ref{eq:Rinv}) in Appendix B. 

The following lemma plays a central role 
in the quantum inverse-scattering problem \cite{KMT1999}.   
\begin{lem} For arbitrary inhomogeneous parameters 
$\xi_1, \xi_2, \ldots, \xi_L$ we have  
\be
x_i = \prod_{\alpha=1}^{i-1} (A+D)(\xi_{\alpha}) \,  
{\rm tr}_0(x_0 R_{0, 1 \cdots L}(\xi_i)) \, 
\prod_{\alpha=i+1}^{L} (A+D)(\xi_{\alpha})  
\label{eq:QISP}
\ee
\label{lem:QISP}
\end{lem}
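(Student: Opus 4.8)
The plan is to realize both products of transfer matrices together with the central trace as a single trace over $L$ independent auxiliary spaces, and then to collapse that trace using the fact that each $L$-operator degenerates to a permutation at coinciding rapidities. Recall that $t(u)=(A+D)(u)={\rm tr}_0 T_0(u)$, and that by (\ref{R-matrix-elements})--(\ref{bc}) one has $b(0)=0$ and $c(0)=1$, so $R_{0m}(\xi_m,\xi_m)=R(0)=\Pi_{0m}$. I would introduce $L$ distinct auxiliary spaces $a_1,\dots,a_L$, writing $t(\xi_\alpha)={\rm tr}_{a_\alpha}T_{a_\alpha}(\xi_\alpha)$ for $\alpha\neq i$ and identifying $a_i$ with the label $0$ of the central factor $\mathrm{tr}_0(x_0 T_0(\xi_i))$. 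Since transfer matrices at different rapidities commute (a standard consequence of (\ref{eq:RTT=TTR})), and since $x_{a_i}$ acts only on the auxiliary space $a_i$, whose support is disjoint from that of every $T_{a_\beta}(\xi_\beta)$ with $\beta\neq i$, one may slide $x_{a_i}$ to the far left, so that the right-hand side of (\ref{eq:QISP}) becomes
\be
{\rm tr}_{a_1\cdots a_L}\Big[\, x_{a_i}\, T_{a_1}(\xi_1)\,T_{a_2}(\xi_2)\cdots T_{a_L}(\xi_L)\,\Big]\,.
\ee

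The heart of the argument is the identity
\be
T_{a_1}(\xi_1)\,T_{a_2}(\xi_2)\cdots T_{a_L}(\xi_L)=\Pi_{a_1,1}\,\Pi_{a_2,2}\cdots\Pi_{a_L,L}\,,
\label{eq:collapse}
\ee
which states that the ordered product of monodromy matrices evaluated at the inhomogeneities is the permutation operator interchanging each quantum site $j$ with its auxiliary copy $a_j$. I would prove (\ref{eq:collapse}) by induction on $L$. The base case $L=1$ is exactly $T_{a_1}(\xi_1)=R_{a_1,1}(0)=\Pi_{a_1,1}$. For the inductive step one peels off site $L$: writing $T_{a_\alpha}(\xi_\alpha)=R_{a_\alpha,L}(\xi_\alpha-\xi_L)\tilde T_{a_\alpha}(\xi_\alpha)$ for $\alpha<L$ and $T_{a_L}(\xi_L)=\Pi_{a_L,L}\,\tilde T_{a_L}(\xi_L)$, where $\tilde T$ denotes the truncated monodromy over sites $1,\dots,L-1$, one moves every factor acting on site $L$ to the left, which is legitimate because these commute with all the $\tilde T$'s, whose supports avoid site $L$. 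The accumulated site-$L$ operators must then be shown to cancel against the truncated product by repeated use of the Yang-Baxter equation together with the inversion relation (\ref{eq:inv-rel}) — exactly as in the $L=2$ check, where one verifies $R_{a_1,2}(\xi_1-\xi_2)\,\Pi_{a_1,1}\Pi_{a_2,2}\,R_{a_2,1}(\xi_2-\xi_1)=\Pi_{a_1,1}\Pi_{a_2,2}$ by relabelling the auxiliary indices through the permutations and then invoking unitarity. This combinatorial collapse is the \emph{main obstacle}; it is precisely the place where the $\check{R}$-presentation (\ref{eq:product-Rcheck}) of the monodromy matrix in terms of permutations and braid relations provides the most economical bookkeeping.

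Granting (\ref{eq:collapse}), the conclusion is immediate. The swaps $\Pi_{a_j,j}$ commute (disjoint supports), so the trace factorizes over the auxiliary spaces. For $j\neq i$ one has ${\rm tr}_{a_j}\Pi_{a_j,j}=I_j$, while for the distinguished space ${\rm tr}_{a_i}\big(x_{a_i}\Pi_{a_i,i}\big)=x_i$, namely the local operator $x$ transported to site $i$. Collecting these factors leaves $x_i\otimes\bigotimes_{j\neq i}I_j=x_i$, which establishes (\ref{eq:QISP}). I would emphasize that the whole derivation is purely algebraic, relying only on $R(0)=\Pi$, the Yang-Baxter equation, and the inversion relation, so that it holds for arbitrary inhomogeneous parameters $\xi_1,\dots,\xi_L$ and requires no input from the Bethe-ansatz equations.
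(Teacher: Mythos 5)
Your argument is correct, but it is genuinely different from the one in the paper. The paper proves Lemma \ref{lem:QISP} entirely inside the $F$-basis: it conjugates ${\rm tr}_0(x_0 R_{0,1\cdots L}(\xi_i))$ by the total $F_{1\cdots L}$, invokes the permutation invariance $\widetilde{R}_{0,1 2\cdots L}=\widetilde{R}_{i,i+1\cdots L 1\cdots i-1}$ of Proposition \ref{prop:Rinv}, and then identifies the resulting propagator $F^{-1}_{i\cdots L\,1\cdots i-1}F_{1\cdots L}$ with $\prod_{\alpha=1}^{i-1}(A+D)(\xi_\alpha)$ via the last lemma of Appendix B. You instead follow the Maillet--Terras route (Ref.\ \cite{MT2000}): realize each $(A+D)(\xi_\alpha)$ as a trace over its own auxiliary space, collapse $T_{a_1}(\xi_1)\cdots T_{a_L}(\xi_L)$ to $\Pi_{a_1,1}\cdots\Pi_{a_L,L}$ using $R(0)=\Pi$, unitarity and disjoint-support commutativity, and then factorize the big trace. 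Both are valid; your version is more elementary and self-contained (it needs none of the $F$-basis machinery of \S 5 and Appendices B and D), whereas the paper's version comes essentially for free once the $F$-basis apparatus has been built, and makes transparent why the pseudo-diagonalization of $B$ and $C$ and the inverse-scattering formula live in the same framework. Two small remarks: the inductive step for your collapse identity, which you rightly flag as the main obstacle, is only sketched --- you should note that after pushing all permutations to the right the residual $R$-factors cancel in nested pairs $R_{jk}R_{kj}=I$ using only commutativity of operators with disjoint supports, so the bookkeeping closes; and the commutativity of transfer matrices that you invoke is not actually needed to slide $x_{a_i}$ to the left, disjointness of supports already suffices.
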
 
\begin{proof} 
For any operator $x_0$ defined on the auxiliary space, we have 
\bea 
{\rm tr}_0(x_0 R_{0, 1 2 \cdots L}(\lambda=\xi_i)) & = &  
F_{1 2 \cdots L}^{-1} 
{\rm tr}_0(x_0 \widetilde{R}_{0, 1 2 \cdots L}(\lambda=\xi_i)) 
F_{1 2 \cdots L} \non \\ 
& = &  
F_{1 2 \cdots L}^{-1} 
{\rm tr}_0(x_0 \widetilde{R}_{i, i+1 \cdots L 1 \cdots i-1}(\lambda=\xi_i)) 
F_{1 2 \cdots L} 
\non \\ 
& = &  
F_{1 2 \cdots L}^{-1} F_{i \cdots L 1 \cdots i-1} \,  
{\rm tr}_0(x_0 R_{i, i+1 \cdots L 1 \cdots i-1}(\lambda=\xi_i)) \, 
F_{i \cdots L 1 \cdots i-1}^{-1} F_{1 2 \cdots L}  
\non \\ 
& = &  
\left( F_{i \cdots L 1 \cdots i-1}^{-1} F_{1 2 \cdots L} \right)^{-1} 
x_i (A(\xi_i) + D(\xi_i)) \, 
\left( F_{i \cdots L 1 \cdots i-1}^{-1} F_{1 2 \cdots L} \right) 
\non  
\eea
Here we have used (\ref{eq:Rinv}), i.e. 
$\widetilde{R}_{0, 1 2 \cdots L}
= \widetilde{R}_{i, i+1 \cdots L 1 \cdots i-1}$. 
 From  the expression 
of $F^{-1}_{i \cdots L \, 1 \cdots i-1} F_{1 \cdots L}$
we now have   
\be 
{\rm tr}_0(x_0 R_{0, 1 \cdots L}) = \prod_{\alpha=1}^{i-1}  
\left( (A+D)(\xi_{\alpha}) \right)^{-1} 
\cdot x_i \cdot \prod_{\alpha=1}^{i} (A+D)(\xi_{\alpha})  \,  .  
\ee
\end{proof}

\subsection{Quantum inverse-scattering problem 
for the higher-spin operators}

Let us consider monodromy matrix $T^{+}_{0, 1 \cdots \ell N_{s}}$. 
Here we recall $L=\ell N_{s}$. For simplicity, 
we shall suppress the superscript `$+$' 
for $A, B, C$ and $D$ operators through this subsection.

We recall the following: 
$\Delta^{(n-1)}(K) =  K^{\otimes n}$ and 
\bea
\Delta^{(n-1)}(X^{+}) & = & \sum_{j=1}^{n} 
K^{\otimes (j-1)} \otimes X^{+}_j \otimes 
I^{\otimes (n-j)} \, , \non \\  
\Delta^{(n-1)}(X^{-}) & = & \sum_{j=1}^{n} 
I^{\otimes (j-1)} \otimes X^{-}_j \otimes 
\left( K^{-1} \right)^{\otimes (n-j)} \, . 
\eea
It is useful to note that for $i=1, 2, \ldots, \ell N_{s}$ we have 
\be 
K_i = \prod_{\alpha=1}^{i-1} (A+D)(\xi_{\alpha}) (q A + q^{-1} D)(\xi_{i}) 
\prod_{\alpha=i+1}^{\ell N_{s}} (A+D)(\xi_{\alpha}) \, . 
\ee

In the tensor product of $2 N_{s}$ spin-1/2 representations, 
$\left( V_{1}^{(1)} \otimes V_{2}^{(1)} \right)^{\otimes N_{s}}$,  
we have 
\bea
\Delta_{12}(X^{-}) & = & 
\left( X^{-}_1 \otimes K^{-1}_2 + I_1 \otimes X^{-}_2 \right) 
\otimes I^{\otimes 2(N_s-1)}   
\non \\ 
& = & 
\Bigg\{ \left( B(\xi_1) \cdot 
\prod_{\alpha=2}^{2 N_{s}} (A+D)(\xi_{\alpha}) \right) \cdot 
\bigg( (A+D)(\xi_1) (q^{-1} A + q D)(\xi_2) 
\prod_{\alpha=3}^{2 N_{s}} (A+D)(\xi_{\alpha}) \bigg) \non \\
& & \quad + (A+D)(\xi_1) \cdot B(\xi_2) \cdot 
\prod_{\alpha=3}^{2 N_{s}}(A+D)(\xi_{\alpha}) \Bigg\}  
\eea

In the $\ell$th tensor product of spin-1/2 representations,  
$V_{1}^{(1)} \otimes \cdots \otimes V_{\ell}^{(1)}$, 
we have 
\be 
 P_{1 \cdots \ell}^{(\ell)} \,  
\cdot \, \Delta^{(\ell-1)}(X^{\pm}) 
\, \cdot \, P_{1 \cdots \ell}^{(\ell)} 
= 
P_{1 \cdots \ell}^{(\ell)} \,  
\cdot \,  \Delta^{(\ell-1)}(X^{\pm}) 
=
\Delta^{(\ell-1)}(X^{\pm}) 
\, \cdot \, P_{1 \cdots \ell}^{(\ell)}
\ee
In the tensor product of spin-$\ell/2$ representations, 
$V_{1}^{(\ell)} \otimes \cdots \otimes V_{N_s}^{(\ell)}$,  
we have for $i=1, 2, \ldots, N_s$ the following relations:  
\bea 
& & X_i^{- (\ell+)} = 
\Delta^{(\ell-1)}_{(i-1)\ell +1 \, \cdots i \ell } (X^{-}) 
\non \\
& & \quad 
=  \sum_{k=1}^{\ell} \prod_{\alpha=1}^{(i-1)\ell+k-1} 
(A+D)(\xi_{\alpha}) \cdot B(\xi_{(i-1)\ell + k}) \cdot 
\prod_{\alpha=(i-1)\ell+k+1}^{\ell N_{s}} (A+D)(\xi_{\alpha})  \non \\  
& & \quad 
\times \prod_{j=k+1}^{\ell}
%
%
\prod_{\alpha=1}^{(i-1)\ell+j-1} (A+D)(\xi_{\alpha}) \cdot 
(q^{-1} A + q D)(\xi_{(i-1)\ell+j}) \cdot 
\prod_{\alpha=(i-1)\ell + j+1}^{\ell N_{s}} (A+D)(\xi_{\alpha})    
\non \\
& & 
=  \sum_{k=1}^{\ell} \prod_{\alpha=1}^{(i-1)\ell+k-1} 
(A+D)(\xi_{\alpha}) \cdot B(\xi_{(i-1)\ell + k}) \cdot 
 \non \\  
& & \quad \times \prod_{j=k+1}^{\ell}
(q^{-1} A + q D)(\xi_{(i-1)\ell+j}) \cdot 
\prod_{\alpha=i \ell+1}^{\ell N_{s}} (A+D)(\xi_{\alpha}) \, .   
\label{eq:X-(ell+)}
\eea
Here we have made use of the following: 
\be 
\prod_{\alpha=1}^{\ell N_s} (A+D)(\xi_{\alpha}) = I^{\otimes \ell N_s} \, . 
\ee
Similarly, we can express $X_i^{+ (\ell+)}$ and $K_i^{(\ell+)}$ as
follows.   
\bea 
& & X_i^{+ (\ell+)} = 
\Delta^{(\ell-1)}_{(i-1) \ell +1 \, \cdots i \ell } (X^{+}) \non \\ 
%
%
& &  = \sum_{k=1}^{\ell} 
\prod_{\alpha=1}^{(i-1)\ell} (A+D)(\xi_{\alpha}) \cdot 
\prod_{j=1}^{k-1} (q A + q^{-1} D)(\xi_{(i-1)\ell+j})  
 \cdot C(\xi_{(i-1) \ell +k })  \non \\ 
& & \quad 
\prod_{\alpha=(i-1)\ell + k+1}^{\ell N_{s}} (A+D)(\xi_{\alpha})  \, .   
\label{eq:X+(ell+)}
\eea

\bea
& & K_i^{(\ell+)} = \Delta^{(\ell-1)}_{(i-1) \ell +1 \, \cdots i \ell }(K) 
\non \\
%
& & = \prod_{\alpha=1}^{(i-1) \ell} 
(A+D)(\xi_{\alpha}) \, \prod_{j=1}^{\ell} 
(q A + q^{-1} D)(\xi_{(i-1)\ell +j}) \,  
\prod_{\alpha=i \ell+1}^{\ell N_{s}} (A+D)(\xi_{\alpha}) \, . 
  \non \\   
\label{eq:K(ell+)}
\eea

%
%
\subsection{Useful formulas in the higher-spin case}

Let us denote by $X^{\pm (\ell)}$ the matrix representations 
of generators $X^{\pm}$ in the spin-$\ell/2$ representation of 
$U_q(sl_2)$. Here we recall that the matrix representations of 
$X^{\pm (\ell)}$ are obtained by calculating 
the action of $\Delta^{(\ell-1)}(X^{\pm})$ 
on the basis $\{ || \ell, n \rangle \}$.

We explicitly calculate the actions of 
$\sigma_1^{-}= \sigma^{-} \otimes I^{\otimes (\ell-1)}$ 
and $\sigma_{\ell}^{+}= I^{\otimes (\ell-1)} \otimes \sigma^{+}$  
on the basis $\{ || \ell, n \rangle \}$  
in the spin-$\ell/2$ representation.    
Multiplying projection operators to them, 
we obtain the following formulas:   
\bea 
P^{(\ell)}_{1 \cdots \ell} \, \sigma_{1}^{-} \, P^{(\ell)}_{1 \cdots \ell} 
& = & {\frac 1 {[\ell]_q}} \, X^{- (\ell+)} \non \\ 
P^{(\ell)}_{1 \cdots \ell} \, \sigma_{\ell}^{+} \, P^{(\ell)}_{1 \cdots \ell} 
& = & {\frac 1 {[\ell]_q}} \, X^{-(\ell+)}  \label{eq:Xpm}
\eea
Therefore, we have for $i=1, 2, \ldots, N_{s}$,  
the following formulas:  
\bea 
& & P^{( \ell )}_{(i-1) \ell + 1} \, X_{i}^{- (\ell+)} \, 
 P^{(\ell)}_{(i-1) \ell + 1} \non \\ 
& & =  
{[} \ell {]}_q  P^{(\ell)}_{(i-1)\ell+1} \, 
\prod_{\alpha=1}^{(i-1)\ell} (A^{+}+D^{+})(\xi_{\alpha})  \cdot  
B^{+}(\xi_{(i-1) \ell +1})   
\cdot  \prod_{\alpha=(i-1)\ell+2}^{\ell N_{s}} (A^{+}+D^{+})(\xi_{\alpha}) \, 
P^{(\ell)}_{(i-1)\ell+1} \non \\
\label{eq:PB}  \\  
& & P^{(\ell)}_{(i-1)\ell+1} \, X_{i}^{+ (\ell+)} \, 
P^{(\ell)}_{(i-1)\ell+1} \non \\ 
& & =  
[\ell]_q P^{(\ell)}_{(i-1)\ell+1} \, 
\prod_{\alpha=1}^{i \ell-2}(A^{+}+D^{+})(\xi_{\alpha}) \cdot 
C^{+}(\xi_{i\ell -1}) \cdot   
\prod_{\alpha= i \ell}^{\ell N_{s}}(A^{+}+D^{+})(\xi_{\alpha}) \, 
P^{(\ell)}_{(i-1)\ell+1} \label{eq:PC} \, . 
\eea
Taking advantage of projection operators, we thus have shown that 
the summation over $k$ arising from  the $(\ell-1)$th 
comultiplication operation can be calculated by a single term. 
This reduces the calculational task very much.

In the derivation of (\ref{eq:PB}), we first note      
\be 
\chi_{1 \cdots L} \sigma_{(i-1)\ell+1}^{-}  \chi_{1 \cdots L}^{-1} 
\exp( - \xi_{(i-1)\ell+1}) \, ,    
\ee
and then we show the following:  
\be 
\sigma_{(i-1)\ell+1}^{-}
= \prod_{\alpha=1}^{(i-1)\ell} (A^{+}+ D^{+})(\xi_{\alpha}) \, 
B^{+}(\xi_{(i-1)\ell+1}) \, \prod_{\alpha=(i-1)\ell+2}^{\ell N_s} 
(A^{+}+D^{+})(\xi_{\alpha}) \, .  
\ee
We shall show relations (\ref{eq:Xpm}) in Appendix C.

We now introduce useful formulas expressing any given operator 
in the spin-$\ell$ representation. 
Let us take two sets of integers $i_1, \ldots, i_m$ and 
$j_1, \ldots, j_n$ satisfying 
$1 \le i_1 < \cdots < i_m \le \ell$ and 
$1 \le j_1 < \cdots < j_n \le \ell$, respectively. 
We can show the following: 
\bea  
 || \ell, m \rangle \langle \ell, n || 
& = & 
\left[
\begin{array}{c} 
\ell \\
m 
\end{array} 
 \right]_q \, 
q^{m(m+1)/2 - n(n-1)/2 + n \ell 
- (i_1 + \cdots + i_m + j_1 + \cdots + j_n)} \non \\ 
& & \times \, 
P_{1 \cdots \ell}^{(\ell)} 
\left( \prod_{k=1}^{m} e_{i_k}^{21}
 \cdot \prod_{p=1;p \ne i_k, j_q}^{\ell} 
 e_{p}^{22} \cdot 
\prod_{q=1}^{n} e_{j_q}^{12} \right) 
P_{1 \cdots \ell}^{(\ell)}  
\label{eq:ellmn}
\eea    
Setting $i_1=1, i_2=2$, \ldots, $i_m=m$ and 
$j_1=1, j_2=2$, \ldots, $j_n=n$, 
we have for $m > n$ 
\be  
 || \ell, m \rangle \langle \ell, n || 
= 
\left[
\begin{array}{c} 
\ell \\
n 
\end{array} 
 \right]_q \, 
q^{(\ell-n)} \, 
P_{1 \cdots \ell}^{(\ell)} 
\left( \prod_{k=1}^{n} e_{k}^{22} \, \prod_{k=n+1}^{m} e_k^{21} \, 
\prod_{k=m+1}^{\ell} e_k^{11} \right) P_{1 \cdots \ell}^{(\ell)}  
\label{eq:m>n}
\ee    
and for $m <n $ 
\be  
 || \ell, m \rangle \langle \ell, n || 
= 
\left[
\begin{array}{c} 
\ell \\
n 
\end{array} 
 \right]_q \, 
q^{(\ell-n)} \, 
P_{1 \cdots \ell}^{(\ell)} 
\left( \prod_{k=1}^{m} e_{k}^{22} \, \prod_{k=m+1}^{n} e_k^{12} \, 
\prod_{k=n+1}^{\ell} e_k^{11} \right) P_{1 \cdots \ell}^{(\ell)}  
\label{eq:m<n}
\ee    
and for $m=n$ 
\be  
 || \ell, n \rangle \langle \ell, n || 
= 
\left[
\begin{array}{c} 
\ell \\
n 
\end{array} 
 \right]_q \, 
q^{(\ell-n)} \, 
P_{1 \cdots \ell}^{(\ell)} 
\left( \prod_{k=1}^{n} e_{k}^{22}  \, 
\prod_{k=n+1}^{\ell} e_k^{11} \right) 
P_{1 \cdots \ell}^{(\ell)}  
\label{eq:m=n}
\ee    
Let us denote by $E^{mn (\ell+)}$ the unit matrices 
acting on spin-$\ell$ representation $V^{(\ell)}$ 
for $m, n= 0, 1, \ldots, \ell$. 
We now define $E_i^{mn (\ell+)}$ by the 
unit matrices acting on the $i$th component 
of the tensor product $(V^{(\ell)})^{\otimes N_s}$.  
Explicitly, we have 
\be 
 E_{i}^{mn \, (\ell+)} = (I^{(\ell)})^{\otimes (i-1)}  \otimes E^{mn} 
 \otimes (I^{(\ell)})^{\otimes (N_s-i)}  
\ee
where $I^{(\ell)}$ denotes the $(\ell+1) \times (\ell+1)$ identity matrix.   
Then, we derive the following formulas 
from (\ref{eq:m>n}), (\ref{eq:m<n}) and 
(\ref{eq:m=n}), respectively.  
For $m > n$ we have 
\bea  
 E_{i}^{mn \, (\ell+)} 
& = & \left[
\begin{array}{c} 
\ell \\
m 
\end{array} 
 \right]_q \, 
 q^{n(\ell-n)} \, 
P^{(\ell)}_{1 \cdots L} 
\,  
\prod_{\alpha=1}^{(i-1)\ell} (A+D)(\xi_{\alpha}) 
\prod_{k=1}^{n} D(\xi_{(i-1)\ell+k} ) 
\prod_{k=n+1}^{m} B(\xi_{(i-1)\ell+k}) 
\non \\ 
& & \quad \times \, \prod_{k=m+1}^{\ell} A(\xi_{(i-1)\ell+k})  
\prod_{\alpha=i \ell +1}^{\ell N_s} (A+D)(\xi_{\alpha}) \, \,   
P^{(\ell)}_{1 \cdots L}
\, .  \label{eq:Em>n}
\eea
For $m < n$ we have 
\bea  
 E_{i}^{mn \, (\ell+)} 
& = & \left[
\begin{array}{c} 
\ell \\
m 
\end{array} 
 \right]_q \, 
 q^{n(\ell-n)} \, 
P^{(\ell)}_{1 \cdots L} 
\,  
\prod_{\alpha=1}^{(i-1)\ell} (A+D)(\xi_{\alpha}) 
\prod_{k=1}^{m} D(\xi_{(i-1)\ell+k} ) 
\prod_{k=m+1}^{n} C(\xi_{(i-1)\ell+k}) 
\non \\ 
& & \quad \times \, \prod_{k=n+1}^{\ell} A(\xi_{(i-1)\ell+k})  
\prod_{\alpha=i \ell +1}^{\ell N_s} (A+D)(\xi_{\alpha}) \, \,   
P^{(\ell)}_{1 \cdots L}
\, .  \label{eq:Em<n}
\eea
For $m = n$ we have 
\bea  
 E_{i}^{nn \, (\ell+)} 
& = & \left[
\begin{array}{c} 
\ell \\
n 
\end{array} 
 \right]_q \, 
 q^{n(\ell-n)} \, 
P^{(\ell)}_{1 \cdots L} 
\,  
\prod_{\alpha=1}^{(i-1)\ell} (A+D)(\xi_{\alpha}) 
\prod_{k=1}^{n} D(\xi_{(i-1)\ell+k} ) 
\non \\ 
& & \quad \times \, \prod_{k=n+1}^{\ell} A(\xi_{(i-1)\ell+k})  
\prod_{\alpha=i \ell +1}^{\ell N_s} (A+D)(\xi_{\alpha}) \, \,   
P^{(\ell)}_{1 \cdots L}
\, .  \label{eq:Em=n}
\eea

Let us now discuss the derivation of formula (\ref{eq:ellmn}). 
It is easy to show the following:   
\be 
\sigma_{i_1}^{-} \cdots  \sigma_{i_m}^{-} | 0 \rangle  
\langle 0 | \sigma_{j_1}^{+} \cdots  \sigma_{j_n}^{+} 
=
 e_{i_1}^{21} \cdots e_{i_n}^{21} 
\prod_{p=1; p \ne i_k, j_q}^{\ell} e_{p}^{11} 
e_{i_1}^{12} \cdots e_{i_n}^{12}  
\ee
Then, making use of expressions 
(\ref{eq:|ell,n>}) and (\ref{eq:<ell,n|}), 
we obtain (\ref{eq:ellmn}).

\subsection{Form factors for higher-spin operators }

Making use  of the fundamental lemma of the quantum inverse-scattering 
problem, lemma \ref{lem:QISP}, together with the useful formulas 
given in \S 7.3 such as 
(\ref{eq:PB}) and (\ref{eq:PC}), and (\ref{eq:Em>n}), (\ref{eq:Em<n}) 
and (\ref{eq:Em=n}), 
we can systematically calculate form factors for the higher-spin cases.   
Here we note that the form factors associated with generators 
$S^{\pm}$ of the spin $SU(2)$ 
have been derived  for the higher-spin XXX chains \cite{Castro-Alvaredo}. 
They are derived through the relations  
corresponding to (\ref{eq:X-(ell+)}) and (\ref{eq:X+(ell+)}) 
in the limit of $q=1$.

For an illustration, let us calculate the following form factor: 
\be 
F_n^{- (\ell+)}(i; \, \{ \mu_j \}, \{ \lambda_k \} ) 
= \langle 0 | \prod_{j=1}^{n+1} C^{(\ell+)}(\mu_j) \cdot 
X_i^{- (\ell+)} \prod_{k=1}^{n} B^{(\ell+)}(\lambda_k) | 0 \rangle \, ,  
\label{eq:F-}
\ee
where $\{\mu_j \}$ and $\{ \lambda_k \}$ are solutions of the Bethe 
ansatz equations. 
Putting (\ref{eq:PB}) into (\ref{eq:F-}) and making use of the fact 
that projector $P_{1 \cdots L}^{(\ell)}$ commutes with 
the matrix elements of $R_{0, 1 \cdots L}^{+}$,  
we have  
\bea
&  & \langle 0 | \prod_{j=1}^{n+1} C^{+}(\mu_j) \cdot 
X_i^{- (\ell+)} \prod_{k=1}^{n} B^{+}(\lambda_k) | 0 \rangle
\non \\ 
& = & [\ell]_q \, \langle 0 | \prod_{j=1}^{n+1} C^{+}(\mu_j) \cdot 
P_{(i-1)\ell+1}^{(\ell)} 
\prod_{\alpha=1}^{(i-1)\ell} (A^{+} +D^{+})(\xi_{\alpha}) \cdot   
B^{+}(\xi_{(i-1)\ell +1}) \cdot 
\non \\ 
& & \quad \times 
\prod_{\alpha=(i-1)\ell+2}^{\ell N_s} (A^{+} +D^{+})(\xi_{\alpha}) 
P_{(i-1)\ell+1}^{(\ell)}
 \prod_{k=1}^{n} B^{+}(\lambda_k) | 0 \rangle  
\non \\ 
& =  & [\ell]_q \, e^{\sum_{j=1}^{n+1} \mu_j 
- \sum_{k=1}^{n} \lambda_k - \xi_{(i-1)\ell+1} } 
{\frac {\phi_{(i-1)\ell}(\{ \mu_j \}_{n+1})} 
       {\phi_{(i-1)\ell+1}(\{ \lambda \}_n)}}  
\langle 0 | \prod_{j=1}^{n+1} C(\mu_j) \cdot 
B(\xi_{(i-1)\ell +1}) \prod_{k=1}^{n} B(\lambda_k) | 0 \rangle \non \\ 
& =  & [\ell]_q \, e^{\sum_{j=1}^{n+1} \mu_j 
- \sum_{k=1}^{n} \lambda_k - \xi_{(i-1)\ell+1} } 
{\frac {\phi_{(i-1)\ell}(\{ \mu_j \}_{n+1})} 
       {\phi_{(i-1)\ell+1}(\{ \lambda \}_n)}}  \non \\
& & \quad  \times 
S_{n+1}\left(\{ \mu_j \}_{n+1}, 
\{ \xi_{(i-1)\ell+1}, \lambda_1, \ldots, \lambda_n \}; \{ \xi_k^{(\ell)}
 \} \right)    
\eea

Let us define the form factor in the symmetric case as follows.  
\be 
F_n^{- (\ell)}(i; \, \{ \mu_j \}, \{ \lambda_k \} ) 
= \langle 0 | \prod_{j=1}^{n+1} C(\mu_j) \cdot 
\left( X_i^{- (\ell+)} \right)^{\bar \chi} 
\prod_{k=1}^{n} B(\lambda_k) | 0 \rangle  
\ee
Here we recall that $\{ \mu_j \}$ and $\{ \lambda_k \}$ are 
solutions of the Bethe ansatz equations.  
Then, we obtain the following expression. 
\bea 
& & F_n^{- (\ell)}(i; \, \{ \mu_j \}, \{ \lambda_k \} ) 
 =  [\ell]_q \, {\frac {\phi_{(i-1)\ell}(\{ \mu_j \}_{n+1})} 
      {\phi_{(i-1)\ell+1}(\{ \lambda \}_n)}}  \non \\
& & \qquad  \times 
S_{n+1}\left(\{ \mu_j \}_{n+1}, 
\{ \xi_{(i-1)\ell+1}, \lambda_1, \ldots, \lambda_n \}; \{ \xi_k^{(\ell)}
 \} \right)   \, .  
\eea

Let us define the form factor 
for $K$ by    
\be 
F^{K (\ell+)}_n(i,  \, \{ \mu_j \}, \{ \lambda_k \} )
= \langle 0 | \prod_{j=1}^{n} C^{(\ell+)} (\mu_j) \cdot 
\left( K_i^{(\ell+)} \right)^{\bar \chi} 
\prod_{k=1}^{n} B^{(\ell+)}(\lambda_k) | 0 \rangle \, .  
\ee
Here we recall that $\{ \mu_j \}$ and $\{ \lambda_k \}$ are 
solutions of the Bethe ansatz equations. Similarly, we 
can calculate as follows. 
\bea 
& & F^{K (\ell+)}_n(i,  \, \{ \mu_j \}, \{ \lambda_k \} ) 
 =  \exp\left({\sum_{j=1}^{n} \mu_j - \sum_{k=1}^{n} \lambda_k }\right)  
{\frac {\phi_{(i-1)\ell}(\{ \mu_j \}_{n})} 
       {\phi_{i\ell}(\{ \lambda \}_n)}}  \non \\
& & \quad  \times 
\sum_{n=0}^{\ell} 
\left[
\begin{array}{c} 
\ell \\ 
n 
\end{array}
\right]_q \, q^{\ell - 2n + n(\ell- n)}  
\langle 0 | \prod_{j=1}^{n} C(\mu_j) 
\prod_{k=1}^{n} D(\xi_{(i-1)\ell+k}) \prod_{k=n+1}^{\ell} 
A(\xi_{(i-1)\ell+k})  
 \prod_{k=1}^{n} B(\lambda_k) | 0 \rangle \non \\ 
\eea
Through the commutation relation between $A$ and $B$ and 
that between $C$ and $D$, the vacuum expectation 
of the product of $C$, $D$, $A$ and $B$ operators 
can be expressed in terms of the sum of scalar products.

\appendix
%
%
 \setcounter{equation}{0} 
 \renewcommand{\theequation}{A.\arabic{equation}}
\setcounter{df}{0} 

\section{Derivation of symmetry relations for $R_p^{\sigma}$ }

\begin{lem2} Let $p$ be a sequence of $n$ integers, $1, 2, \ldots, n$.  
For any $\sigma_A, \sigma_B \in {\cal S}_n$  we have   
\be 
(\sigma_A \sigma_B) \, p = \sigma_B (\sigma_A p) \, . 
\ee
\label{lem:ABp}
\end{lem2}
\begin{proof} 
Let us denote $p_{\sigma_A i}$ by $q_i$ for $i=1, 2, \ldots, n$. 
We thus have 
\bea
\sigma_B (q_1, \ldots, q_n) & = & (q_{\sigma_B 1}, \ldots, q_{\sigma_B n}) 
\non \\ 
& = & (p_{\sigma_A (\sigma_B 1)}, \ldots, p_{\sigma_A (\sigma_B n)}) 
\non \\ 
& = & (p_{\sigma_A \sigma_B \, 1}, \ldots, p_{\sigma_A \sigma_B \, n}) 
\non \\
& = & (\sigma_A \sigma_B) \, p 
\eea
\end{proof}

\begin{prop2}
Definition \ref{df:Rp} is well defined. That is,   
for $R_{j,k}=R_{j,k}(\lambda_j, \lambda_k)$ the following relations hold:  
\bea 
 R_p^{(\sigma_A \sigma_B) \sigma_C} & = & 
R_p^{\sigma_A (\sigma_B \sigma_C)} 
\label{eq:cond1} \\ 
R_p^{s_j^2} & = & R_p^{e} =1 
\quad {\mbox for} \, j=1, 2, \ldots, n \, ,  \label{eq:cond2} \\ 
R_p^{s_i s_{i+1} s_i} & = & R_p^{s_{i+1} s_{i} s_{i+1}} 
\quad {\mbox for} \, i=1, 2, \ldots, n-1 \, . \label{eq:cond3} 
\eea
\end{prop2} 
\begin{proof} 
We recall that in terms of generators $s_j$  
the defining relations of the symmetric group ${\cal S}_n$ are given by 
$s_i s_{i+1} s_i = s_{i+1} s_i s_{i+1}$ for $i=1, 2, \ldots, n-1$ and  
$s_j^2 = 1$ for $j=1, 2, \ldots, n$ \cite{Magnus}. 
It thus follows that definition \ref{df:Rp} is well defined if and only if 
conditions 
(\ref{eq:cond1}), (\ref{eq:cond2}) and (\ref{eq:cond3}) hold. 
We now show them. Lemma \ref{lem:ABp} leads to conditions (\ref{eq:cond1}).  
Conditions (\ref{eq:cond2}) are derived from the inversion relations 
(unitarity conditions) (\ref{eq:inv-rel}). Finally we show 
conditions  (\ref{eq:cond3}) by the Yang-Baxter equations.  
\end{proof}

\begin{lem2} Let  $p$ be a  sequence of integers $1, 2, \ldots, n$, 
and $R_{j,k}$ denote $R_{j,k}(\lambda_j, \lambda_k)$ for 
$j, k = 1, 2, \ldots, n$. 
For  $\sigma_c=(1 2 \cdots n)$ we have 
\be 
R^{\sigma_c}_{p} = R_{p_1, p_2 \cdots p_n} \, . 
\ee
\label{lem:p-cyclic}
\end{lem2} 
\begin{proof}
Noting  
$(1 2 \cdots n) = (1 2)(2 3) \cdots (n-1 \, n) 
= s_1 s_2 \cdots s_{n-1}$. 
we have 
\bea 
R_p^{\sigma_c} & = & R_p^{s_1 s_2 \cdots s_{n-1}} 
=R_{s_1 p}^{s_2 \cdots s_{n-1}} R_p^{s_1} \non  \\ 
& = & R_{s_2 (s_1 p)}^{s_3 \cdots s_{n-1}} R_{s_1 p}^{s_2}  R_p^{s_1}
=R_{(s_1 s_2) p}^{s_3 \cdots s_{n-1}} R_{s_1 p}^{s_2}  R_p^{s_1} \non \\
& =& R_{(s_1 \cdots s_{n-2}) p}^{s_{n-1}} 
\cdots R_{(s_1 s_2) p}^{s_3} R_{s_1 p}^{s_2}  R_p^{s_1} \non \\ 
& =& R_{(1 2 \cdots n-1) p}^{s_{n-1}} 
\cdots R_{(123) p}^{s_3} R_{(12) p}^{s_2}  R_p^{s_1} \non \\ 
& = & R_{p_1, p_n} \cdots R_{p_1, p_3} R_{p_1, p_2} \non \\ 
& = & R_{p_1, p_2 \cdots p_n} \, .  
\eea
\end{proof} 

\begin{prop2} 
Let ${\cal A}$ be a Hopf algebra and ${\cal R}$ is an element of 
${\cal A} \otimes {\cal A}$ such that 
${\cal R} \Delta(x) = \tau \circ \Delta(x) {\cal R}$ for all 
$x \in {\cal A}$. Suppose that ${\cal R}$ is given by 
${\cal R} = \sum_{a} r^{(a,1)} \otimes r^{(a,2)}$, where 
 $r^{(a,1)}$,   $r^{(a,2)} \in {\cal A}$. We define $R_{j,k}$ by 
\be 
R_{j, k} = \sum_{a} id_1 \otimes \cdots \otimes  r_j^{(a,1)} 
\otimes \cdots \otimes r_k^{(a,2)} \otimes \cdots \otimes id_n 
\quad \in {\cal A}^{\otimes n} 
\, .  
\ee
If $R_{j,k}$ satisfy the inversion relations and the Yang-Baxter equations: 
\bea 
R_{12}R_{21} & = & id  \, , \non \\
R_{12}R_{13}R_{23} & = & R_{23} R_{13} R_{12} \, ,  
\eea
then we have the following symmetry relations: 
\be 
R_{p_q}^{\sigma} \Delta^{(n-1)}(x) = \sigma \circ \Delta^{(n-1)}(x) 
R_{p_q}^{\sigma} \quad x \in {\cal A} \, . 
\label{eq:symA}
\ee
Here $p_q$ denotes $p_q=(1, 2, \ldots, n)$, 
a sequence of $n$ integers, and we have defined $R_{p_q}^{\sigma}$ 
as in definition \ref{df:Rp}. 
\label{prop:A}
\end{prop2}
\begin{proof}
Recall that any given permutation $\sigma$ is expressed as a product of 
generators $s_j=(j \, j+1)$. We show symmetry relations (\ref{eq:symA}) 
by induction on the number of generators $s_j$
whose product gives  permutation $\sigma$. 
Suppose that (\ref{eq:symA}) holds for 
$\sigma=\sigma_A$. We now show that (\ref{eq:symA}) holds for 
$\sigma = \sigma_A s_j$. Making use of eq. (\ref{eq:Rp}) 
in definition \ref{df:Rp} we have 
\be 
R_{p_q}^{\sigma_A s_j} \Delta^{(n-1)}(x) 
= R_{\sigma_A (p_q)}^{s_j} \, R_{p_q}^{\sigma_A} \Delta^{(n-1)}(x) 
= R_{\sigma_A (p_q)}^{s_j} \, \sigma_A \circ \Delta^{(n-1)}(x) \, 
R_{p_q}^{\sigma_A}
\ee
Expressing $\Delta^{(n-1)}(x)$ as  
$\Delta^{(n-1)}(x) = \sum x^{(1)} \otimes x^{(2)} \otimes 
\cdots \otimes x^{(n)}$,  
we have 
\be 
\sigma_A \circ \Delta^{(n-1)}(x) = 
\sum x^{(\sigma_A^{-1} 1)} \otimes x^{(\sigma_A^{-1} 2)} \otimes 
\cdots \otimes x^{(\sigma_A^{-1} n)}
\ee
It now follows from definition \ref{df:Rp} that we have 
$R_{\sigma_A(p_q)}^{s_j} 
=  R_{\sigma_A j, \, \sigma_A (j+1)}$. 
Let us denote $\sigma_A j$ and $\sigma_A (j+1)$ by $a$ and $b$, 
respectively. Then we have $\sigma_A^{-1} a = j$ and 
 $\sigma_A^{-1} b = j+1$.  Assuming $a < b$ we have 
\bea 
& & R_{\sigma_A j, \, \sigma_A (j+1)} \, \, 
\sigma_A \circ \Delta^{(n-1)}(x)  
\non \\
& = &R_{a, b} \, 
\sum x_1^{(\sigma_A^{-1} 1)} \otimes x_2^{(\sigma_A^{-1} 2)} \otimes 
\cdots \otimes x^{(\sigma_A^{-1} a)}_a \otimes 
\cdots \otimes x^{(\sigma_A^{-1} b)}_b \otimes 
\cdots \otimes x^{(\sigma_A^{-1} n)}_n \non \\ 
& = & \sum x^{(\sigma_A^{-1} 1)}_1\otimes x^{(\sigma_A^{-1} 2)}_2 \otimes 
\cdots \otimes x^{(\sigma_A^{-1} b)}_a \otimes 
\cdots \otimes x^{(\sigma_A^{-1} a)}_b \otimes 
\cdots \otimes x^{(\sigma_A^{-1} n)}_n \, \, 
R_{a, b} \non \\ 
& = & (\sigma_A s_j) \circ \Delta^{(n-1)}(x) \, \, 
R_{\sigma_A j, \, \sigma_A (j+1)}
\eea
Here we note that  
\be 
R_{a, b} \,  x^{(\sigma_A^{-1} a)}_a \otimes x^{(\sigma_A^{-1} b)}_b 
= x^{(\sigma_A^{-1} b)}_a \otimes x^{(\sigma_A^{-1} a)}_b \, R_{a, b} 
\ee
since we have 
$$
x^{(\sigma_A^{-1} a)} \otimes x^{(\sigma_A^{-1} b)}
= x^{(j)} \otimes x^{(j+1)} = \Delta({\bar x}^{(j)}) 
$$
where ${\bar x}^{(j)}$ are defined by 
$$
\Delta^{(n-2)}(x) = \sum {\bar x}^{(1)} \otimes \cdots \otimes 
{\bar x}^{(n-1)}  
$$
We therefore have 
\bea
& & R_{\sigma_A (p_q)}^{s_j} \, \sigma_A \circ \Delta^{(n-1)}(x) \, 
R_{p_q}^{\sigma_A} 
= (\sigma_A s_j) \circ \Delta^{(n-1)}(x) \, 
R_{\sigma_A (p_q)}^{s_j} \, R_{p_q}^{\sigma_A} \non \\ 
& & = (\sigma_A s_j) \circ \Delta^{(n-1)}(x) \, 
R_{p_q}^{\sigma_A s_j} 
\eea
\end{proof}

Symmetry relations similar to (\ref{eq:symA}) hold 
for products of monodromy matrices. 
Let us consider $m$ auxiliary spaces with suffices 
$a(1), a(2), \ldots , a(m)$, respectively. 
We denote the monodromy matrix 
$T_{a(j)}(\lambda_{a(j)}; \xi_1, \ldots, \xi_L)$ 
 simply by $T_{a(j)}$. We denote by $\Delta^{(m-1)}(T)$ 
  the following operator: 
\be 
\Delta^{(m-1)}(T) = T_{a(1)} T_{a(2)} \cdots T_{a(m)}
\ee 
Let $\sigma$ an element of ${\cal S}_m$. 
We define the action of $\sigma$ on $\Delta^{(m-1)}(T)$ 
 by the following:   
\be 
\sigma \circ \Delta^{(m-1)}(T) = 
T_{a({\bar \sigma} 1)} T_{a({\bar \sigma} 2)} \cdots 
T_{a({\bar \sigma} m)}
\ee
Here ${\bar \sigma}$ denotes the inverse of $\sigma$: 
${\bar \sigma}= \sigma^{-1}$. Then we have the following. 

\begin{prop2}  Let $p_q$ be $p_q=(1, 2, \ldots, m)$. 
For any $\sigma \in {\cal S}_m$ we have 
\be 
R^{\sigma}_{p_q} \Delta^{(m-1)}(T) = \sigma \circ 
\Delta^{(m-1)}(T) R^{\sigma}_{p_q}  
\ee
\end{prop2}

%
%
%
 \setcounter{equation}{0} 
 \renewcommand{\theequation}{B.\arabic{equation}}
\setcounter{df}{0} 

\section{Symmetric-group action on products of $R$-matrices and 
 the $F$-basis}

\begin{lem2}(\cite{MS2000})
(i) Cocycle conditions hold for $n \le L$. 
\be 
R_{2 \cdots n-1, \, n} R_{1, \, 2 \cdots n} = 
R_{1, \, 2 \cdots n-1} R_{1 2 \cdots n-1, \, n} 
\label{eq:cocyc-R}
\ee
(ii) The unitarity relations hold for $n \le L$. 
\be 
R_{1, \, 2 \cdots n} R_{2 3 \cdots n, \, 1} = I^{\otimes L}  
\ee
\end{lem2}
\begin{proof} 
Cocycle conditions (\ref{eq:cocyc-R}) are derived 
from the Yang-Baxter equations. 
\end{proof}

Let us denote by the symbol $(p_0, p_1, p_2, \ldots, p_n)$ 
a sequence of $n+1$ integers, $0, 1, 2, \ldots, n$, 
and we express it as $(p_0, p)$ where 
$p$ denotes the subsequence $(p_1, p_2, \ldots, p_n)$.  

%

\begin{lem2} Let $p$ be a sequence of $n$ integers, $1, 2, \ldots, n$.  
For $s_j = (j \, j+1) \in {\cal S}_n$ we have 
\bea 
R_{p}^{s_j} R_{0, \, p} & = & R_{0, \, s_j(p)} R_{p}^{s_j} 
\non \\
R_{p}^{s_j} R_{p, \, 0} & = & R_{s_j(p), \, 0} R_{p}^{s_j} 
\label{eq:sjR}
\eea
\end{lem2} 
\begin{proof} 
We first note $R_p^{s_j}= R_{p_j, p_{j+1}}$. We have 
\bea 
R^{s_j}_p R_{0, p} & = & R_{p_j, p_{j+1}} \, 
R_{0, p_n} \cdots R_{0, p_1} \non \\  
&= & R_{0, p_n} \cdots R_{0, p_{j+2}} 
\cdot R_{p_j, p_{j+1}} R_{0, p_{j+1}} R_{0, p_{j}} \cdot 
R_{0, p_{j+2}} \cdots R_{0, p_1} 
\eea
Applying the Yang-Baxter equations 
$R_{p_j, p_{j+1}} R_{0, p_{j+1}} R_{0, p_{j}} 
= R_{0, p_{j}} R_{0, p_{j+1}}  R_{p_j, p_{j+1}}$ we  have   
\bea 
R^{s_j}_p R_{0, p} & = & 
R_{0, p_n} \cdots R_{0, p_{j+2}} 
\cdot  R_{0, p_{j}} R_{0, p_{j+1}} \cdot 
R_{0, p_{j-1}} \cdots R_{0, p_1} \cdot R_{p_j, p_{j+1}} \non \\ 
& = & R_{0, s_j(p) } R_{p_j, p_{j+1}} \, . 
\eea
\end{proof}

\begin{prop2} Let $p$ be a sequence of $n$ integers, $1, 2, \ldots, n$, 
and $\sigma$ a permutation of the $n$ integers.  We have 
\bea 
R_{p}^{\sigma} R_{0, \, p} & = & R_{0, \, \sigma(p)} R_{p}^{\sigma} 
\non \\
R_{p}^{\sigma} R_{p, \, 0} & = & R_{\sigma(p), \, 0} R_{p}^{\sigma} 
\eea
\end{prop2} 
\begin{proof} 
Expressing permutation $\sigma$ as a product of generators $s_j$, 
and applying (\ref{eq:sjR}) many times, we can show the symmetry relations.  
\end{proof}

We define the action of $\sigma$ on the $F$-basis as follows.   
\be 
F_{\sigma(p)} = F_{p_{\sigma(1)} p_{\sigma(2)} \cdots p_{\sigma(n)}} 
\ee
\begin{prop2}
Let $p$ be a sequence of integers, $1, 2, \ldots, n$. 
For $\sigma \in {\cal S}_n$ we have  
\be 
R^{\sigma}_{p} F_{0, p} = F_{0, \sigma(p)} R_{p}^{\sigma} 
\ee
We also have 
\be 
F_p = F_{\sigma(p)} R_{p}^{\sigma} \label{eq:inv-F}
\ee
\end{prop2} 
\begin{proof} 
Expressing the $F$-basis in terms of the $R$-matrices we have 
\bea 
R^{\sigma}_{p} F_{0, p} 
& = & R^{\sigma}_{p} \left( e_0^{11} + e_0^{22} R^{\sigma}_{0,p}  \right) 
\non \\
& = &  e_0^{11} R^{\sigma}_{p} 
+ e_0^{22} R^{\sigma}_{p} R^{\sigma}_{0,p} \non \\  
& = &  e_0^{11} R^{\sigma}_{p} 
+ e_0^{22}  R^{\sigma}_{0,\sigma(p)} R^{\sigma}_{p} \non \\  
& = &  \left( e_0^{11} 
+ e_0^{22}  R^{\sigma}_{0, \sigma(p)} \right)  R^{\sigma}_{p} \non \\  
& = & F_{0, \sigma(p)} R^{\sigma}_{p}
\eea
We first show (\ref{eq:inv-F}) with $\sigma=s_j$ for 
$j=1, 2, \ldots, n-1$, and then 
we derive it for all permutations $\sigma$.   
\end{proof} 

\begin{lem2} The propagator 
$F^{-1}_{i \cdots L \, 1 \cdots i-1} F_{1 \cdots L}$ is given by 
the following: 
\be 
F^{-1}_{i \cdots L \, 1 \cdots i-1} F_{1 \cdots L} 
= \prod_{\alpha=1}^{i-1} \left( A_{1 \cdots L}(\xi_{\alpha}) +  
D_{1 \cdots L}(\xi_{\alpha}) \right)  
\ee
\end{lem2}
\begin{proof} 
Let $\sigma_c$ be a cyclic permutation: $\sigma_c=( 1 2 \cdots L)$, 
and $p_q$ the sequence $p_q=(1, 2, \ldots, n)$.   
We have 
\be 
F_{i \cdots L \, 1 \cdots i-1}  =  F_{\sigma_c^{i-1}(p_q)} 
= F_{1 \cdots L} R_{p_q}^{\sigma_c^{i-1}}  
\ee
and hence we have  
\bea 
 F^{-1}_{i \cdots L \, 1 \cdots i-1} F_{1 \cdots L} & = & 
\left(F_{1 \cdots L} 
R_{p_q}^{\sigma_c^{i-1}} \right)^{-1} 
F_{1 \cdots L}  \non \\ 
& = & R_{p_q}^{\sigma_c^{i-1}} \, 
F_{1 \cdots L}^{-1} F_{1 \cdots L} \non \\ 
&= & R_{p_q}^{\sigma_c^{i-1}} 
\eea
We thus obtain  
\bea 
R_{p_q}^{\sigma_c^{i-1}} & = & R_{\sigma(p_q)}^{\sigma_c^{i-2}} 
R_{p_q}^{\sigma_c}  \non \\
& = & R_{\sigma_c^{i-2} (p_q)}^{\sigma_c} \cdots R_{p_q}^{\sigma_c} \non \\ 
& = & R_{i-1 \cdots L \, 1 \cdots i-2}^{\sigma_c} 
\cdots R_{2 \cdots L 1}^{\sigma_c} R_{1 2 \cdots L}^{\sigma_c} \non \\
& = & R_{i-1, i \cdots L \, 1 \cdots i-2} 
\cdots R_{2, 3 \cdots L 1} R_{1, 2 \cdots L}  \non \\
& = & \prod_{\alpha=1}^{i-1} \left( A_{1 \cdots L}(\xi_{\alpha}) +  
D_{1 \cdots L}(\xi_{\alpha}) \right)  \non 
\eea
\end{proof} 


%
 \setcounter{equation}{0} 
 \renewcommand{\theequation}{C.\arabic{equation}}
\setcounter{df}{0} 

\section{Formulas of the $q$-analogue }
\begin{lem2} For two integers $\ell$ and $n$ satisfying 
$0 \le n \le \ell$ we have  
\be
\sum_{1 \le i_1 < \cdots < i_n \le \ell} q^{2 i_1 + \cdots + 2 i_n} 
= 
q^{n (\ell+1)} \, 
\left[
\begin{array}{c} 
\ell \\ 
n 
\end{array} 
\right]_{q} \label{eq:q-sum}
\ee
\end{lem2} 
\begin{proof} 
We can show by induction on $\ell$ the $q$-binomial expansion as follows.  
\be 
\prod_{k=0}^{\ell-1} \left( 1 - z q^{2k} \right)
= \sum_{n=0}^{\ell} (-z)^n q^{n(\ell-1)} 
\left[ 
\begin{array}{c} 
\ell \\ 
n 
\end{array} 
\right]_q \label{eq:q-binomial}
\ee
It is now easy to show the following: 
\be 
\prod_{k=1}^{\ell} \left( 1 - z q^{2k} \right)
= \sum_{n=0}^{\ell} (-z)^n  
\sum_{1 \le i_1 < \cdots < i_n \le \ell}  q^{2 i_1 + \cdots + 2 i_n} 
\label{eq:exp}
\ee
Comparing (\ref{eq:exp}) with (\ref{eq:q-binomial})
we obtain formula (\ref{eq:q-sum}). 
\end{proof} 

\begin{lem2} The spin-$\ell$ matrix representations  $X^{\pm (\ell +)}$ 
of the generators $X^{\pm}$ of $U_q(sl_2)$ 
are related to $\sigma_{\ell}^{+}$ and $\sigma_{1}^{-}$, respectively,   
as follows. 
\bea 
P^{(\ell)}_{1 \cdots \ell} \, \sigma_{1}^{-} \, P^{(\ell)}_{1 \cdots \ell} 
& = & {\frac 1 {[\ell]_q}} \, X^{- (\ell+)} \label{eq:X+A} \\ 
P^{(\ell)}_{1 \cdots \ell} \, \sigma_{\ell}^{+} \, P^{(\ell)}_{1 \cdots \ell} 
& = & {\frac 1 {[\ell]_q}} \, X^{+(\ell+)}  \label{eq:X-A}
\eea
\end{lem2} 
\begin{proof} 
Expressing the projection operator $P^{(\ell)}_{1 \cdots \ell}$ as 
$$
P^{(\ell)}_{1 \cdots \ell} = \sum_{n=0}^{\ell} || \ell, n \rangle 
\langle \ell, n || 
$$
we have 
\be 
P^{(\ell)}_{1 \cdots \ell} \sigma_1^{-} P^{(\ell)}_{1 \cdots \ell}
= \sum_{n=0}^{\ell-1} || \ell, n+1 \rangle 
\langle \ell, n ||  \quad \langle \ell, n+1 || \sigma_1^{-} || \ell, n \rangle 
\ee
Making use of (\ref{eq:q-sum}) we have 
\be
\langle \ell, n+1 || \sigma_1^{-} || \ell, n \rangle = 
\frac {[n+1]}{[\ell]} , 
\ee
and then we obtain (\ref{eq:X+A}). Similarly, we can show 
(\ref{eq:X-A}). 
\end{proof}

%
 \setcounter{equation}{0} 
 \renewcommand{\theequation}{D.\arabic{equation}}
\setcounter{df}{0} 

\section{Formulas of the $F$-basis useful for the diagonalization. }

Let us review some points of the diagonalization process 
of the $A$ and $D$ operators \cite{MS2000}. 

\begin{lem2}
Operators $A$ and $D$ are upper- and lower-triangular matrices,  
respectively. Moreover, 
the eigenvalues of operators $A$ and $D$ are given by 
\bea 
{\rm diag} \left( D_{1 2 \cdots n}(\lambda_0) \right)
& = & 
\bigotimes_{i=1}^{n} 
\left( 
\begin{array}{cc}
b_{0 i} & 0 \\
0 & 1 
\end{array}
\right)_{[i]} \, , 
\label{eq:eigenD} \\ 
{\rm diag} \left( A_{1 2 \cdots n}(\lambda_0) \right)
& = & 
\bigotimes_{i=1}^{n} 
\left( 
\begin{array}{cc}
1 & 0 \\ 
0 & b_{0 i} 
\end{array}
\right)_{[i]} \, ,  
\label{eq:eigenA} 
\eea
where  $b_{0i}=b(\lambda_0 - \xi_i)$. 
\label{lem:eigenAD}
\end{lem2}
\begin{proof}
We can show it by induction on $n$. 
Noting $D_{1 2\cdots n } = C_n B_{1 \cdots n-1} + D_n D_{1 \cdots n-1}$, 
we show 
${\rm diag}(D_{1 \cdots n}= {\rm diag}(D_{1 \cdots n-1}(\lambda_0)  
\otimes {\rm diag}(b_{0n}, 1)_{[n]}$.  
We can show (\ref{eq:eigenA}) similarly.  
\end{proof} 

\begin{lem2} The partial $F_{0, 1 \cdots n}$ and ${\bar F}_{0, 1 \cdots n}$ 
are expressed as follows. 
\bea {F}_{0, 1 \cdots n} & = & \left( 
\begin{array}{cc} 
 I_{1 \cdots n} & 0  \\
C_{1 \cdots n}(\lambda_0) & D_{1 \cdots n}(\lambda_0) \end{array}
\right)_{[0]} \, , \\
{\bar F}_{0, 1 \cdots n} & = & \left( 
\begin{array}{cc} 
A_{1 \cdots n}(\lambda_0) & B_{1 \cdots n}(\lambda_0) \\
0 & I_{1 \cdots n}  \end{array}
\right)_{[0]} \, . 
\eea 
\end{lem2}
\begin{proof}
It is clear from definition \ref{df:def-F} of the $F$-basis. 
\end{proof}

\par \noindent 
{\bf Proof of propositions \ref{prop:diagDA} and \ref{prop:diagADdagger}}
\par \noindent 

For an illustration, we 
now derive the diagonalized form of the $D$ operator.  
 From $R_{0, 1 \cdots n} = R_{0, 2 \cdots n} R_{0, 1}$ we have    
\bea 
D_{1 2 \cdots n } & = & C_{2 \cdots n} B_1 + D_{2 \cdots n} D_1 
\non \\ 
& = & 
\left( 
\begin{array}{cc} 
 b_{01} D_{2 \cdots n}(\lambda_0) & 0  \\
 c_{01} C_{2 \cdots n}(\lambda_0) & D_{2 \cdots n}(\lambda_0) 
\end{array}
\right)_{[0]} \, . 
\eea
We thus calculate 
\bea 
F_{1 \cdots n} D_{1 \cdots n}(\lambda_0) 
& =& F_{2 \cdots n} \, F_{1, 2 \cdots n} \, 
D_{1 \cdots n}(\lambda_0) \non \\ 
& = & 
F_{2 \cdots n} \,  
\left( 
\begin{array}{cc} 
 I_{2 \cdots n} & 0  \\
C_{2 \cdots n}(\lambda_0) & D_{2 \cdots n}(\lambda_0) 
\end{array}
\right)_{[0]} 
\left( 
\begin{array}{cc} 
 b_{01} D_{2 \cdots n}(\lambda_0) & 0  \\
 c_{01} C_{2 \cdots n}(\lambda_0) & D_{2 \cdots n}(\lambda_0) 
\end{array}
\right)_{[0]} \non \\ 
& = & {\rm diag}(b_{01}, 1)_{[1]} 
\widetilde{D}_{2 \cdots n}(\lambda_0) F_{1 \cdots n}    
\eea 
Therefore, by induction we have the diagonalized form of 
operator $D$.     
Similarly, we can diagonalize $A$, $A^{\dagger}$ ]
and $D^{\dagger}$.

\begin{lem2} The diagonalized form of $F_{0, 1 \cdots n} 
{\bar F}^{\dagger}_{0, 1 \cdots n}$ is given by the following:   
\be 
F_{1 \cdots n} \left(F_{0, 1 \cdots n} 
{\bar F}^{\dagger}_{0, 1 \cdots n}  \right) 
F_{1 \cdots n}^{-1} = {\hat \delta}^{-1}_{0, 1 \cdots n} 
\label{eq:FFFF}
\ee
\end{lem2}
\begin{proof}
Making use of (\ref{eq:Rdag-inverse}) we show 
\bea  
& & F_{0, 1 \cdots n} {\bar F}_{0, 1 \cdots n}^{\dagger}  =  
F_{0, 1 \cdots n} {\cal C}_{0 1 \cdots n} 
{F}_{0, 1 \cdots n}^{\dagger} {\cal C}_{0 1 \cdots n} \non \\
& = &
\left( 
\begin{array}{cc} 
1 & 0 \\ 
C_{1 \cdots n} & D_{1 \cdots n}
\end{array}
\right)_{[0]}
\left( 
\begin{array}{cc} 
0 & {\cal C}_{1 \cdots n} \\ 
{\cal C}_{1 \cdots n} & 0
\end{array}
\right)_{[0]}
\left( 
\begin{array}{cc} 
1 & C^{\dagger}_{1 \cdots n} \\ 
0 & D_{1 \cdots n}
\end{array}
\right)_{[0]}
\left( 
\begin{array}{cc} 
0 & {\cal C}_{1 \cdots n} \\ 
{\cal C}_{1 \cdots n} & 0
\end{array}
\right)_{[0]}
\non \\
& = & 
\left( 
\begin{array}{cc} 
A^{\dagger} & 0 \\ 
0 & D_{1 \cdots n}
\end{array}
\right)_{[0]} \, . \non 
\eea
Calculating 
$F_{1 2 \cdots n} \cdot 
F_{0, 1 \cdots n} {\bar F}_{0, 1 \cdots n}^{\dagger} \cdot
 F^{-1}_{1 2 \cdots n}$ we have (\ref{eq:FFFF}). 
\end{proof} 

\begin{cor2} The inverse of the total $F$ is given as follows. 
\be 
F^{-1}_{1 \cdots n} = {\bar F}_{1 \cdots n}^{\dagger} 
{\hat \delta}_{1 \cdots n} 
\label{eq:Fd}
\ee
\end{cor2} 
\begin{proof} 
We show it by induction on $n$. 
Let us assume (\ref{eq:Fd}) for the case of $n$. 
We have 
$$
{\bar F}^{\dagger}_{0 1 \cdots n} \widehat{\delta}_{0 1 \cdots n} 
= {\bar F}^{\dagger}_{0 1 \cdots n} \left( 
{\bar F}^{\dagger}_{1 \cdots n}
\widehat{\delta}_{1 \cdots n} \right)
\widehat{\delta}_{0, 1 \cdots n}
= {\bar F}^{\dagger}_{0 1 \cdots n} 
{F}^{-1}_{1 \cdots n}
\widehat{\delta}_{0, 1 \cdots n}
$$
From (\ref{eq:FFFF}) we have 
${\bar F}^{\dagger}_{0 1 \cdots n} 
{F}^{-1}_{1 \cdots n}
\widehat{\delta}_{0, 1 \cdots n}= F_{0 1 \cdots n}^{-1}$, 
which corresponds to (\ref{eq:Fd}) for the case of $n+1$. 
\end{proof}

\begin{lem2} The dagger of total $F$ is given by the charge conjugation 
of transposed total $F$. 
\be 
F_{1 \cdots n}^{\dagger} = {\cal C}_{1 \cdots n}  
F^{t_1 \cdots t_n}_{n  \cdots 2 1} {\cal C}_{1 \cdots n}  
\label{eq:F-dagg}
\ee
Or equivalently we have 
\be 
{\bar F}_{1 \cdots n}^{\dagger} = F^{t_1 \cdots t_n}_{1 \cdots n}
\label{eq:F-bar-dag}
\ee
\end{lem2}
\begin{proof}
We show it by induction on $n$. Let us assume 
(\ref{eq:F-dagg}) for the case of $n-1$. 
We first show 
$$
{\bar F}_{1, 2 \cdots n}^{\dagger} 
= {\cal C}_{1 2 \cdots n} 
\left(e_1^{11} + e_1^{22}R_{1, 2 \cdots n} \right)^{\dagger}  
{\cal C}_{1 2 \cdots n} = F_{n \cdots 2, 1}^{t_1 \cdots t_n} \, . 
$$
Making use of the induction assumption we show 
\bea 
F_{1 2 \cdots n}^{\dagger} & = &
(F_{2 \cdots n} F_{1, 2 \cdots n})^{\dagger} 
= F_{1, 2 \cdots n}^{\dagger} F_{2 \cdots n}^{\dagger} \non \\ 
& = & {\cal C}_{1 \cdots n} 
F_{n \cdots 2, 1}^{t_1 \cdots t_n}{\cal C}_{1 \cdots n} \, 
{\cal C}_{1 \cdots n} F_{n \cdots 2}^{t_2 \cdots t_n} {\cal C}_{1 \cdots n} 
\non \\ 
& = &  {\cal C}_{1 \cdots n} F_{n \cdots 2 1 }^{t_1 \cdots t_n} 
 {\cal C}_{1 \cdots n} \, . \non 
\eea
\end{proof}

\begin{cor2} The inverse of total $F$ is given as follows. 
\be 
F_{1 \cdots n} F_{n \cdots 2 1}^{t_1 \cdots t_n} 
= \widehat{\delta}^{-1}_{1 \cdots n}
\label{eq:FF}
\ee
Equivalently, we have
\be 
F_{1 \cdots n} ^{-1} = 
F_{n \cdots 2 1}^{t_1 \cdots t_n} \, {\widehat \delta}_{1 \cdots n}
\label{eq:F^{-1}}
\ee
\end{cor2}
\begin{proof} 
It follows from (\ref{eq:Fd}) that 
${\bar F}^{\dagger} = F_{n \cdots 1}^{t_1 \cdots t_n} 
\widehat{\delta}_{1 2 \cdots n}$.  
From (\ref{eq:F-bar-dag}) we have (\ref{eq:FF}). 
\end{proof}


 \setcounter{equation}{0} 
 \renewcommand{\theequation}{E.\arabic{equation}}
\setcounter{df}{0} 
\section{Lemmas for Diagonalizing the $C$ operator }

\begin{lem2} 
Let $X^{+}$ be the generator of the quantum group $U_q(sl_2)$ and 
$X_i^{+}$ the matrix representation of $X^{+}$ 
acting on the $i$th site. We have 
\be 
{\widetilde \Delta}_{1 \cdots n}(X^{+})  
= \left(X_n^{+} + e_n^{11} 
 {\widetilde {A^{+}} }^{\dagger}_{1 \cdots n-1}(\xi_n)
{\widetilde \Delta}_{1 \cdots n-1}(X^{+}) 
 + e_n^{22} 
{\widetilde \Delta}_{1 \cdots n-1}(X^{+}) 
 {\widetilde {D^{+}}}_{1 \cdots n-1}(\xi_n)
\right) \widehat{\delta}^{1 2 \cdots n}_n \, .  
\ee
\label{lem:C1}
\end{lem2}

\begin{lem2} 
The diagonalized form of $\Delta^{(n-1)}(X^{+})$ is expressed in terms 
of local operators $X_i^{+}$ as follows. 
\be 
{\widetilde \Delta}_{1 \cdots n}(X^{+}) = \sum_{i=1}^{n} 
X_i^{+} \widehat{\delta}^{1 \cdots n}_i 
\ee
\label{lem:C2}
\end{lem2}

\begin{lem2} Operator $C^{+}$ in the $F$-basis is expressed 
in terms of operator $D^{+}$ and $X_i^{+}$ as follows. 
\be 
{\widetilde {C^{+}}}_{1 \cdots n}(\lambda) = \sum_{i=1}^{n}  
({\widetilde {D^{+}}}_{1 \cdots i-1, i+1 \cdots n}(\lambda)
 - q^{-1} {\widetilde {D^{+}} }_{1 \cdots n}(\lambda) ) \, X_i^{+} 
\widehat{\delta}_i^{1 \cdots n} \, .   
\ee
\label{lem:C3} 
\end{lem2}

\section*{Acknowledgment} 
The authors would like to thank Prof. S. Miyashita for 
encouragement and keen interest in this work. 
One of the authors (C.M.) would like to thank Dr. K. Shigechi 
for his introduction to the mathematical physics of integrable models.  
This work is partially supported by 
Grant-in-Aid for Scientific Research (C) No. 20540365.


\end{document}